\title{No arbitrage SVI}
\author[1]{Claude Martini\thanks{cmartini@zeliade.com}}
\author[1,2]{Arianna Mingone\thanks{arianna.mingone@gmail.com}}
\affil[1]{Zeliade Systems, 56 rue Jean-Jacques Rousseau, Paris, France}
\affil[2]{Universit\`{a} degli Studi di Udine, Udine, Italy}
\newtheorem{theorem}{Theorem}[section]
\newtheorem{corollary}[theorem]{Corollary}
\newtheorem{lemma}[theorem]{Lemma}
\newtheorem{proposition}[theorem]{Proposition}
\newtheorem{remark}[theorem]{Remark}
\begin{document}
\maketitle
	
\begin{abstract}
	We fully characterize the absence of Butterfly arbitrage in the SVI
formula for implied total variance proposed by Gatheral in 2004. The
main ingredient is an intermediate characterization of the necessary
condition for no arbitrage obtained for any model by Fukasawa in 2012
that the inverse functions of the $-d_1$ and $-d_2$ of the Black-Scholes
formula, viewed as functions of the log-forward moneyness, should be
increasing. A natural rescaling of the SVI parameters and a meticulous
analysis of the Durrleman condition allow then to obtain simple range
conditions on the parameters. This leads to a straightforward
implementation of a least-squares calibration algorithm on the no
arbitrage domain, which yields an excellent fit on the market data we
used for our tests, with the guarantee to yield smiles with no Butterfly
arbitrage.
\end{abstract}

\section{Introduction}\label{introduction}

	Jim Gatheral proposed in 2004 the following \emph{Stochastic Volatility
Inspired} model for the implied total variance (meaning: the square of the
implied volatility times the time-to-maturity):
\[SVI(k) = a+b(\rho(k-m)+\sqrt{(k-m)^2+\sigma^2})\]
where $k$ is the log-forward moneyness, and $(a,b,\rho,m,\sigma)$
parameters.

	This formula quickly became the benchmark at least on Equity markets,
due to its ability to produce very good fits. Fabien Le Floch (head of
research at Calypso) has a blog article on a situation where SVI
\emph{does not fit}, which is a good indicator of how rare such a
situation is in practice. The practitioner literature on SVI and its
variants is plentiful (\cite{damghani2013arbitraging},
\cite{itkin2014one}, \cite{nagy2019volatility},
\cite{bossu2014advanced}, \cite{itkin2020fitting}), and SVI is now part
of every reference textbook on volatility models
(\cite{gatheral2011volatility}, \cite{gulisashvili2012analytically}).

In 2009, the whitepaper on the \emph{Quasi-explicit calibration of
Gatheral's SVI} (\cite{de2009quasi}, also part of Stefano De Marco PHD
thesis) proposed a simple trick to disambiguate the calibration of SVI,
and became itself a reference calibration algorithm.

	SVI has been extended by Gatheral and Jacquier in a seminal paper to
\emph{surfaces} in \cite{gatheral2014arbitrage}, which provides the
first explicit family of implied volatility surfaces with explicit and
tractable no arbitrage conditions, both for the Butterfly and Calendar
Spread arbitrages. SSVI has been extended further in
\cite{guo2016generalized} to other smile shapes, and in
\cite{hendriks2017extended} to correlation parameters functions of the
time-to-maturity. A quick and robust calibration algorithm for the
latter is provided in \cite{corbetta2019robust}.

SSVI smiles (at a fixed time to maturity) are a subset of SVI smiles with 3 parameters instead of 5, and so, for them, an explicit sufficient condition for no (Butterfly) arbitrage is available (cf \cite{gatheral2014arbitrage}). \cite{klassen2016necessary} discusses also partial necessary and sufficient conditions for SSVI smiles.

	A remarkable fact is that, despite the simplicity of the formula, no
Butterfly arbitrage conditions for a SVI smile remained up to now too intricate. So
for instance in the algorithm \cite{de2009quasi} there is no guarantee
that the calibrated parameter will be arbitrage-free. An interesting
practical approach is provided in \cite{ferhati2020robust}, where the no
arbitrage constraints are expressed as a discretized set of Durrleman
conditions and encoded as non-linear constraints in the optimizer;
\emph{stricto sensu} there also, there is no guarantee though that the
calibrated parameter will be arbitrage-free. In this paper, we solve
this long-standing issue.

We start in \cref{the-durrleman-condition-and-no-arbitrage-for-svi} with a precise discussion of the meaning of no
Butterfly arbitrage, which is based on \cite{tehranchi2020black} and on \cite{roper2010arbitrage}
for the corresponding statements in terms of volatility.

We proceed in \cref{gux5f1-and-the-fukasawa-necessary-condition-for-no-butterfly-arbitrage} with a slight generalization of the beautiful result by Fukasawa in
\cite{fukasawa2012normalizing}, which states that the inverse functions of the $-d_1$
and $-d_2$ coefficients of the Black-Scholes formula should be
increasing under no Butterfly arbitrage. We need this generalization
to handle all the configurations of SVI parameters. In this section we also
clarify when and how Call and Put SVI option prices, given by the Black-Scholes formula
with the SVI formula as argument, can be represented as expectations, using 
the results in \cite{tehranchi2020black}.

The main ingredient (\cref{normalizing-svi}) is then to use a natural rescaling of SVI: we work
with the parameters $\alpha,\mu$ where $a=\sigma \alpha$ and
$m = \sigma \mu$, and the dummy variable $l=\frac{k-m}{\sigma}$ instead of
$k$.  It turns out that the Fukasawa conditions for
SVI in the new parameters do not involve $\sigma$. An interesting property of those conditions is that they provide the positivity of the 1st term of the
Durrleman condition; based on the fact that in our case the
complementing 2nd term reads $\frac{1}{2 \sigma} G_2(l)$ where $G_2$
does not depend on $\sigma$, ensuring the Durrleman condition yields a
simple condition on $\sigma$. In \cref{investigating-fukasawa-necessary-no-arbitrage-conditions}
we give the full characterization of the Fukasawa conditions for SVI, and \cref{no-arbitrage-domain-for-svi} finishes the work with the full characterization of no Butterfly arbitrage.

It should be noted that we do not impose the condition $a \geq 0$, as
is often done without justification; we work out the necessary and
sufficient conditions in the full domain of the SVI parameters.

At this stage, we have made fully explicit the domain of the SVI parameters
for which no Butterfly arbitrage holds. It is straightforward to code,
resorting to root finding numerical routines (like the Brent algorithm)
for the evaluation of the thresholds we characterized in our
computations. There are then 2 byproducts of this parametrization of the
domain of high practical interest:
\begin{itemize}
\item
  a quick check routine that a given SVI parameter lies in the domain or
  not, which disentangles between 4 possible situations of arbitrage;
\item
  a calibration algorithm, using any least-squares type objective
  function and a minimizer able to handle bounds.
\end{itemize}

We provide in the last section (\cref{calibration-experiments}) numerical tests performed on
data on index options purchased from the CBOE.

	SVI models a volatility smile, not a volatility surface, so without
ambiguity when we use the no arbitrage wording for SVI, we mean the
absence of Butterfly arbitrage.

	We thank Antoine Jacquier and Stefano De Marco for useful discussions
and comments.

	\subsection{Domain of SVI parameters}\label{domain-of-svi-parameters}
	The SVI model is defined when $a,m\in\mathbb{R}$, $b\geq0$, $|\rho| \leq 1$,
$\sigma \geq 0$. We recall that SVI is a convex function, with a minimum value given by
$a+b\sigma\sqrt{1-\rho^2}$ (possibly attained at infinity if
$|\rho| = 1$) and which goes to infinity as $k$ goes to $\pm\infty$
(for $|\rho|<1$). Since SVI models total variances, it is therefore
required that $a+b\sigma\sqrt{1-\rho^2}\geq0$.

If $\rho=-1$ the SVI smile decreases from $\infty$ to $a$, and if
$\rho=+1$ the SVI smile increases from $a$ to $\infty$.

	\section{The Durrleman condition and no arbitrage for
SVI}\label{the-durrleman-condition-and-no-arbitrage-for-svi}

	Let $S_0$ denote the underlying asset value of standard Call options
with a fixed maturity $t>0$. Without loss of generality we assume that
there is no interest rates nor dividend rates. In case of deterministic
interest $r$ and dividend rates $\delta$, all the statements in this
section still hold once $S_0$ is replaced by the Forward corresponding
to the option maturity $F_t = S_0 \exp{ \int_0^t (r_s-\delta_s) ds}$ and
working with the num\'{e}raire of the option maturity.

	\subsection{Axiom of no Butterfly
arbitrage}\label{axiom-of-no-butterfly-arbitrage}

	The condition of no Butterfly arbitrage is achieved when the Call price
function with respect to the strike is (we follow the very careful
treatment in \cite{tehranchi2020black}):
\begin{enumerate}
\def\labelenumi{\arabic{enumi}.}
\itemsep1pt\parskip0pt\parsep0pt
\item
  convex;
\item
  non-increasing;
\item
  with value in the range $[(S_0-K)^+, S_0]$.
\end{enumerate}

These properties assume only that there is a perfect market for the
underlying asset and for the Call options, with short-selling allowed, and
that there is no static buy-sell strategy involving the underlying asset and a
finite set of Call options with a Profit and Loss which is strictly
positive.

We recall in particular that the \emph{large moneyness behaviour}
stating that the Call price function should go to zero at $\infty$ is
an additional assumption, and does not strictly follow from the
no arbitrage axiom.

In the case of a Call price function specified through an implied
volatility: $C(K) = C_{BS}(k, \sqrt{w(k)})$ where $w$ is the implied total
variance $\sigma^2 t$ and $C_{BS}(k,a)$ is the Black-Scholes formula
expressed as a function of the log-forward moneyness
$k=\log{\frac{K}{S_0}}$ and the implied total volatility, the 3rd property
is automatically granted since the $C_{BS}$ function is increasing with
respect to its 2nd argument and since the range bounds correspond
to the limit when $a$ goes to $0$ and $\infty$.

Observe now that if the 3rd property is satisfied, then the 1st one
implies the 2nd one because an increasing convex function cannot be
bounded.

	\subsection{Smiles vanishing at some
point}\label{smiles-vanishing-at-some-point}

	Can a volatility smile reach $0$ at some (finite) point? Assume that it
is the case, so $w(k_m)=0$ at the log-forward moneyness $k_m$
corresponding to some strike $K_m$. Then it means that the Call price
with this strike is equal to its intrinsic value $(S_0-K_m)_+$. If $K_m$
lies on the right of $S_0$, the price is therefore $0$, and by the
property 2 above all the Call prices with $K>K_m$ will also be $0$. If
$K_m$ lies on the left of $S_0$, the option price is $S_0-K_m$; as
the option price with a strike $0$ is equal to $S_0=S_0-0$, the
convexity property implies that all the Call prices with $K< K_m$ are
smaller than $S_0-K$ which is the value of the chord between the
points $0$ and $K_m$. Since this value $S_0-K$ is also lower bound for
the Call prices, they are eventually equal to this value. So, in the
implied volatility space, this means that $w=0$ for $K \geq K_m$ in the
1st case, and $w=0$ for $K \leq K_m$ in the second case.

This means that no arbitrage implies that smiles reaching $0$ above
(respectively below) the At-The-Money (forward) point will vanish above
(respectively below) this point. In the case of SVI, smiles reach zero
at most at a single strike, and only if $a+b\sigma\sqrt{1-\rho^2}=0$ and
$|\rho|<1$, in which case they are strictly positive for
other strike values, and there is a Butterfly arbitrage. So we can
discard this case and assume $a+b\sigma\sqrt{1-\rho^2}>0$ when
$|\rho|<1$.

	\subsection{No Butterfly arbitrage criterion for
SVI}\label{no-butterfly-arbitrage-criterion-for-svi}

	At this stage we know that SVI smiles with no Butterfly arbitrage are
positive, and that the 3rd property above is automatically satisfied. So
there is no Butterfly arbitrage if and only if the 1st property holds.
Now for positive smiles, as recalled in \cite{gatheral2014arbitrage}
after Lemma 2.2, with $w(k)=SVI(k)$:
\begin{equation}\label{eqpK}
	\begin{split}
	p(K):={\frac{\partial^2 C_{BS}}{\partial K^2}}\Biggr|_{K=S_0 e^k} &= {\frac{\partial^2 C_{BS}(k, \sqrt{w(k)})}{\partial K^2}}\Biggr|_{K=S_0 e^k}\\
	&=\frac{g(k)}{S_0e^k\sqrt{2 \pi w(k)}}\exp{\biggl(-\frac{d_2(k,\sqrt{w(k)})^2}{2}\biggr)}
	\end{split}
\end{equation}
where $d_2$ is the standard coefficient of the Black-Scholes formula:
\[d_{1,2}(k,\sigma) = -\frac{k}{\sigma} \pm \frac{\sigma}{2}.\]
So convexity is equivalent to ask the function $g(k)$
(\cite{gatheral2014arbitrage}, equation 2.1)
\begin{equation}\label{eqgSVI}
g(k) := \biggl(1-\frac{kSVI'(k)}{2SVI(k)}\biggr)^2 - \frac{SVI'(k)^2}{4}\biggl(\frac{1}{SVI(k)}+\frac{1}{4}\biggr) + \frac{SVI''(k)}{2}
\end{equation}
to be non-negative, which is usually called the \emph{Durrleman condition}
(cf.~Theorem 2.9, condition $(IV3)$ of \cite{roper2010arbitrage}).

	Note that the first derivative of the Call function with respect to the
strike necessarily goes to zero as $K$ goes to $\infty$, and to a finite
limit between $-1$ and $0$ as $K$ goes to $0$, which means that the
total mass of $p$ is less than one, but not necessarily one, meaning
that there could be a non-zero mass at zero. It will sum to one if and
only if the limit is $-1$; in this case, $p$ can be interpreted as a
probability measure; the expectation of the underlying asset under this
measure will be strictly less than the underlying asset value, unless the
additional property that the Call price vanishes at infinity holds, in
which case it will be exactly the underlying asset value (cf.~Theorem 2.1.2 of
\cite{tehranchi2020black}).

	The above discussion can be translated in properties of the smile: we
know from Theorem 2.9 in \cite{roper2010arbitrage} that the large
moneyness behaviour is one-to-one with the fact that $d_1(k)$ goes to
$-\infty$ at infinity:
\[\lim_{k \to \infty} d_1(k,\sqrt{w(k)})=-\infty.\]

The fact that there is no mass at zero, or, equivalently, that the
derivative of the Call price with respect to the strike goes to $-1$
when the strike goes to zero, is equivalent to (cf.
\cite{fukasawa2012normalization}, Proposition 2.4):
\[\lim_{k \to - \infty} d_2(k,\sqrt{w(k)})=+\infty.\]

In the case of SVI, the 1st condition translates to $b(1+\rho)< 2$ and
the second one to $b(1-\rho)< 2$. In particular the following Lemma holds:

\begin{lemma}
	In SVI, the limit of $d_1(k,\sqrt{w(k)})$ for $k$ going to $\infty$, is
	\begin{itemize}
		\item $-\infty$ if $b(1+\rho)<2$;
		\item $0$ if $b(1+\rho)=2$;
		\item $\infty$ if $b(1+\rho)>2$.
	\end{itemize}	
\end{lemma}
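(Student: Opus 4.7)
The plan is to rewrite $d_1$ as a single fraction and then expand $w(k)=SVI(k)$ asymptotically as $k\to\infty$. Using the definitions,
\[
d_1(k,\sqrt{w(k)}) \;=\; -\frac{k}{\sqrt{w(k)}}+\frac{\sqrt{w(k)}}{2} \;=\; \frac{w(k)-2k}{2\sqrt{w(k)}},
\]
so the problem reduces to comparing the leading orders of the numerator and denominator. For the denominator I need a lower bound on $w(k)$; for the numerator I need its leading-order behaviour as $k\to+\infty$.

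The key input is the asymptotic expansion
\[
\sqrt{(k-m)^2+\sigma^2}=(k-m)+\frac{\sigma^2}{2(k-m)}+O(k^{-3}),
\qquad k\to+\infty,
\]
which, substituted into the SVI formula, yields
\[
w(k)=b(1+\rho)\,k+\bigl(a-bm(1+\rho)\bigr)+O(k^{-1}).
\]
Hence $w(k)-2k=\bigl(b(1+\rho)-2\bigr)k+\bigl(a-bm(1+\rho)\bigr)+O(k^{-1})$, which naturally splits the analysis according to the sign of $b(1+\rho)-2$.

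I then treat the three cases. When $b(1+\rho)<2$ and $\rho>-1$, the numerator is equivalent to $(b(1+\rho)-2)k\to-\infty$ while the denominator is equivalent to $2\sqrt{b(1+\rho)k}$, so the quotient diverges to $-\infty$; the boundary subcase $\rho=-1$ must be handled separately, using the discussion of \S\ref{smiles-vanishing-at-some-point} which allows us to assume $a>0$, so that $w(k)\to a>0$ is bounded while $w(k)-2k\to-\infty$, again giving $d_1\to-\infty$. When $b(1+\rho)=2$, the linear term in the numerator cancels and the numerator tends to the finite limit $a-2m$ (using $bm(1+\rho)=2m$), while the denominator tends to $+\infty$ of order $\sqrt{k}$, so $d_1\to 0$. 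When $b(1+\rho)>2$, the numerator is equivalent to $(b(1+\rho)-2)k\to+\infty$ and the denominator is of order $\sqrt{k}$, so $d_1\to+\infty$.

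There is no genuine obstacle in the proof; the only care required is in the boundary case $\rho=-1$, where $b(1+\rho)=0$ makes the leading-order analysis degenerate and one must fall back on the explicit limit $w(k)\to a$, together with the standing assumption $a+b\sigma\sqrt{1-\rho^2}>0$ that the paper has already justified.
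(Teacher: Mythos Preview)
The paper omits the proof (``The proof is simple and it is omitted''), so there is no authorial argument to compare against; your asymptotic expansion of $w(k)$ and the decomposition $d_1=\frac{w(k)-2k}{2\sqrt{w(k)}}$ are a natural and correct route.

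There is one small gap. In the boundary subcase $\rho=-1$ you invoke the assumption $a+b\sigma\sqrt{1-\rho^2}>0$, i.e.\ $a>0$, citing \S\ref{smiles-vanishing-at-some-point}. But that discussion only rules out the vanishing case for $|\rho|<1$; for $\rho=-1$ the paper explicitly allows $a=0$ (see \S\ref{the-monotonous-case-rho-pm-1}: ``the boundary value $0$ is allowed, unlike in the regular case, because the implied volatility does not vanish at any finite strike''). When $\rho=-1$ and $a=0$ your argument ``$w(k)\to a>0$'' breaks down since $w(k)\to 0$. The fix is immediate: in that case
\[
w(k)=b\bigl(-(k-m)+\sqrt{(k-m)^2+\sigma^2}\bigr)=\frac{b\sigma^2}{(k-m)+\sqrt{(k-m)^2+\sigma^2}}\sim\frac{b\sigma^2}{2k},
\]
so $d_1\sim -k/\sqrt{w(k)}\sim -\sqrt{2/(b\sigma^2)}\,k^{3/2}\to-\infty$, and the conclusion still holds. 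With this patch your proof is complete.
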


The proof is simple and it is omitted. An important consequence to this result is that when $b(1+\rho)=2$, the Call prices do not go to zero when the strike goes to infinity and so they are not given as the expectation of the payoff; we will come back to this situation in detail in \cref{gux5f1-and-the-fukasawa-necessary-condition-for-no-butterfly-arbitrage}. In such a case, this does not necessarily lead to an arbitrage and so the request $b(1+\rho)<2$ is not a necessary condition fo the absence of arbitrage. We can summarize the previous discussion as follows:

\begin{proposition}[No Butterfly arbitrage criterion for SVI]
A necessary condition for no Butterfly arbitrage to hold in SVI is that $SVI(k)>0$ for all $k$. Under this condition, there is no arbitrage in SVI if and only if the function $g$ in \cref{eqgSVI} is non-negative. In this case, the function $p(K)$ in \cref{eqpK} where $K=S_0 e^k$, and $S_0$ is the underlying asset value, defines a positive density on $\mathbb R_+$ such that $\int p(x) dx\leq 1$.

Moreover, the Call prices in SVI go to zero when the strike goes to infinity if and only if $b(1+\rho)< 2$, and the derivative of the Call price (expressed in num\'{e}raire of the maturity) with respect to the strike goes to $-1$ if and only if $b(1-\rho)<2$. In the first case $\int x p(x) dx=S_0$ and in the second case $\int p(x) dx=1$.

\end{proposition}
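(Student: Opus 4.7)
The plan is to assemble the statement from the three preceding subsections, together with two explicit asymptotic computations for the SVI smile and an invocation of Theorem~2.1.2 of \cite{tehranchi2020black}. I would organize the argument in three steps.

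First, I would establish the necessity of $SVI(k)>0$ for all $k$ and the reduction of no Butterfly arbitrage to $g\geq 0$. If $SVI$ were to vanish at some $k_m$, the analysis of \cref{smiles-vanishing-at-some-point} forces it to remain zero on an entire half-line on one side of $k_m$, whereas SVI is strictly convex (hence has at most one zero) when $|\rho|<1$, and cannot vanish at a finite point when $|\rho|=1$; so $SVI>0$ on $\mathbb R$ is a necessary condition. Under positivity, property~3 of \cref{axiom-of-no-butterfly-arbitrage} is automatic from the monotonicity of $C_{BS}$ in its volatility argument, and the observation made there that an increasing convex function cannot be bounded shows that property~1 forces property~2. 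So no Butterfly arbitrage is equivalent to $\partial^2_K C\geq 0$, which, in view of the strictly positive prefactor in \cref{eqpK}, amounts to $g(k)\geq 0$.

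For the density statement, non-negativity of $p$ is immediate from \cref{eqpK} once $g\geq 0$. Convexity and monotonicity of $C$ make $\partial_K C$ non-decreasing, bounded above by $0$ (monotonicity) and below by $-1$ (since $C(K)\geq (S_0-K)^+$). Therefore
\[
\int_0^\infty p(K)\,dK=\partial_K C(\infty)-\partial_K C(0^+)\in [0,1],
\]
so the total mass does not exceed one.

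For the last part, I would identify the asymptotic regimes of $d_1$ and $d_2$. Using the leading-order expansion $SVI(k)=b(1+\rho)\,k+O(1)$ as $k\to +\infty$ together with $d_1(k,\sqrt{w})=-k/\sqrt{w}+\sqrt{w}/2$, one obtains
\[
d_1\bigl(k,\sqrt{SVI(k)}\bigr)\sim \frac{b(1+\rho)-2}{2\sqrt{b(1+\rho)}}\sqrt{k},
\]
which diverges to $-\infty$ iff $b(1+\rho)<2$; by Theorem~2.9 of \cite{roper2010arbitrage} this is exactly the vanishing of $C(K)$ at large strikes. The symmetric expansion $SVI(k)=-b(1-\rho)\,k+O(1)$ as $k\to -\infty$ yields
\[
d_2\bigl(k,\sqrt{SVI(k)}\bigr)\sim \frac{2-b(1-\rho)}{2\sqrt{b(1-\rho)}}\sqrt{-k},
\]
which diverges to $+\infty$ iff $b(1-\rho)<2$; by Proposition~2.4 of \cite{fukasawa2012normalization} this is $\partial_K C(0^+)=-1$, equivalent to $\int p=1$ by the previous paragraph. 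The identity $\int x\,p(x)\,dx=S_0$ in the first case is then a direct application of Theorem~2.1.2 of \cite{tehranchi2020black} to the density $p$.

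The only delicate point is the bookkeeping in these asymptotics: the leading coefficient in either expansion vanishes precisely at $b(1\pm\rho)=2$, so at those boundaries one would need the next order to conclude. Since the proposition only claims strict-inequality equivalences this is enough, and the borderline $b(1+\rho)=2$ has already been recorded in the lemma stated immediately above.
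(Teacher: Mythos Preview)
Your proposal is correct and follows essentially the same route as the paper: the proposition is stated there as a summary of the preceding discussion in \cref{axiom-of-no-butterfly-arbitrage}--\cref{no-butterfly-arbitrage-criterion-for-svi}, and you have simply made explicit the asymptotic computation for $d_{1,2}$ that the paper records in the lemma just above (with proof omitted) and invoked the same references (Roper, Fukasawa, Tehranchi) at the same places. The only cosmetic point is that your leading-order formula for $d_1$ degenerates when $b(1+\rho)=0$, but in that boundary case $SVI(k)\to a>0$ and $d_1\to -\infty$ trivially, so the equivalence still holds.
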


	Note that the two conditions $b(1+\rho) = 2$ and $b(1-\rho) = 2$ can occur simultaneously if and only if $b=2$ and $\rho=0$. We turn now to 
	the weak no Butterfly condition obtained by Fukasawa. Characterizing this intermediary condition will eventually lead us to our full characterization result.

\section{Fukasawa necessary condition for no Butterfly
arbitrage}\label{gux5f1-and-the-fukasawa-necessary-condition-for-no-butterfly-arbitrage}
We recall the beautiful model-free necessary no Butterfly arbitrage condition obtained by Fukasawa in \cite{fukasawa2012normalizing}. Following Fukasawa, let us denote the Black-Scholes prices as $C_{BS}(k,\sigma) = S_0\Phi(d_1(k,\sigma)) - S_0e^k\Phi(d_2(k,\sigma))$ for Calls and $P_{BS}(k,\sigma) = S_0e^k\Phi(-d_2(k,\sigma)) - S_0\Phi(-d_1(k,\sigma))$ for Puts; the implied total volatility is $\sqrt{w(k)}=\sigma(k)$; for a given total implied volatility let us set 
	$$f_{1,2}(k) = -d_{1,2}(k,\sigma(k)).$$
	Fukasawa proved in Theorem 2.8 of \cite{fukasawa2012normalizing}
that (under the hypothesis that option prices are given by the expectation of their payoff) if a total variance smile $w$, expressed as a function of the
log-forward moneyness, has no Butterfly arbitrage, then the two
functions $f_1$ and $f_2$ are necessarily strictly increasing with $f'_{1,2} > 0$.

\subsection{A slight generalization of Fukasawa result}
	
In the following, we generalize Fukasawa's result to the case where the only request on the Put prices is their convexity and differentiability, without requiring that they are given by the expectation of the payoff. 
Note that the proof is essentially Fukasawa's one. This will allow us to cover the boundary case $b(1+\rho) = 2$.

\begin{lemma}
	
	Let Put prices be defined as the Black-Scholes Put prices with a total volatility $\sigma(k)$: $P(K) = P_{BS}(k,\sigma(k))$, where $K=S_0e^k$. If the function $P$ is convex and the total volatility is differentiable, then the functions $f_{1,2}$ are increasing.
	
\end{lemma}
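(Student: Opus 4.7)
The plan is to follow Fukasawa's argument, now organised around the two cumulative distribution functions naturally associated to $P$. First, starting from the tautological identity $P(K) = K\Phi(f_2(k)) - S_0 \Phi(f_1(k))$ with $k=\log(K/S_0)$, I would apply the chain rule and use the classical vega identity $S_0 \phi(f_1) = K \phi(f_2)$ to collapse the cross terms, obtaining
\begin{equation*}
    P'(K) = \Phi(f_2(k)) + \phi(f_2(k))\,\sigma'(k).
\end{equation*}
The twin quantity $\tilde F(K) := (K P'(K) - P(K))/S_0$ -- the analogue of $P'$ under the stock num\'eraire, obtained by an integration by parts against the would-be density $P''$ -- rearranges in the same way to
\begin{equation*}
    \tilde F(K) = \Phi(f_1(k)) + \phi(f_1(k))\,\sigma'(k).
\end{equation*}

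Second, convexity of $P$ together with the standard boundary behaviour of the Black--Scholes Put (namely $P(0)=0$, $P(K)\sim K-S_0$ as $K\to\infty$, and $(K-S_0)^+ \leq P(K) \leq K$, all inherited from $P=P_{BS}(k,\sigma(k))$) forces $P'(K)\in[0,1]$. A direct differentiation gives $\tilde F'(K) = K P''(K)/S_0 \geq 0$, which combined with $\tilde F(0)=0$ and $\tilde F(\infty)=1$ yields $\tilde F(K) \in [0,1]$ as well.

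The crucial step is to convert these bounds into the desired sign of $f'_{1,2}$ through the algebraic identities
\begin{equation*}
    f_1'(k) = \frac{1 - \sigma'(k)\, f_2(k)}{\sigma(k)}, \qquad f_2'(k) = \frac{1 - \sigma'(k)\, f_1(k)}{\sigma(k)},
\end{equation*}
which follow by direct differentiation of $f_{1,2}(k) = k/\sigma(k) \mp \sigma(k)/2$. It thus suffices to establish $\sigma'(k)\, f_i(k) \leq 1$ for $i=1,2$. The key ingredient is Mill's ratio, $\Phi(-x) \leq \phi(x)/x$ for $x>0$: if $f_2(k)>0$, the upper bound $P'(K)\leq 1$ rewrites as $\phi(f_2)\sigma' \leq \Phi(-f_2)$ and Mill's delivers $\sigma'(k)\,f_2(k) \leq 1$; if $f_2(k)<0$, the lower bound $P'(K)\geq 0$ combined with Mill's applied to $-f_2 > 0$ produces the same conclusion; the case $f_2=0$ is trivial. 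This gives $f_1'\geq 0$, and the symmetric argument based on $\tilde F \in [0,1]$ (replacing $f_2$ by $f_1$ everywhere) gives $f_2'\geq 0$. The main obstacle I anticipate is precisely the sign bookkeeping in this last step: Mill's ratio must be applied to whichever endpoint of $[0,1]$ is binding according to the sign of $f_2$ (resp.\ $f_1$), and the inequalities flip upon multiplication by negatives, so some care is needed to see that both regimes deliver the same one-sided bound $\sigma'\, f_i \leq 1$.
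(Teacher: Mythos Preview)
Your approach is essentially the paper's (i.e.\ Fukasawa's) argument, with one organizational difference: where the paper handles the case $f_1>0,\ \sigma'>0$ by the trick $f_1\sigma'=f_2\sigma'-\sigma\sigma'<1$, you instead introduce the symmetric quantity $\tilde F(K)=(KP'(K)-P(K))/S_0=\Phi(f_1)+\phi(f_1)\sigma'$ and run the Mill's-ratio argument twice, once on $P'\in[0,1]$ and once on $\tilde F\in[0,1]$. This is a nice symmetrization.

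There is, however, a genuine gap in your justification of $\tilde F\le 1$. You claim $\tilde F(\infty)=1$ because $P(K)\sim K-S_0$ as $K\to\infty$. But $P(K)-(K-S_0)=C(K)$, and $C(K)\to 0$ is \emph{precisely} the large-moneyness condition (equivalently $f_1(k)\to+\infty$) that this lemma is designed \emph{not} to assume: the paper's whole point here is to cover the boundary case $b(1+\rho)=2$, where $f_1(k)\to 0$ and $C(K)\to S_0/2$, so $\tilde F(\infty)=1/2\ne 1$. Your monotonicity argument $\tilde F'\ge 0$ with endpoint $\tilde F(\infty)=1$ therefore does not go through in general.

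The fix is immediate using ingredients you already have: from $P'(K)\le 1$ and $P(K)\ge K-S_0$ you get
\[
KP'(K)-P(K)\ \le\ K-(K-S_0)=S_0,
\]
hence $\tilde F(K)\le 1$ directly, with no asymptotic assumption. (Equivalently, $S_0(\tilde F-1)=KC'(K)-C(K)\le 0$ since $C\ge 0$ and $C'\le 0$.) With this correction your symmetric argument is complete. As a minor point, Mill's ratio is strict, so you in fact obtain $\sigma' f_i<1$ and hence $f_{1,2}'>0$, matching the paper's strict conclusion.
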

\begin{proof}
	
	Because the total volatility is differentiable, then also the Put prices are differentiable. Define $D_{BS}(K) := \frac{1}{K}\frac{\partial P_{BS}}{\partial k}(k,\sigma)\mid_{\sigma=\sigma(k)} = \Phi(f_2(k))$ and $D(K) := \frac{d\,P}{d\,K}(K)$. Note that here we do not use the equality $D(K) = E[I_{K>S_T}]$ whose proof requires that the Put prices are the expectation of their payoff. Since the Put prices are given by the  Black-Scholes formula, they lie between $(K-S_0)^+$ and $K$, and since the function $K \to P(K)$ is convex, then its derivative lies necessarily between $0$ and $1$: $0\leq D(K)\leq 1$. It holds
	\begin{equation}\label{eqDFuk}
		\begin{aligned}
			D(K) &= \frac{d}{dK}P_{BS}\bigl(\log(K/S_0),\sigma(\log(K/S_0))\bigr)\\
			&= D_{BS}(K) + \frac{1}{K}\frac{\partial P_{BS}}{\partial \sigma}\bigl(\log(K/S_0),\sigma(\log(K/S_0))\bigr)\frac{d\sigma}{dk}\bigl(\log(K/S_0)\bigr)\\
			&= D_{BS}(K) + \phi\bigl(f_2\bigl(\log(K/S_0)\bigr)\bigr)\frac{d\sigma}{dk}\bigl(\log(K/S_0)\bigr).
		\end{aligned}
	\end{equation}
	
	We now check that
	\begin{equation*}
		f_2(k)\frac{d\sigma}{dk}(k)<1.
	\end{equation*}
	From the previous equations, $\frac{d\sigma}{dk}(k) = \frac{D(S_0e^k)-D_{BS}(S_0e^k)}{\phi(f_2(k))}$ and because of the bounds for $D(K)$, this quantity lies in $\Bigl[-\frac{1-\Phi(-f_2(k))}{\phi(-f_2(k))},\frac{1-\Phi(f_2(k))}{\phi(f_2(k))}\Bigr]$. So when $f_2(k)$ is non-negative, $f_2(k)\frac{d\sigma}{dk}(k)\leq f_2(k)\frac{1-\Phi(f_2(k))}{\phi(f_2(k))}$. Otherwise, $f_2(k)\frac{d\sigma}{dk}(k)\leq -f_2(k)\frac{1-\Phi(-f_2(k))}{\phi(-f_2(k))}$. Both quantities are less than $1$.
	
	At this point, we verify that
	\begin{equation}\label{eqf1Fuk}
		f_1(k)\frac{d\sigma}{dk}(k)<1.
	\end{equation}
	It holds $KD(K) \geq P(K)$. Indeed, since $P$ is convex, its tangent at $K$ lies below the function $P$ itself, so for any $x \geq 0$ one has $P(K)+(x-K)D(K) \leq P(x)$ and evaluating at $x=0$ we obtain the target inequality since $P(0)=0$. From this inequality, using \cref{eqDFuk} and writing the explicit formula of $P(K)$, one gets $0\leq S_0\Phi(f_1(k))+S_0e^k\phi(f_2(k))\frac{d\sigma}{dk}(k) = S_0\Phi(f_1(k))+S_0\phi(f_1(k))\frac{d\sigma}{dk}(k)$. If $f_1$ is non-positive, then $f_1(k)\frac{d\sigma}{dk}(k)\leq -f_1(k)\frac{1-\Phi(-f_1(k))}{\phi(-f_1(k))}<1$. Otherwise, \cref{eqf1Fuk} is automatically satisfied when $\frac{d\sigma}{dk}(k)$ is non-positive, while when it is positive, we notice that $f_1(k)\frac{d\sigma}{dk}(k) = f_2\frac{d\sigma}{dk}(k) - \sigma\frac{d\sigma}{dk}(k) < 1-\sigma\frac{d\sigma}{dk}(k) <1$.
	
	Finally, we show that $f_1$ and $f_2$ are increasing. Indeed, from their definition, $\frac{df_{1,2}}{dk}(k) = \frac{1}{\sigma(k)}\bigl(1-\frac{d\sigma}{dk}(k)\bigl(\frac{k}{\sigma(k)}\pm\frac{\sigma(k)}2\bigr)\bigr) = \frac{1}{\sigma(k)}\bigl(1-\frac{d\sigma}{dk}(k)f_{2,1}(k)\bigr) >0$.
\end{proof}

What if we start from convex Call prices defined by the Black-Scholes Call prices instead of Put prices? In such case, it would be enough to prove that also the Put prices come from the Black-Scholes Put prices and they are convex. Indeed, synthetizing the strategy of selling a Call with strike $K$, buying the underlying and selling a quantity $K$ of cash at time $0$ yields a payoff $(X_T-K)^+ - X_T+K =(K-X_T)^+$, where $X_T$ is the realized value of the underlying at maturity, so that the assumption of no arbitrage leads to $P(K)=C(K)-S_0+K$ which is the Put-Call parity. Since the equality $P_{BS}(k,\sigma(k))=C_{BS}(k,\sigma(k))-S_0+K$ also holds from the definition of the functions $C_{BS}$ and $P_{BS}$, it follows firstly that $P(K)=P_{BS}(k,\sigma(k))$, so the Calls and Puts with the same strike have the same implied volatility. Secondly, looking at the Put-Call parity, one notices that the Call prices are convex iff the Put prices are convex. Applying the previous Lemma one finds again $f_{1,2}'(k)>0$.

\subsection{Expectation-based representation of the Calls and Put prices in SVI}

The issue of the representation of the Call and Put price functions by an expectation under our purely analytical assumptions has been settled by Tehranchi in \cite{tehranchi2020black}. Indeed re-starting from the assumption that the implied volatility function is such that the Call price function $K \to C(K)$ is convex, it follows from the above discussion that we are exactly in the situation of Theorem 2.1.2 in \cite{tehranchi2020black}. So there exists a non-negative random variable $S_T$ such that $E[S_T] \leq S_0$ and $C(K)=S_0-E[K \wedge S_T]$. We have then from the above discussion that $P(K)=C(K)+K-S_0=K-E[K \wedge S_T]$. It is interesting to note that:
$$C(K) = S_0-E[K \wedge S_T] = E[(S_T-K)^+]+S_0-E[S_T],$$
whereas the usual expectation for the Put formula still holds:
$$P(K) = K-E[K \wedge S_T] = E[(K-S_T)^+].$$

Going back to SVI, when $b(1+\rho)<2$ we are in the situation where $C(K) \to 0$ when $K \to \infty$, so that $E[S_T]=S_0$ in the above representation. In the case $b(1+\rho)=2$ one has $\lim_{k \to \infty} f_1(k)=0$, which plugged into the Black-Scholes formula gives $\lim_{K \to \infty} C(K) = \frac{S_0}{2}$; in turns this gives $E[S_T]=\frac{S_0}{2}$. We can state the following:

\begin{proposition}

Let $C(K):=C_{BS}(k,\sqrt{SVI(k)})$ and $P(K):=P_{BS}(k,\sqrt{SVI(k)})$ be the Call and Put prices in SVI. Then
there exists a positive random variable $S_T$ such that:
\begin{enumerate}
\item $P(K) = K-E[K \wedge S_T] = E[(K-S_T)^+]$ and $C(K)= S_0-E[K \wedge S_T] = E[(S_T-K)^+]+S_0-E[S_T]$;
\item if $b(1+\rho)<2$, $C(K) \to 0$ when $K \to \infty$, $E[S_T]=S_0$ and $C(K)= E[(S_T-K)^+]$;
\item if $b(1+\rho)=2$, $C(K) \to \frac{S_0}{2}$ when $K \to \infty$, $E[S_T]=\frac{S_0}{2}$ and  $C(K)= E[(S_T-K)^+]+\frac{S_0}{2}$.
\end{enumerate}

\end{proposition}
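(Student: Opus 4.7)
The plan is to invoke Theorem 2.1.2 of \cite{tehranchi2020black} to produce the random variable $S_T$, and then identify $E[S_T]$ by computing the large-strike limit of $C(K)$ in two independent ways.

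First, I would verify that the hypotheses of Tehranchi's theorem are met for the SVI Call price function. Under the (implicit) no Butterfly arbitrage assumption that organises this section, the preceding proposition gives convexity of $K \mapsto C(K)$; the range $C(K) \in [(S_0-K)^+, S_0]$ follows automatically from the monotonicity of $C_{BS}$ in its volatility argument, as already recorded in Section 2.1; and the observation there that a bounded convex function is automatically non-increasing completes the list of hypotheses. Theorem 2.1.2 then yields a non-negative random variable $S_T$ with $E[S_T] \leq S_0$ and $C(K) = S_0 - E[K \wedge S_T]$ for every $K \geq 0$.

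Next I would derive the Put representation from the Put-Call parity $P(K) = C(K) + K - S_0$, which was established earlier from the coincidence of the Black-Scholes Call and Put implied volatilities; this gives $P(K) = K - E[K \wedge S_T]$ immediately. Inserting the pointwise identities $K \wedge S_T = K - (K-S_T)^+ = S_T - (S_T-K)^+$ inside the expectations produces $P(K) = E[(K-S_T)^+]$ and $C(K) = E[(S_T-K)^+] + S_0 - E[S_T]$, which is part 1.

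For parts 2 and 3, I would compute $\lim_{K \to \infty} C(K)$ both probabilistically and analytically and then identify the two expressions. Probabilistically, $K \wedge S_T \uparrow S_T$ and monotone convergence yield $\lim_{K \to \infty} C(K) = S_0 - E[S_T]$. Analytically, using $C_{BS}(k,\sigma(k)) = S_0 \Phi(d_1) - K \Phi(d_2)$ together with the Lemma on the limit of $d_1(k,\sqrt{SVI(k)})$: when $b(1+\rho) < 2$, both $d_1$ and $d_2$ tend to $-\infty$, $S_0\Phi(d_1) \to 0$, and a Mill's ratio estimate on $\Phi(d_2)$ combined with the asymptotics $SVI(k) \sim b(1+\rho)(k-m)$ yields $K\Phi(d_2) \to 0$, hence $C(K) \to 0$ and $E[S_T]=S_0$; when $b(1+\rho) = 2$, the Lemma gives $d_1 \to 0$ so $S_0\Phi(d_1) \to S_0/2$, while $d_2 \sim -\sqrt{2k}$ still keeps $K\Phi(d_2) \to 0$ by the same Mill's ratio argument, yielding $C(K) \to S_0/2$ and $E[S_T]=S_0/2$. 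Substituting these two values of $E[S_T]$ into the formula of part 1 produces the simplified representations in parts 2 and 3.

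The only delicate step is the estimate $K\Phi(d_2) \to 0$ in the boundary case $b(1+\rho) = 2$, since $K = S_0 e^k$ grows exponentially while the Gaussian tail only contributes $e^{-d_2^2/2} \sim e^{-k}$, so the two exponentials balance exactly; however, the extra $1/|d_2| = O(1/\sqrt{k})$ prefactor from the Mill's ratio asymptotic tips the product to zero. Everything else is bookkeeping on the formulas already assembled in the paragraph preceding the proposition.
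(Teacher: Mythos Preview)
Your proposal is correct and follows essentially the same route as the paper: invoke Tehranchi's Theorem~2.1.2 for the representation, use Put--Call parity and the identities for $K\wedge S_T$ to get part~1, and identify $E[S_T]$ from the large-strike limit of $C(K)$ via the Lemma on $d_1$. The paper is actually terser than you are on the boundary case $b(1+\rho)=2$, writing only that $f_1(k)\to 0$ ``plugged into the Black--Scholes formula'' gives $C(K)\to S_0/2$; your explicit Mill's-ratio estimate showing $K\Phi(d_2)\to 0$ fills in exactly the step the paper leaves implicit.
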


The last ingredient we will require is a natural change of parameters in SVI, that we describe in the next section altogether with the main argument of our full characterization.

\section{Normalizing SVI}\label{normalizing-svi}

	We now rescale SVI in the following way, which is natural:
\begin{align*}
SVI(k) &= \alpha\sigma + b\sigma\biggl(\rho\frac{k-m}{\sigma} + \sqrt{\Bigl(\frac{k-m}{\sigma}\Bigr)^2 + 1}\biggr)\\
 &= \sigma N\biggl(\frac{k-m}{\sigma}\biggr)
\end{align*}
with $\alpha := a/\sigma$ and $N(l) := \alpha+b(\rho l+\sqrt{l^2+1})$. With this
rewriting, the derivatives of the SVI model become
\begin{align*}
SVI'(k) &= N'\biggl(\frac{k-m}{\sigma}\biggr),\\
SVI''(k) & = \frac{1}{\sigma}N''\biggl(\frac{k-m}{\sigma}\biggr).
\end{align*}

	Observe that the second derivative $N''$ is positive so $N$ is
strictly convex. Its only critical point is a minimum that we call
$l^* = -\frac{\rho}{\sqrt{1-\rho^2}}$. We gather the important
properties of $N$ in the following:

	\begin{lemma}[Normalized SVI]
Let $N(l) := \alpha+b(\rho l+\sqrt{l^2+1})$ where $a = \alpha \sigma$. Then $N$ is strictly convex with a minimum at $l^* = -\frac{\rho}{\sqrt{1-\rho^2}}$, where $N(l^*)=\alpha+b \sqrt{1-\rho^2}$. Also:
\begin{align*}
N'(l) &= b\biggl(\rho +\frac{l}{\sqrt{l^2+1}}\biggr),\\
N''(l) &= \frac{b}{(l^2+1)^{\frac{3}{2}}}.
\end{align*}
In particular as $l\to\pm\infty$:
\begin{align*}
N(l) \sim \alpha + b(\rho\pm 1)l, & & N'(l) \to b(\rho\pm 1), & &N''(l) \to 0,
\end{align*}
and for every $k$  
$$SVI(k)=\sigma N\Bigl(\frac{k-m}{\sigma}\Bigr).$$

\end{lemma}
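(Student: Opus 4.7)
The statement is essentially a bookkeeping lemma that bundles together routine differentiations of the affine rescaling $N(l)=\alpha+b(\rho l+\sqrt{l^2+1})$, so my plan is to verify each clause in turn by direct computation, and the main work is simply making sure the algebra lines up.

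First I would establish the scaling identity $SVI(k)=\sigma N((k-m)/\sigma)$ by plugging $l=(k-m)/\sigma$ into the definition of $N$ and multiplying by $\sigma$: the constant $\alpha\sigma$ returns $a$, the linear term $b\sigma\rho l$ returns $b\rho(k-m)$, and $b\sigma\sqrt{l^2+1}=b\sqrt{(k-m)^2+\sigma^2}$, matching the original SVI formula term by term. This identity immediately implies the chain-rule formulas $SVI'(k)=N'((k-m)/\sigma)$ and $SVI''(k)=\sigma^{-1}N''((k-m)/\sigma)$ that appear in the preceding displayed equations, so the reader can reuse them later.

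Next, differentiating $N$ directly gives $N'(l)=b(\rho+l/\sqrt{l^2+1})$ and, differentiating again, $N''(l)=b/(l^2+1)^{3/2}$. Since $b\geq 0$ and the denominator is strictly positive, $N''\geq 0$ with strict inequality whenever $b>0$, so $N$ is strictly convex in the non-degenerate regime. Setting $N'(l)=0$ leads to $l/\sqrt{l^2+1}=-\rho$, which, for $|\rho|<1$, has the unique solution $l^*=-\rho/\sqrt{1-\rho^2}$; substituting back gives $\sqrt{(l^*)^2+1}=1/\sqrt{1-\rho^2}$ and hence $N(l^*)=\alpha+b(-\rho^2/\sqrt{1-\rho^2}+1/\sqrt{1-\rho^2})=\alpha+b\sqrt{1-\rho^2}$, as announced.

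Finally, for the asymptotics I would divide by $l$ and take limits: $l/\sqrt{l^2+1}\to\pm 1$ as $l\to\pm\infty$, giving $N'(l)\to b(\rho\pm 1)$, and from $\sqrt{l^2+1}=|l|\sqrt{1+1/l^2}=|l|+O(1/|l|)$ one gets $N(l)=\alpha+b(\rho\pm 1)l+O(1/|l|)$, which yields the asymptotic equivalent in the statement provided $\rho\neq\mp 1$ (the boundary cases $\rho=\pm 1$ collapse $N$ to an affine function, which is handled transparently by the same formula). The obstruction here is purely notational rather than mathematical: one has to be careful with signs around $l^*$ when $\rho=\pm 1$ (so that the minimum is only attained at $\pm\infty$, as hinted in the domain discussion), but this has already been flagged in the paragraph on the domain of SVI parameters and does not require a separate argument.
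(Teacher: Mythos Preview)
Your proposal is correct and matches the paper's treatment: the paper does not give a separate proof of this lemma at all, presenting it as a collection of elementary computations that the reader can verify directly, which is exactly what you do. Your extra care about the degenerate cases $b=0$ and $|\rho|=1$ is also in line with the paper, which immediately after the lemma remarks that the asymptotic statement covers $b=0$ and $b\neq 0$, and later explicitly sets aside $b=0$ and $|\rho|=1$ as boundary cases.
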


	In the above Lemma, note that the statement $N(l) \sim a' + b(\rho\pm 1)l$ covers the cases $b=0$ and $b \neq 0$.
	
	Hereafter we will also put $m=\mu\sigma$, so that $k=\sigma (l + \mu)$
and
\[SVI_{a,b,\rho, m,\sigma}(k) = \sigma N_{\alpha,b,\rho}\biggl(\frac{k}{\sigma}-\mu\biggr)\]
	where the parameters have the following constraints:
\begin{align*}
&b \geq 0,
&&|\rho| \leq 1,
&&\mu\in\mathbb{R},
&&\sigma \geq 0,
&&\alpha+b\sqrt{1-\rho^2} \geq 0.
\end{align*}

\subsection{Expressing $g$ with $f_{1,2}$ in rescaled parameters, and our main argument} \label{mainArg}
There is a nice expression of
$g$ involving the functions $f_{1,2}$; indeed as shown e.g.~in
\cite{de2018moment} (Eq. 55 p.~25):
\[{\frac{\partial^2 C}{\partial K^2}}\Biggr|_{K=S_0 e^k}=\phi(f_2(k)) \left(f_1'(k)f_2'(k)\sqrt{w(k)}  + (\sqrt{w})''(k) \right) \frac 1{S_0 e^k}\]
where $\phi$ is the standard Gaussian density. By identification this
yields
\[ g(k) = \Bigl(f_1'(k)f_2'(k)\sqrt{w(k)}  +( \sqrt{w})''(k) \Bigr)\sqrt{ w(k)}.\]

With our rescaled parameters, we have
\begin{equation*}
g(k) = \biggl(1-\frac{kN'\bigl(\frac{k}{\sigma}-\mu\bigr)}{2\sigma N\bigl(\frac{k}{\sigma}-\mu\bigr)}\biggr)^2 - \frac{N'\bigl(\frac{k}{\sigma}-\mu\bigr)^2}{4}\biggl(\frac{1}{\sigma N\bigl(\frac{k}{\sigma}-\mu\bigr)}+\frac{1}{4}\biggr) + \frac{N''\bigl(\frac{k}{\sigma}-\mu\bigr)}{2\sigma}
\end{equation*}
and writing $G(l):=g(\sigma (l + \mu))$ we find
\[G(l)=\biggl(1-\frac{(l+\mu)N'(l)}{2N(l)}\biggr)^2 - \frac{N'(l)^2}{4}\biggl(\frac{1}{\sigma N(l)}+\frac{1}{4}\biggr) + \frac{N''(l)}{2\sigma}.\]

We can rewrite $G$ as
\begin{align*}
G(l) &= \biggl(1-N'(l) \biggl(\frac{(l+\mu)}{2N(l)}+\frac{1}{4} \biggr)\biggr) \biggl(1-N'(l) \biggl(\frac{(l+\mu)}{2N(l)}-\frac{1}{4} \biggr)\biggr) + \frac{1}{2\sigma}\biggl(N''(l)-\frac{N'(l)^2}{2N}\biggr)\\ &= G_1(l) + \frac{1}{2\sigma} G_2(l)
\end{align*}
where
\begin{align*}
G_1(l) &:= \biggl(1-N'(l) \biggl(\frac{(l+\mu)}{2N(l)}+\frac{1}{4} \biggr)\biggr) \biggl(1-N'(l) \biggl(\frac{(l+\mu)}{2N(l)}-\frac{1}{4} \biggr)\biggr),\\
G_2(l) &:= N''(l)-\frac{N'(l)^2}{2N(l)}.
\end{align*}

Call $G_{1+}$ the first factor of $G_1$ and $G_{1-}$ the second one. Then
$f'_{1,2}(\sigma(l+\mu)) = f'_{1,2}(k) = \frac{1}{\sqrt{\sigma N\bigl(\frac{k}{\sigma}-\mu\bigr)}}\Bigl(1 - N'\bigl(\frac{k}{\sigma}-\mu\bigr)\Bigl(\frac{k}{2\sigma N\bigl(\frac{k}{\sigma}-\mu\bigr)}\pm\frac{1}{4}\Bigr)\Bigr)= \frac{G_{1\pm}(l)}{\sqrt{\sigma N(l)}}$ and the
Fukasawa conditions correspond to $G_{1\pm} > 0$, which entails that
$G_1 > 0$.

Completing the identification yields
$G_1\bigl(\frac{k}{\sigma}-\mu\bigr) = f'_{1}(k)f'_{2}(k)w(k)$ and
$\frac{1}{2\sigma}G_2\bigl(\frac{k}{\sigma}-\mu\bigr) = (\sqrt{w})''(k)\sqrt{w(k)}$.

It is now instrumental to observe that:
$$g(k)=G(l)=G_1(l) + \frac{1}{2\sigma} G_2(l)$$
where
\begin{itemize}
\itemsep1pt\parskip0pt\parsep0pt
\item
  $G_1$ depends only on $\alpha, b, \rho, \mu$,
\item
  $G_2$ depends only on $\alpha, b, \rho$,
\end{itemize}
so that the dependency of $G$ in $\sigma$ is particularly simple; this
is the main benefit of our rescaling of SVI.

Our main argument is now as follows: the Fukasawa conditions yield that is it necessary that $G_1>0$; for a given choice of $b, \rho, \alpha$, this will
give a condition on $\mu$, which therefore characterizes the Fukasawa conditions in SVI. Given then a parameter $\mu$ satisfying this condition, the positivity of $g$ (or $G$) can be casted as a simple condition on $\sigma$:
$$\sigma \geq \sup_l \frac{G_2(l)}{G_1(l)}$$
which yields the full characterization of no Butterfly arbitrage in SVI. 

In \cref{investigating-fukasawa-necessary-no-arbitrage-conditions} we investigate the conditions on $G_1$ related to the Fukasawa conditions, and in \cref{no-arbitrage-domain-for-svi} this latter condition on $\sigma$.

\subsection{Classifying the normalized SVI parameters}

We will use the following notations to clarify the assumptions made on the SVI parameters:
\begin{itemize}
\item (A1) $\alpha+b \sqrt{1-\rho^2}>0$ and $|\rho|<1$,
\item (A2) $\alpha \geq 0$ and $|\rho|=1$,
\item (B1) $b(1\pm \rho)<2$,
\item (B2)  $b(1+\rho)<2$ and $b(1-\rho)=2$,
\item (B3)  $b(1+\rho)=2$ and $b(1-\rho)<2$,
\item (B4)  $b(1+\rho)=2$ and $b(1-\rho)=2$, which is equivalent to $b=2, \rho=0$.
\end{itemize}

In the sequel, to avoid singularities in our computations, we will assume $b$ positive since the case $b=0$ is the Black-Scholes case, which is a trivial case of no arbitrage, and exclude the boundary cases $|\rho| = 1$, so work under assumption (A1). We revisit those boundary cases in \cref{the-monotonous-case-rho-pm-1} where we will assume (A2).

	\section{Investigating Fukasawa necessary no arbitrage
conditions}\label{investigating-fukasawa-necessary-no-arbitrage-conditions}

	\subsection{Limits at infinity}\label{limits-at-infinity}

	We have the following:

	\begin{lemma}[Limits of $G_1$]
$$\lim_{\pm \infty} G_1(l) = \Bigl(\frac{1}{2}-\frac{b(\rho\pm 1)}{4}\Bigr)\Bigl(\frac{1}{2}+\frac{b(\rho\pm 1)}{4}\Bigr).$$

In particular, $G_1(\infty)\geq0$ and $G_1(-\infty)\geq0$ iff simultaneously $b(1\pm\rho) \leq 2$.

\end{lemma}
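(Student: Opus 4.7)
The plan is to pass to the limit directly in the factored form $G_1 = G_{1+}\cdot G_{1-}$ introduced in \cref{mainArg}, using only the asymptotics of $N$ and $N'$ already recorded in the Normalized SVI Lemma.

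Recall from that Lemma that, under (A1), we have $N(l) \sim b(\rho\pm 1)\,l$ as $l \to \pm\infty$ and $N'(l) \to b(\rho\pm 1)$. Since (A1) forces $b(\rho\pm 1)\neq 0$, the ratio $\frac{l+\mu}{2N(l)}$ has a finite non-zero limit $\frac{1}{2b(\rho\pm 1)}$, and therefore $N'(l)\,\frac{l+\mu}{2N(l)} \to \frac{1}{2}$ while $\frac{N'(l)}{4} \to \frac{b(\rho\pm 1)}{4}$. Substituting into the defining expressions of $G_{1+}$ and $G_{1-}$ yields
\[\lim_{l\to\pm\infty} G_{1+}(l) = \frac{1}{2} - \frac{b(\rho\pm 1)}{4}, \qquad \lim_{l\to\pm\infty} G_{1-}(l) = \frac{1}{2} + \frac{b(\rho\pm 1)}{4},\]
so that the product $G_1 = G_{1+}G_{1-}$ converges to the announced expression at each of the two infinities.

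For the non-negativity statement, at $+\infty$ the factor $\frac{1}{2}+\frac{b(\rho+1)}{4}$ is automatically non-negative (since $b\geq 0$ and $\rho+1\geq 0$), so $G_1(+\infty)\geq 0$ iff the other factor $\frac{1}{2}-\frac{b(\rho+1)}{4}$ is non-negative, which is exactly $b(1+\rho)\leq 2$. Symmetrically at $-\infty$, it is the factor $\frac{1}{2}-\frac{b(\rho-1)}{4}$ that is automatically non-negative (because $\rho-1\leq 0$), so $G_1(-\infty)\geq 0$ reduces to $b(1-\rho)\leq 2$.

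There is no genuine obstacle in the argument; the only point to keep in mind is that the limit computation of $\frac{l+\mu}{2N(l)}$ genuinely needs $b(\rho\pm 1)\neq 0$, which is exactly what (A1) provides and which explains why the degenerate cases $|\rho|=1$ have to be revisited separately under (A2) in \cref{the-monotonous-case-rho-pm-1}.
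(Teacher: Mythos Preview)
Your argument is correct and is exactly the natural computation the paper has in mind; the paper itself does not spell out a proof for this lemma, treating the limits as an immediate consequence of the asymptotics of $N$ and $N'$ recorded in the Normalized SVI Lemma. One very minor sharpening: under (A1) together with the standing assumption $b>0$, the factors you call ``automatically non-negative'' are in fact strictly positive (since $\rho+1>0$ and $\rho-1<0$), which is what makes the ``iff'' in the last paragraph airtight.
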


	These conditions are conditions on the asymptotic slopes of the total
variance smile, and are therefore related to the Roger Lee Moment
formula \cite{lee2004moment}; this is a general fact for the Fukasawa
conditions: \cite{fukasawa2012normalizing} contains several asymptotic
statement on $f_1$ and $f_2$ which are directly related to the
asymptotic behaviour of $\frac{w(k)}{k}$.

	\subsection[The conditions as an interval for mu]{The conditions as an interval for
$\mu$}\label{the-conditions-as-an-interval-for-n}

	Let us investigate the corresponding Fukasawa conditions of positivity
of $G_{1+}$ and $G_{1-}$ in terms of SVI parameters. We start with the following:

\begin{lemma}

Let \begin{equation}\label{eqLmp}
L_\pm(l;\alpha,b,\rho):=2N(l)\Bigl(\frac{1}{N'(l)}\mp\frac{1}{4}\Bigr)-l
\end{equation}
where $L_+$ is defined on $]l^*,+\infty[$ and $L_-$ on $]-\infty,l^*[$.Then $G_{1\pm}>0$ if and only if $\sup_{l < l^*}L_-(l) <  \inf_{l > l^*}L_+(l)$ and
$$\mu \in I_{\alpha,b,\rho} := ]\sup_{l < l^*}L_-(l), \inf_{l > l^*}L_+(l)[.$$

\end{lemma}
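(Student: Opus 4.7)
The plan is to solve the pointwise inequalities $G_{1\pm}(l) > 0$ for $\mu$ on each side of the unique critical point $l^*$ of $N$, then combine the two collections of constraints. Under (A1), $N(l) \geq N(l^*) = \alpha + b\sqrt{1-\rho^2} > 0$ for every $l$, so multiplying any inequality by $2N(l)$ never flips a sign. At $l = l^*$, $N'(l^*) = 0$ gives $G_{1\pm}(l^*) = 1 > 0$ automatically; this isolated point contributes nothing and can be excluded from the analysis and from the suprema and infima in the statement.

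For $l > l^*$ we have $N'(l) > 0$. I would isolate $\mu$ in $G_{1\pm}(l) > 0$ by multiplying by $2N(l) > 0$, grouping the $\mu$-linear term on one side, and then dividing by $N'(l) > 0$; no inequality is flipped, and a direct rearrangement yields $G_{1\pm}(l) > 0 \iff \mu < L_\pm(l)$. For $l < l^*$ we have $N'(l) < 0$, so the same manipulation divides by a negative number in the last step and reverses the inequality, giving $G_{1\pm}(l) > 0 \iff \mu > L_\pm(l)$. This is exactly the reason why $L_+$ is naturally restricted to $]l^*,+\infty[$ and $L_-$ to $]-\infty,l^*[$ in the statement: these are the half-lines on which each condition carries the binding information.

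The decisive observation is the identity $L_-(l) - L_+(l) = N(l) > 0$, valid at every $l \neq l^*$. On $]l^*, +\infty[$ this means the upper bound $\mu < L_+(l)$ is strictly tighter than $\mu < L_-(l)$, so $G_{1+}(l) > 0$ alone already implies $G_{1-}(l) > 0$ on this half-line; symmetrically, on $]-\infty, l^*[$ the lower bound $\mu > L_-(l)$ is strictly tighter than $\mu > L_+(l)$, so $G_{1-}(l) > 0$ implies $G_{1+}(l) > 0$ there. Hence the full condition $G_{1\pm} > 0$ on $\mathbb R$ collapses to the two families $\mu < L_+(l)$ for every $l > l^*$ together with $\mu > L_-(l)$ for every $l < l^*$, and taking infimum and supremum delivers the interval $I_{\alpha,b,\rho}$; existence of an admissible $\mu$ is then exactly the strict separation $\sup_{l < l^*} L_-(l) < \inf_{l > l^*} L_+(l)$.

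I expect the main obstacle to be purely bookkeeping: carefully tracking the sign flip of $N'$ across $l^*$ so that the same pair of functions $L_\pm$ governs both half-lines but with opposite direction of the inequality, and pairing this with the elementary identity $L_- - L_+ = N > 0$ that produces the redundancy. Once this redundancy is in place the assertion is mechanical algebra, and all finer questions (whether the infimum and supremum are attained, and how they depend on $\alpha, b, \rho$) are postponed to the subsequent section.
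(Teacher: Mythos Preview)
Your proposal is correct and follows essentially the same approach as the paper: rearrange the pointwise inequality $G_{1\pm}(l)>0$ into a bound on $\mu$ involving $L_\pm(l)$, track the sign flip of $N'$ across $l^*$, and use $L_+(l)<L_-(l)$ to discard the redundant constraints. Your explicit identity $L_-(l)-L_+(l)=N(l)>0$ and the remark that $G_{1\pm}(l^*)=1$ are slightly more detailed than the paper's version, but the argument is the same.
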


	\begin{proof}
In order to have $G_{1\pm}>0$, we need $\sup_{l < l^*}L_\pm(l) < \mu < \inf_{l > l^*}L_\pm(l)$. Indeed $G_{1\pm}(l) = 1-N'(l) \Bigl(\frac{(l+\mu)}{2N(l)}\pm\frac{1}{4} \Bigr)$ so that $G_{1\pm}(l)>0$ iff $1\mp \frac{N'(l)}{4} > N'(l) \frac{(l+\mu)}{2N(l)}$. Since $L_+(l)< L_-(l)$ for every $l$, we obtain an interval for $\mu$ given by $\sup_{l < l^*}L_-(l) < \mu < \inf_{l > l^*}L_+(l).$
\end{proof}

	\begin{remark} In order to alleviate the notations, we will often suppress the list of parameters in $L_\pm$, or when we need it just denote the dependency in $\alpha$, $(b,\rho)$ being fixed.

\end{remark}

	What are the basic properties of $L_-$ and $L_+$?

Note that $L_-(l^{*-}) = -\infty$ and, under $b(1-\rho)<2$,
$L_-(-\infty)=-\infty$. It follows that $l_-$ such that
$L_-(l_-) = \sup_{l < l^*}L_-(l)$ lays in $]-\infty,l^*[$. Similarly,
$L_+(l^{*+}) = +\infty$ and $L_+(+\infty)=+\infty$ when $b(1+\rho)< 2$,
so $l_+$ such that $L_+(l_+) = \inf_{l > l^*}L_+(l)$ lays in
$]l^*,+\infty[$. When $b(1-\rho)=2$ then $L_-(-\infty)=-\frac{\alpha}{2}$
while when $b(1+\rho)=2$ then $L_+(+\infty)=\frac{\alpha}{2}$. Indeed at
infinity $L_-$ behaves as
$2\alpha\Bigl(\frac{1}{b(\rho-1)}+\frac{1}{4}\Bigr) + \frac{2+b(\rho-1)}{2}l$
while $L_+$ as
$2\alpha\Bigl(\frac{1}{b(\rho+1)}-\frac{1}{4}\Bigr) + \frac{2-b(\rho+1)}{2}l$.
In these cases the supremum of $L_-$ (or the infimum of $L_+$), could be
reached at $-\infty$ (or $+\infty$).

	Experiments show that not every choice of $(\alpha,b,\rho)$ leads to
$L_-(l)< -\epsilon< 0$ for all $l< l^*$ and $L_+(l)> \epsilon> 0$ for
all $l> l^*$, so the interval for $\mu$ could be empty: for example, for
$\alpha=-0.8, b=1$ and $\rho=0.5$, we have $L_-(l_-)>L_+(l_+)$. This
suggests that the situation is intricate; we show below that when
$\alpha\geq0$, the interval is non-empty.

	\subsubsection[The case alpha non-negative]{The case $\alpha\geq 0$}\label{the-case-ageq-0}

	In the case $\alpha \geq 0$, we can indeed demonstrate that the interval for
$\mu$ is non-empty, with the following easy argument:

$L_-$ is negative for $l< l^*$ iff $\frac{N}{2N'}(4+N')-l$ is negative.
In this domain $N'$ is negative, so the previous condition is equivalent
to ask $N(4+N') - 2lN' > 0$, or equivalently $2(N-lN') +N(2 + N') > 0$.
Let us consider the first term. We have
$N-lN' = \alpha + b\sqrt{l^2+1} - \frac{bl^2}{\sqrt{l^2+1}}$ which is
greater than $0$ iff, multiplying by $\sqrt{l^2+1}$, also
$\alpha\sqrt{l^2+1}+b>0$ or equivalently $\alpha>-\frac{b}{\sqrt{l^2+1}}$. This
holds for $\alpha\geq 0$ (note that the latter quantity reaches its maximum
at $-\infty$ where it equals $0$, so this proof cannot handle the case
$\alpha < 0$).

We can now consider the second term. We want $2 + N'> 0$. Since
$N'>b(\rho-1)$, then $2 + N'> 2+b(\rho-1) \geq 0$. So $L_-$ is always
strictly negative for $l< l^*$ and $\alpha\geq 0$.

Similarly, $L_+$ is positive for $l> l^*$ iff $2(N-lN') +N(2 - N') > 0$.
With the same arguments as before we obtain that $L_+$ is strictly
positive for $l> l^*$ and $\alpha\geq 0$.

	Under (B1), we showed $L_-(-\infty)=-\infty$ and $L_+(\infty)=\infty$, so the
interval $I$ is non-empty.

When $b(1-\rho)=2$ or $b(1+\rho)=2$ this result is still valid. Since
in such cases $L_-(-\infty)=-\frac{\alpha}{2}$ and
$L_+(+\infty)=\frac{\alpha}{2}$ respectively, then $L_-$ is negative in
$[-\infty,l^*[$ while $L_+$ is positive in $]l^*,+\infty]$ for $\alpha>0$.
Otherwise if $\alpha=0$, $\sup_{l < l^*}L_-(l) = L_-(-\infty) = 0$ and
$\inf_{l > l^*}L_+(l)=L_+(+\infty)=0$ respectively.

	We have proven the following:

	\begin{lemma}[SVI parameters fulfilling Fukasawa necessary no arbitrage conditions: case $\alpha \geq 0$]\label{LemmaaPositive}
Assume (A1). For every $(\alpha,b,\rho)$ with $\alpha\geq 0$:
\begin{itemize}
\item under (B1), the interval $I_{\alpha,b,\rho}$ is non-empty and contains $0$;
\item under (B2),
\begin{itemize}
\item if $\alpha> 0$, the interval $I_{\alpha,b,\rho}$ is non-empty and contains $0$;
\item if $\alpha=0$, the interval $I_{0,b,\rho}$ is non-empty and has $0$ as left boundary;
\end{itemize}
\item under (B3),
\begin{itemize}
\item if $\alpha> 0$, the interval $I_{\alpha,b,\rho}$ is non-empty and contains $0$;
\item if $\alpha=0$, the interval $I_{0,b,\rho}$ is non-empty and has $0$ as right boundary;
\end{itemize}
\item under (B4),
\begin{itemize}
\item if $\alpha> 0$, the interval $I_{\alpha,2,0}$ is non-empty and contains $0$;
\item if $\alpha=0$, the interval $I_{0,2,0}$ is empty.
\end{itemize}

\end{itemize}

\end{lemma}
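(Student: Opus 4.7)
My plan is to assemble the lemma by combining the interior strict inequalities on $L_\pm$ already established in the preceding discussion with a case-by-case computation of the asymptotic values. The key facts to reuse are: for $\alpha \geq 0$ one has $N(l) - l N'(l) = \alpha + b/\sqrt{l^2+1} \geq 0$, and the sign equivalence $L_\mp(l) \lessgtr 0 \iff 2(N - lN') + N(2 \pm N') > 0$ on the appropriate branch, together with the bound $N'(l) \in (b(\rho-1), b(\rho+1))$, which yields $2 + N' > 2 + b(\rho-1) \geq 0$ and $2 - N' > 2 - b(\rho+1) \geq 0$ under the standing assumptions. This already gives $L_-(l) < 0$ strictly for all $l < l^*$ and $L_+(l) > 0$ strictly for all $l > l^*$, whenever $\alpha \geq 0$.

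Given these interior inequalities, the only remaining work is to evaluate $L_-(-\infty)$ and $L_+(+\infty)$ in each of the four boundary regimes (B1)--(B4), using the asymptotic expansions $L_-(l) \sim 2\alpha\bigl(\tfrac{1}{b(\rho-1)}+\tfrac14\bigr) + \tfrac{2+b(\rho-1)}{2} l$ and $L_+(l) \sim 2\alpha\bigl(\tfrac{1}{b(\rho+1)}-\tfrac14\bigr) + \tfrac{2-b(\rho+1)}{2} l$, and then to locate $0$ with respect to $\sup_{l<l^*} L_-$ and $\inf_{l>l^*} L_+$.

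Under (B1), the linear terms dominate, so $L_-(-\infty) = -\infty$ and $L_+(+\infty) = +\infty$; combined with $L_-(l^{*-})=-\infty$ and $L_+(l^{*+})=+\infty$, the sup is attained at some interior $l_- \in ]-\infty,l^*[$ with $L_-(l_-) < 0$, and symmetrically $\inf L_+ > 0$, so $0 \in I_{\alpha,b,\rho}$. Under (B2), the linear coefficient of $L_-$ vanishes and $L_-(-\infty) = -\alpha/2$: if $\alpha > 0$ then $\sup L_- \leq -\alpha/2 < 0$ and still $\inf L_+ > 0$, so $0$ lies strictly inside; if $\alpha = 0$, the strict interior inequality $L_- < 0$ together with $L_-(-\infty)=0$ forces $\sup L_- = 0$, so $0$ is exactly the left boundary of $I$, while $\inf L_+ > 0$ remains. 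Case (B3) is symmetric (swap roles of $L_-$ and $L_+$). Under (B4) both asymptotic limits equal $\mp\alpha/2$: for $\alpha > 0$ the interval is $]{-\alpha/2},\alpha/2[$ up to interior suprema/infima pushing closer to $0$ but not beyond (still containing $0$), while for $\alpha = 0$ both endpoints collapse to $0$ and $I_{0,2,0}$ is empty.

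The only subtlety worth flagging is the boundary cases with $\alpha = 0$: one must argue that the limit value $0$ at $\pm\infty$ is genuinely the supremum/infimum and is not exceeded in the interior; but this is exactly what the strict interior sign of $L_\pm$ guarantees, so there is no real obstacle. Everything else is bookkeeping from the formulae for $L_\pm$ and the classifications (B1)--(B4).
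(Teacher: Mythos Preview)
Your approach is essentially the same as the paper's: establish the strict interior sign inequalities $L_-(l)<0$ on $(-\infty,l^*)$ and $L_+(l)>0$ on $(l^*,\infty)$ for $\alpha\geq 0$ via the decomposition $2(N-lN')+N(2\pm N')>0$, and then read off the boundary behaviour of $\sup L_-$ and $\inf L_+$ from the asymptotics of $L_\pm$ at $\pm\infty$ in each of the cases (B1)--(B4).

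One small overreach: under (B2) with $\alpha>0$ you assert $\sup_{l<l^*} L_- \leq -\alpha/2$. This happens to be true (in fact an equality), but it is not justified by what you have at this point; the interior strict inequality $L_-(l)<0$ does not by itself prevent some interior value from lying in $(-\alpha/2,0)$. What you actually need, and what does follow, is the weaker $\sup L_- < 0$: since $L_-$ is continuous, strictly negative on $(-\infty,l^*)$, and has limits $-\alpha/2<0$ and $-\infty$ at the two ends, any maximizing sequence is trapped in a compact subinterval and converges to a point where $L_-$ would vanish, a contradiction. The same remark applies to your treatment of (B4) for $\alpha>0$. With this adjustment the argument is complete and matches the paper's.
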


	\subsubsection[Computation of the interval for
	mu under (B1)]{Computation of the interval for
$\mu$ under (B1)}\label{computation-of-the-interval-for-n}

	We tackle now the computation of the interval for $\mu$ in the general
case where $\alpha$ is not necessarily positive, which is less
straightforward. In this section we will assume (B1); we
deal with the other cases in the dedicated \cref{the-case-b1-rho2-or-b1rho2}.

	We consider the function $L_-$ for $l< l^*$ and $L_+$ for $l> l^*$. We
have $L'_\pm(l) = 1\mp\frac{N'}{2}-\frac{2NN''}{N'^2}$ and it follows
that $L'_-(l_-)=L'_+(l_+)=0$.

	The corresponding equations in $l$ are:
\begin{equation*}
1\mp\frac{b}{2}\biggl(\rho +\frac{l}{\sqrt{l^2+1}}\biggr)-\frac{2(\alpha+b(\rho l+\sqrt{l^2+1}))}{b\sqrt{l^2+1}(\rho\sqrt{l^2+1} + l)^2} = 0.
\end{equation*}
Actually, we don't need to solve these equations. Accordingly, we set:
\begin{equation}\label{eqgmexplicit}
g_{\pm(b, \rho)}(l) = \Bigl(\rho\sqrt{l^2+1} + l\Bigr)^2\biggl(\sqrt{l^2+1}\biggl(\frac{1}{2}\mp\frac{b \rho}{4}\biggr) \mp \frac{bl}{4} \biggl)-\Bigl(\rho l+\sqrt{l^2+1}\Bigr)
\end{equation}
where $g_{+(b, \rho)}$ is defined on $[l^*, \infty[$ and $g_{-(b, \rho)}$ on $]-\infty, l^*]$. Then
$L'_\pm(l)=0$ iff $g_{\pm(b, \rho)}(l)=\frac{\alpha}{b}$.

	The following technical result turns to be a key one:

	\begin{proposition}\label{PropLmgm}

Assume (A1) and (B1), and let $g_{\pm(b, \rho)}$ defined by \cref{eqgmexplicit}. Then $g_{\pm(b, \rho)}(l^*)=-\sqrt{1-\rho^2}$, $g_{\pm(b, \rho)}(\pm \infty)=\infty$, and $g_{\pm(b, \rho)}$ is either monotonous or with a single minimum. Let $s_\pm=l^*$ if $g_{\pm(b, \rho)}$ is monotonous and $s_\pm \neq l^*$ such that $g_{\pm(b, \rho)}(s_\pm)=-\sqrt{1-\rho^2}$ otherwise. Then:
\begin{itemize}
\item $L_-(x;bg_{-(b, \rho)}(x))=\sup_{l< l^*}L_-(l;b g_{-(b, \rho)}(x))$ for any $x< s_-$, $L_-(x;bg_{-(b, \rho)}(x)) \to -\infty$ when $x\to -\infty$, 
and the function $L_-(x;bg_{-(b, \rho)}(x))$ is increasing iff $g_{-(b, \rho)}$ is decreasing;
\item $L_+(x;bg_{+(b, \rho)}(x))=\inf_{l>l^*}L_+(l;b g_{+(b, \rho)}(x))$ for any $x>s_+$, $L_+(x;bg_{+(b, \rho)}(x)) \to +\infty$ when $x\to +\infty$, and the function $L_+(x;bg_{+(b, \rho)}(x))$ is increasing iff $g_{+(b, \rho)}$ is increasing.
\end{itemize}
\end{proposition}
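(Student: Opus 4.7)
The plan is to reduce everything to the elementary factorization
\[
  g_\pm'(l) \;=\; \frac{3A^{2}}{4u}\bigl(2l \mp b B(l)\bigr),
\]
where $u := \sqrt{l^{2}+1}$, $A := \rho u + l$, $B := \rho l + u$. This drops out of a direct differentiation of~\eqref{eqgmexplicit} using $A'=B/u$, $B'=A/u$, $u'=l/u$, together with the algebraic simplification $Bu + Al/2 = \tfrac{3}{2}lA + 1$. The boundary values of $g_\pm$ are then immediate: $A(l^{*})=0$ and $B(l^{*})=\sqrt{1-\rho^{2}}$ give $g_\pm(l^{*}) = -\sqrt{1-\rho^{2}}$, while a leading-order expansion at $\pm\infty$ yields a cubic with coefficient $(1\pm\rho)^{2}(2 - b(1\pm\rho))/4$, strictly positive under (B1), so $g_\pm(\pm\infty) = +\infty$.

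For the dichotomy, write $h_\pm(l) := 2l \mp bB(l)$ and compute $h_\pm'(l) = 2 \mp b\rho \mp bl/u$. Since $l/u$ is strictly monotone from $-\rho$ at $l^{*}$ to $\pm 1$ at $\pm\infty$, the derivative $h_\pm'$ takes values between $2 - b(1\pm\rho)$ and $2$, both strictly positive under (B1). Hence $h_\pm$ is strictly increasing on its half-line and has at most one zero $m_\pm$. When $h_\pm$ has no zero, $g_\pm'$ keeps a constant sign and $g_\pm$ is strictly monotone; when $m_\pm$ exists, the sign change of $h_\pm$ forces $g_\pm$ strictly decreasing then strictly increasing, giving a single minimum with $g_\pm(m_\pm) < -\sqrt{1-\rho^{2}}$. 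In this second case the intermediate value theorem on the strictly monotone branch past $m_\pm$ supplies the unique $s_\pm \neq l^{*}$ with $g_\pm(s_\pm) = -\sqrt{1-\rho^{2}}$.

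For the sup/inf statements, note that under (A1) one has $\alpha/b > -\sqrt{1-\rho^{2}}$, so the restriction $x > s_+$ (resp.\ $x < s_-$) places us on the strictly monotone branch where $g_\pm > -\sqrt{1-\rho^{2}}$, and there $g_\pm(l) = \alpha/b$ has a unique solution. Consequently $x$ is the unique critical point of $L_\pm(\,\cdot\,; bg_\pm(x))$ on the respective half-line. Combined with the endpoint behavior already recorded in the excerpt, namely $L_+(l^{*+}) = L_+(+\infty) = +\infty$ and $L_-(l^{*-}) = L_-(-\infty) = -\infty$ (using (A1) at $l^*$ and (B1) at the infinite end), this unique critical point is forced to be the global infimum of $L_+$ resp.\ the global supremum of $L_-$. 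The far-end divergence $L_\pm(x; bg_\pm(x)) \to \pm\infty$ as $x \to \pm\infty$ follows by direct substitution: $\alpha = bg_\pm(x)$ grows like $|x|^{3}$ while $N'(x) \to b(\rho \pm 1)$, so the dominant term in $L_\pm(x;\alpha)$ is $\alpha\bigl(2/(b(\rho\pm 1)) \mp 1/2\bigr)$, whose sign under (B1) produces the announced limits.

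Finally, for the monotonicity equivalence, the vanishing of $\partial_l L_\pm$ at the critical point combined with the chain rule yields
\[
  \frac{d}{dx}\,L_\pm(x; b g_\pm(x)) \;=\; b\,g_\pm'(x)\,\partial_\alpha L_\pm(x; b g_\pm(x)).
\]
One computes $\partial_\alpha L_\pm = 2/N'(x) \mp 1/2$, and under (B1) the bound $|N'(x)| < b(1\pm\rho) < 2$ on the relevant domain gives $|1/N'(x)| > 1/2$, so $\partial_\alpha L_- < 0$ and $\partial_\alpha L_+ > 0$. This delivers both equivalences at once. The main technical obstacle is really the identification of the right auxiliary function $h_\pm$; once the simplification $g_\pm' = (3A^2/4u)\,h_\pm$ is in hand and the strict positivity of $h_\pm'$ is tied to (B1), the rest of the proposition collapses to bookkeeping.
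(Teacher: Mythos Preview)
Your proof is correct and follows essentially the same strategy as the paper: factor $g_\pm'$ as a positive prefactor times a monotone auxiliary function, deduce the dichotomy (monotone vs.\ single minimum), and then use the chain rule together with the sign of $\partial_\alpha L_\pm$ for the final monotonicity equivalence. The two arguments are mathematically identical, but the packaging differs.

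The paper first rewrites $g_\pm$ in the implicit form $g_\pm(l)=\frac{1}{b}\bigl(\frac{N'^2}{2N''}(1\mp\frac{N'}{2})-lN'-N''(l^2+1)\bigr)$ and differentiates to obtain $g_\pm'=\frac{N'^2}{4b}\bigl(\mp 3+\frac{N'''}{N''^2}(N'\mp 2)\bigr)$, then substitutes the explicit $N',N'',N'''$ and carries out a case analysis on the sign of $\rho$ to locate the zero $m_\pm$ of the second factor. Your substitution $A=\rho u+l$, $B=\rho l+u$ leads directly to the cleaner factorization $g_\pm'=\frac{3A^2}{4u}(2l\mp bB)$, and the observation that $h_\pm'=2\mp b\rho\mp bl/u$ stays in the interval $[\,2-b(1\pm\rho),\,2\,]$ under (B1) disposes of the monotonicity of $h_\pm$ in one line, with no need to split on the sign of $\rho$. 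The two second factors are in fact the same expression: $2l\mp bB = l(2\mp b\rho)\mp bu$, which is exactly what the paper arrives at after simplification.

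What the paper gains from its detour through the implicit form is the explicit formula $L_\pm(x;bg_\pm(x))=\frac{N'(x)^2}{N''(x)}\bigl(1\mp\frac{N'(x)}{2}\bigr)\bigl(\frac{1}{N'(x)}\mp\frac{1}{4}\bigr)-x$, which it uses later (e.g.\ to compute $F(b,0)$ in closed form). You bypass this and instead argue the asymptotics $L_\pm(x;bg_\pm(x))\to\pm\infty$ by observing that $\alpha=bg_\pm(x)$ is cubic in $x$ while $N'(x)$ tends to a nonzero constant; this is perfectly valid for the proposition at hand. One cosmetic slip: for $g_-$ at $-\infty$ the leading coefficient of $l^3$ is actually $-(1-\rho)^2(2-b(1-\rho))/4$, not $+(1-\rho)^2(2-b(1-\rho))/4$; since $l^3\to-\infty$ the conclusion $g_-(-\infty)=+\infty$ is unaffected.
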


	The proof is provided in \Cref{proof-of-proposition}. We display a typical plot of $g_{-(b, \rho)}$ and $g_{+(b, \rho)}$ in \Cref{FigNoArbSVI_76_0}.
   
 \begin{figure}
    	\centering
    	\includegraphics[width=.9\textwidth]{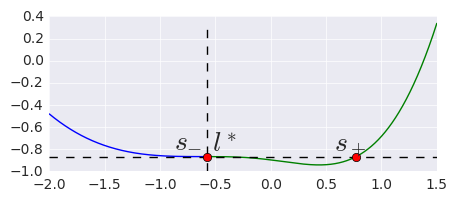}
    	\caption{Typical plot of the functions $g_{\pm(b, \rho)}$ with $b=\frac{2}{3}$ and $\rho=\frac{1}{2}$. The horizontal dotted line corresponds to the level $-b\sqrt{1-\rho^2}$.}
    	\label{FigNoArbSVI_76_0}
 \end{figure}
    
	This proposition has in turn two important corollaries:

\begin{corollary}
Assume (A1) and (B1). There is a unique $(l_-, l_+)$ such that $l_-< s_-,\ l_+>s_+$ and $\alpha = b g_{-(b, \rho)}(l_-)= b g_{+(b, \rho)}(l_+)$. The interval $I_{\alpha,b,\rho}$ is non-empty iff $L_-(l_-;\alpha)< L_+(l_+;\alpha)$.
In this case the distance between $L_+(l_+;\alpha)$ and $L_-(l_-;\alpha)$ increases with $\alpha$.
\end{corollary}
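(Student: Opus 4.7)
The plan is to use Proposition~\ref{PropLmgm} essentially as a black box, reducing the corollary to properties of the scalar functions $g_{\pm(b,\rho)}$ and of the composite curves $x \mapsto L_\pm(x; b g_{\pm(b,\rho)}(x))$.

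First I would establish existence and uniqueness of $(l_-, l_+)$. The proposition gives $g_{-(b,\rho)}(l^*) = g_{-(b,\rho)}(s_-) = -\sqrt{1-\rho^2}$ and $g_{-(b,\rho)}(-\infty) = +\infty$; combined with the dichotomy ``monotonous or single minimum'', this forces $g_{-(b,\rho)}$ to be strictly decreasing on $]-\infty, s_-[$ in both subcases (if $g_{-(b,\rho)}$ is monotonous then $s_- = l^*$ and the claim is immediate; otherwise the minimizer $m_-$ lies strictly between $s_-$ and $l^*$, and $g_{-(b,\rho)}$ is strictly decreasing on $]-\infty, m_-[ \supset ]-\infty, s_-[$). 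Hence $b g_{-(b,\rho)}$ is a strictly decreasing bijection from $]-\infty, s_-[$ onto $]-b\sqrt{1-\rho^2}, +\infty[$. Under (A1) we have $\alpha + b \sqrt{1-\rho^2} > 0$, so there is exactly one $l_- \in ]-\infty, s_-[$ with $b g_{-(b,\rho)}(l_-) = \alpha$; the argument for $l_+$ is symmetric. Applying the proposition at $x = l_-$ then yields $\sup_{l < l^*} L_-(l;\alpha) = L_-(l_-;\alpha)$, and symmetrically $\inf_{l > l^*} L_+(l;\alpha) = L_+(l_+;\alpha)$, so the non-emptiness equivalence $L_-(l_-;\alpha) < L_+(l_+;\alpha)$ is immediate from the definition of $I_{\alpha,b,\rho}$.

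For the monotonicity of the distance, I would use that, by the proposition, $x \mapsto L_-(x; b g_{-(b,\rho)}(x))$ is increasing on $]-\infty, s_-[$ (because $g_{-(b,\rho)}$ is decreasing there); composing with the decreasing map $\alpha \mapsto l_-(\alpha)$ established above shows that $\alpha \mapsto L_-(l_-;\alpha)$ is decreasing in $\alpha$. Symmetrically, $\alpha \mapsto L_+(l_+;\alpha)$ is increasing, so $L_+(l_+;\alpha) - L_-(l_-;\alpha)$ strictly increases with $\alpha$. The heavy lifting has already been done in Proposition~\ref{PropLmgm}; the only delicate point here is to check that $g_{-(b,\rho)}$ is strictly decreasing on the \emph{entire} range $]-\infty, s_-[$ (rather than just near $s_-$), but this follows cleanly from the monotonicity/single-minimum dichotomy combined with the two known values of $g_{-(b,\rho)}$ at $l^*$ and $s_-$, and the analogous point for $g_{+(b,\rho)}$ is handled symmetrically.
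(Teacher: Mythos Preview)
Your proof is correct and follows essentially the same route as the paper's primary argument: bijectivity of $g_{\pm(b,\rho)}$ on the relevant half-lines gives existence and uniqueness of $(l_-,l_+)$, and the monotonicity of the composite maps $x\mapsto L_\pm(x;bg_{\pm(b,\rho)}(x))$ from Proposition~\ref{PropLmgm} yields the monotonicity of the gap in $\alpha$. The paper additionally offers a second, more quantitative argument computing $\frac{d}{d\alpha}(L_+(l_+;\alpha)-L_-(l_-;\alpha)) = \partial_\alpha L_+(l_+;\alpha)-\partial_\alpha L_-(l_-;\alpha) = 2\bigl(\tfrac{1}{N'(l_+)}-\tfrac{1}{N'(l_-)}-\tfrac{1}{2}\bigr) > 1$ directly, but your composition argument is already sufficient for the statement as claimed.
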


	\begin{proof}
This follows directly from the previous analysis: increasing $\alpha$, the functions $g_{\pm(b, \rho)}$ increase so the corresponding $l_-< s_-$ decreases while $l_+>s_+$ increases. In turn, the function $L_+(l_+;bg_{+(b, \rho)}(l_+))$ increases and the function $L_-(l_-;bg_{-(b, \rho)}(l_-))$ decreases. Note that $l_-< s_-$ and $l_+>s_+$ because $\alpha>-b\sqrt{1-\rho^2}$ from (A1).
We can also use the fact that
\begin{align*}
\frac{d}{d\alpha}(L_+(l_+;\alpha)-L_-(l_-;\alpha)) &= L'_+(l_+)\frac{d}{d\alpha}l_+ - L'_-(l_-)\frac{d}{d\alpha}l_- + \partial_{\alpha}L_+(l_+;\alpha) - \partial_{\alpha}L_-(l_-;\alpha) \\ &= \partial_{\alpha}L_+(l_+;\alpha) - \partial_{\alpha}L_-(l_-;\alpha)
\end{align*}
where $l_+$ and $l_-$ are functions of $\alpha$ given by $\alpha = bg_{+(b, \rho)}(l_+)=bg_{-(b, \rho)}(l_-)$. Now, the RHS is equal to $2\bigl(\frac{1}{N'(l_+)}-\frac{1}{N'(l_-)}-\frac{1}{2}\bigr)$ and since $\frac{1}{N'(l_+)}>\frac{1}{2}$ and $-\frac{1}{N'(l_-)}>\frac{1}{2}$, the previous quantity is greater than $1$.
\end{proof}

Let $F(b, \rho)$ denote the unique value of $\alpha$ such that $L_+(l_+;\alpha)=L_-(l_-;\alpha)$ if there exists such a value for $\alpha>-b\sqrt{1-\rho^2}$, otherwise let $F(b,\rho)=-b\sqrt{1-\rho^2}$. Then  $L_+(l_+;\alpha)>L_-(l_-;\alpha)$ if and only if $\alpha > F(b,\rho)$. In other words we define $F(b,\rho)$ as:
\begin{equation*}
	F(b,\rho):= \inf\{\alpha\mid L_+(l_+;\alpha)>L_-(l_-;\alpha)\} \lor -b \sqrt{1-\rho^2}
\end{equation*}
under the assumptions (A1) and (B1). We name $F$ the Fukasawa threshold of SVI.

	\Cref{FigNoArbSVI_86_0} shows:

\begin{itemize}
\itemsep1pt\parskip0pt\parsep0pt
\item
  in blue the function $l_-\to L_-(l_-;bg_{-(b, \rho)}(l_-))$ with $l_-< s_-$ where $s_-$ is the point at which $g_{-(b, \rho)}$ is equal to $-\sqrt{1-\rho^2}$;
\item
  in green the corresponding value of
  $l_-\to L_+(l_+(bg_{-(b, \rho)}(l_-));bg_{-(b, \rho)}(l_-))$.
\end{itemize}

	\begin{figure}
		\centering
		\includegraphics[width=.9\textwidth]{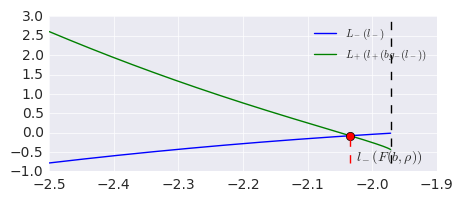}
		\caption{Plot of $L_-(l_-)$ and $L_+(l_+(bg_{-(b, \rho)}(l_-)))$ as functions of $l_-$, with $b=\frac{25}{21}$ and $\rho=\frac{2}{5}$. The vertical dotted line corresponds to the level $-b\sqrt{1-\rho^2}$.}
		\label{FigNoArbSVI_86_0}
	\end{figure}

	The following corollary gives an easy criterion of existence of a
Butterfly arbitrage:

	\begin{corollary} Assume  (A1) and (B1).
If $\alpha\leq F(b,\rho)$ then for every choice of $\mu$ and $\sigma$, the SVI model does not satisfy the Fukasawa conditions.
\end{corollary}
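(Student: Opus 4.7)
The plan is to derive this directly from the setup established in the preceding Lemma (the characterization of the Fukasawa conditions as $\mu \in I_{\alpha,b,\rho}$) and the preceding Corollary (the strict monotonicity of the gap $L_+(l_+;\alpha)-L_-(l_-;\alpha)$ in $\alpha$). The key observation I would emphasize first is that the Fukasawa conditions reduce to the nonemptiness of the open interval $I_{\alpha,b,\rho} = \,]L_-(l_-;\alpha), L_+(l_+;\alpha)[\,$ and that, crucially, $G_1$ (hence $G_{1\pm}$ and thus the Fukasawa conditions) depend only on $\alpha, b, \rho, \mu$: the parameter $\sigma$ does not enter at all. This observation is what lets us conclude independently of $\sigma$.

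Next I would split on whether $F(b,\rho) = -b\sqrt{1-\rho^2}$ or $F(b,\rho) > -b\sqrt{1-\rho^2}$. In the first case, the hypothesis $\alpha \leq F(b,\rho) = -b\sqrt{1-\rho^2}$ is incompatible with assumption (A1) (which requires $\alpha + b\sqrt{1-\rho^2} > 0$), so the statement is vacuously true. In the second case, the definition $F(b,\rho) = \inf\{\alpha \mid L_+(l_+;\alpha) > L_-(l_-;\alpha)\}$ combined with the strict monotonicity given by the previous Corollary implies that for every $\alpha < F(b,\rho)$ one has $L_+(l_+;\alpha) < L_-(l_-;\alpha)$, and at $\alpha = F(b,\rho)$ the gap vanishes by continuity, giving $L_+(l_+;\alpha) \leq L_-(l_-;\alpha)$.

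In either sub-case of the second scenario, the open interval $I_{\alpha,b,\rho}$ is empty, so no admissible $\mu$ can simultaneously make $G_{1+}$ and $G_{1-}$ strictly positive. Since $\sigma$ plays no role in $G_1$, this conclusion persists for every $\sigma \geq 0$, which is exactly the claim.

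I do not expect a real obstacle here; the result is a bookkeeping consequence of the two preceding Lemma/Corollary. The only subtlety worth stating cleanly is the boundary behaviour at $\alpha = F(b,\rho)$, where one should note that closing the inequality (from $<$ to $\leq$) is enough to kill the open interval $I_{\alpha,b,\rho}$, and that continuity of $\alpha \mapsto L_+(l_+;\alpha) - L_-(l_-;\alpha)$ (immediate from the smooth dependence of $l_\pm$ on $\alpha$ via the implicit equation $\alpha = b g_{\pm(b,\rho)}(l_\pm)$ established in the preceding Proposition) justifies the equality at the threshold.
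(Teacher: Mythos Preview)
Your proposal is correct and follows the same route as the paper, which in fact states this corollary without proof as an immediate consequence of the definition of $F(b,\rho)$ (namely, $L_+(l_+;\alpha)>L_-(l_-;\alpha)$ iff $\alpha>F(b,\rho)$) and of the earlier Lemma characterizing the Fukasawa conditions as $\mu\in I_{\alpha,b,\rho}$. Your additional care in treating the vacuous boundary case $F(b,\rho)=-b\sqrt{1-\rho^2}$ and in noting explicitly that $G_1$ is independent of $\sigma$ is welcome but not strictly needed beyond what the paper already records.
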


	\subsubsection{Study of the Fukasawa
threshold under (B1)}\label{study-of-the-fukasawa-threshold}

	In the previous section we showed that the difference $L_+(l_+;bg_{+(b, \rho)}(l_+))-L_-(l_-;bg_{-(b, \rho)}(l_-))$ goes
to infinity when increasing
$\alpha = bg_{+(b, \rho)}(l_+)=bg_{-(b, \rho)}(l_-)$ to infinity, so there
exists $\bar{\alpha}$ such that the interval for $\mu$ in non-empty; from
the previous corollaries for each $\alpha>\bar{\alpha}$ the interval for $\mu$ is also in non-empty.
Decreasing $\alpha$, we could bump into two situations:

\begin{itemize}
\itemsep1pt\parskip0pt\parsep0pt
\item
  $\alpha$ reaches the value $F(b,\rho)>-b\sqrt{1-\rho^2}$ for which
  $L_+(l_+;F(b,\rho))=L_-(l_-;F(b,\rho))$;
\item
  $\alpha$ reaches the value $F(b,\rho)=-b\sqrt{1-\rho^2}$. In such case
  $l_\pm=s_\pm$.
\end{itemize}

Our simulations suggest that the first scenario always occurs.

	Could we prove this? In this respect we can observe the following: it is
equivalent to prove that
$L_+(s_+;-b\sqrt{1-\rho^2})< L_-(s_-;-b\sqrt{1-\rho^2})$.

	If $s_+=l^*$ then $L_+(s_+;-b\sqrt{1-\rho^2}) = -l^*$ and the function
$L_+(l_+;bg_{+(b, \rho)}(l_+))$ is increasing. It follows that the
function $L_-(l_-;bg_{-(b, \rho)}(l_-))$ cannot be increasing and
$s_-< l^*$. We should show that $L_-(s_-;-b\sqrt{1-\rho^2})>-l^*$.

When $s_-=l^*$ then $L_-(s_-;-b\sqrt{1-\rho^2}) = -l^*$ and the function
$L_-(l_-;bg_{-(b, \rho)}(l_-))$ is increasing. Again, the function
$L_+(l_+;bg_{+(b, \rho)}(l_+))$ cannot be increasing so $s_+> l^*$.
In this case we should prove that $L_+(s_+;-b\sqrt{1-\rho^2})<-l^*$.

In the final case when both $g_{\pm(b, \rho)}$ have a minimum, it is enough to prove
$L_-(s_-;-b\sqrt{1-\rho^2})>-l^*$ and $L_+(s_+;-b\sqrt{1-\rho^2})<-l^*$.

So to sum up, it would remain to prove that when $g_{-(b, \rho)}$ (or $g_{+(b, \rho)}$) has a minimum, it holds
$L_-(s_-;-b\sqrt{1-\rho^2})>-l^*$ (or $L_+(s_+;-b\sqrt{1-\rho^2})<-l^*$)
to obtain the result in each case. We did not manage to conclude along
those lines though.

	\begin{remark} We don't know whether $F(b,\rho)>-b\sqrt{1-\rho^2}$ but we conjecture it. Indeed we prove in \Cref{computation-of-fb0} that there is a closed formula for $F(b,0)$ which satisfies $F(b,0)>-b$; the statement $F(b,\rho)>-b\sqrt{1-\rho^2}$ can be also assessed numerically.
\end{remark}

	\subsubsection{Symmetries}\label{symmetries}

	We can exploit the symmetry property of $N$ with respect to $\rho$ in
order to restrict the required computations to the function $L_-$ only.

	Indeed
$N(l;\alpha,b,\rho) = N(-l;\alpha,b,-\rho), N'(l;b,\rho) = -N'(-l;b,-\rho)$ and
$N''(l;b)=N''(-l;b)$. This brings to the consideration that
\[L_-(l;\alpha,b,\rho) = -L_+(-l;\alpha,b,-\rho),\quad L_+(l;\alpha,b,\rho) = -L_-(-l;\alpha,b,-\rho),\]
so that
\begin{align*}
\inf_{l>l^*(\rho)}L_+(l;\alpha,b,\rho) &= -\sup_{l>l^*(\rho)}L_-(-l;\alpha,b,-\rho) \\&= -\sup_{l<-l^*(\rho)}L_-(l;\alpha,b,-\rho) \\&= -\sup_{l< l^*(-\rho)}L_-(l;\alpha,b,-\rho)
\end{align*}
so $L_+(l_+(\alpha,b,\rho);\alpha,b,\rho) = -L_-(l_-(\alpha,b,-\rho);\alpha,b,-\rho)$.

	Note that $l_+(\alpha,b,\rho)$ is the unique $l>l^*(\rho)$ such that
$L'_+(l;\alpha,b,\rho)=0$ while $l_-(\alpha,b,-\rho)$ is the unique
$l< l^*(-\rho)$ such that $L'_-(l;\alpha,b,-\rho)=0$. Since
$L'_+(l;\alpha,b,\rho)=L'_-(-l;\alpha,b,-\rho)$ and
$-l_-(\alpha,b,-\rho)>-l^*(-\rho)=l^*(\rho)$, then
$l_+(\alpha,b,\rho)=-l_-(\alpha,b,-\rho)$.

	\begin{lemma}
Assume (A1) and (B1). Then:
\begin{itemize}
\item $L_+(l_+(\alpha,b,\rho);\alpha,b,\rho) = -L_-(l_-(\alpha,b,-\rho);\alpha,b,-\rho)$;
\item $l_+(\alpha,b,\rho)=-l_-(\alpha,b,-\rho)$;
\item $I_{\alpha,b,\rho} = \bigl]L_-(l_-(\alpha,b,\rho);\alpha,b,\rho), -L_-(l_-(\alpha,b,-\rho);\alpha,b,-\rho)\bigr[$.
\end{itemize}

\end{lemma}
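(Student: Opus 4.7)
The proof plan is to treat this lemma as a direct consequence of the symmetry discussion immediately preceding it, organized into three short arguments. Virtually all the work is done by the identity
\[
L_-(l;\alpha,b,\rho) = -L_+(-l;\alpha,b,-\rho),
\]
which I would establish first as the central lemma. This follows by plugging the three observations $N(l;\alpha,b,\rho)=N(-l;\alpha,b,-\rho)$, $N'(l;b,\rho)=-N'(-l;b,-\rho)$ into the definition \cref{eqLmp} of $L_\pm$: substituting $(l,\rho)\mapsto(-l,-\rho)$ in $L_+(l;\alpha,b,\rho)=2N(l)\bigl(1/N'(l)-1/4\bigr)-l$ flips the sign of $1/N'$ and of the trailing $-l$ term, and comparing with $L_-(l;\alpha,b,\rho)=2N(l)\bigl(1/N'(l)+1/4\bigr)-l$ produces the claimed identity.

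Next, for the second bullet, I differentiate the identity in $l$ (or equivalently apply it to the derivative formula $L'_\pm(l)=1\mp N'/2-2NN''/N'^2$, using $N''(l;b)=N''(-l;b)$) to conclude $L'_+(l;\alpha,b,\rho)=L'_-(-l;\alpha,b,-\rho)$. Since $l^*(-\rho)=-l^*(\rho)$, the change of variable $l\mapsto -l$ is a bijection between $\{l>l^*(\rho)\}$ and $\{l<l^*(-\rho)\}$; the unique critical point of $L_+$ on the former, which is $l_+(\alpha,b,\rho)$, is therefore mapped to the unique critical point of $L_-$ on the latter, which is $l_-(\alpha,b,-\rho)$. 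This gives $l_+(\alpha,b,\rho)=-l_-(\alpha,b,-\rho)$.

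For the first bullet, I evaluate the central identity at $l=l_-(\alpha,b,-\rho)$ (with parameters $(\alpha,b,-\rho)$): the left side is $L_-(l_-(\alpha,b,-\rho);\alpha,b,-\rho)$, and the right side, using the second bullet, becomes $-L_+(l_+(\alpha,b,\rho);\alpha,b,\rho)$. Rearranging yields the first claim. Equivalently, one can take $\sup_{l<l^*(\rho)}$ of the identity to obtain $\sup_{l<l^*(\rho)}L_-(l;\alpha,b,\rho)=-\inf_{m>l^*(-\rho)}L_+(m;\alpha,b,-\rho)$ after the substitution $m=-l$, which by definition of $l_\pm$ is the desired equality.

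Finally, the third bullet is just the rewriting of the interval $I_{\alpha,b,\rho}=\bigl]\sup_{l<l^*}L_-(l),\inf_{l>l^*}L_+(l)\bigr[=\bigl]L_-(l_-(\alpha,b,\rho);\alpha,b,\rho),L_+(l_+(\alpha,b,\rho);\alpha,b,\rho)\bigr[$, where the right endpoint is replaced via the first bullet. I expect no real obstacle: the only place requiring care is tracking the change of domain $l<l^*(\rho)\Leftrightarrow -l>l^*(-\rho)$ so that a $\sup$ on one side becomes an $\inf$ on the other after the sign flip; everything else is substitution.
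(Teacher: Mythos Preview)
Your proposal is correct and follows essentially the same route as the paper: establish the identity $L_-(l;\alpha,b,\rho)=-L_+(-l;\alpha,b,-\rho)$ from the symmetries of $N$ and $N'$, convert $\inf L_+$ over $\{l>l^*(\rho)\}$ into $-\sup L_-$ over $\{l<l^*(-\rho)\}$ via the substitution $l\mapsto -l$ and $l^*(-\rho)=-l^*(\rho)$, and match critical points through $L'_+(l;\alpha,b,\rho)=L'_-(-l;\alpha,b,-\rho)$. The only cosmetic difference is the order in which you handle the first two bullets.
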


	From the above equations we also have
$g_{+(b, \rho)}(l) = g_{-(b, -\rho)}(-l)$ so with easy arguments one gets
$s_+(b,\rho) = -s_-(b,-\rho)$.

	\subsubsection{The cases (B2), (B3) and (B4)}\label{the-case-b1-rho2-or-b1rho2}

	Assume (B2) or (B4). Using the same definitions and following
the proof of \Cref{PropLmgm}, we obtain that $g_{-(b, \rho)}(l)$ is
increasing. Now since $g_{-(b, \rho)}$ is increasing on $]-\infty, l^*]$
and since $g_{-(b, \rho)}(l^*)=-\sqrt{1-\rho^2}$, it follows that there
is no solution to the equation $g_{-(b, \rho)}(l_-)=\frac{\alpha}b$. In this
case so, the supremum of $L_-$ is attained at $-\infty$ and it is
$-\frac{\alpha}{2}$. Under (B3) or (B4), for symmetrical reasons $L_+$ attains its infimum $\frac{\alpha}2$ at $\infty$.

Under (B2), $L_+$ reaches its infimum in $]l^*,+\infty[$ while under (B3), $L_-$ reaches its supremum in $]-\infty,l^*[$. Finally under (B4),
$I_{\alpha,2,0}=\bigl]-\frac{\alpha}{2},\frac{\alpha}{2}\bigr[$.

We can extend the definition of the Fukasawa threshold to the cases (B2), (B3) and (B4):
\begin{itemize}
	\item under (B2), $F(b, \rho)$ denotes the unique value of $\alpha$ such that $L_-(l_-(\alpha,b,-\rho);\alpha,b,-\rho)=\frac{\alpha}{2}$ if there exists such a value for $\alpha>-b\sqrt{1-\rho^2}$, otherwise $F(b,\rho)=-b\sqrt{1-\rho^2}$: $F(b,\rho):= \inf\{\alpha\mid L_+(l_+;\alpha)>-\frac{\alpha}{2}\} \lor -b \sqrt{1-\rho^2}$;
	\item under (B3), $F(b, \rho)$ denotes the unique value of $\alpha$ such that $L_-(l_-(\alpha,b,\rho);\alpha,b,\rho)=\frac{\alpha}{2}$ if there exists such a value for $\alpha>-b\sqrt{1-\rho^2}$, otherwise $F(b,\rho)=-b\sqrt{1-\rho^2}$: $F(b,\rho):= \inf\{\alpha\mid \frac{\alpha}{2}>L_-(l_-;\alpha)\} \lor -b \sqrt{1-\rho^2}$;
	\item under (B4), $F(2,0):=0$.
\end{itemize}
From \Cref{LemmaaPositive}, under cases (B2) and (B3) it holds $F(b,\rho)< 0$.

	\subsection{Conclusion}\label{sectionconclusion}

	We can now state the full characterization of the Fukasawa necessary no
arbitrage conditions for SVI:

	\begin{theorem}[SVI parameters $(\alpha,b,\rho,\mu,\sigma)$ fulfilling Fukasawa necessary no arbitrage conditions]\label{TheoFuk}
Assume (A1). Then:
\begin{itemize}
\item under (B1), $F(b,\rho)< 0$ and the interval $I_{\alpha,b,\rho} = \bigl]L_-(l_-(\alpha,b,\rho);\alpha,b,\rho),\linebreak -L_-(l_-(\alpha,b,-\rho);\alpha,b,-\rho)\bigr[$ is non-empty iff $\alpha> F(b,\rho)$;

\item under (B2) (resp. (B3)), $F(b,\rho)< 0$ and the interval $I_{\alpha,b,\rho} = \bigl]-\frac{\alpha}{2}, -L_-(l_-(\alpha,b,-\rho);\alpha,b,-\rho)\bigr[$ (resp. $I_{\alpha,b,\rho} = \bigl]L_-(l_-(\alpha,b,\rho);\alpha,b,\rho),\frac{\alpha}{2}\bigr[$) is non-empty iff $\alpha> F(b,\rho)$;

\item under (B4), the interval $I_{\alpha,2,0}=\bigl]-\frac{\alpha}{2},\frac{\alpha}{2}\bigr[$ is non-empty iff $\alpha>F(2,0)=0$.
\end{itemize}

In every case, the Fukasawa conditions are satisfied iff $\mu\in I_{\alpha,b,\rho}$.
\end{theorem}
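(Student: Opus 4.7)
The plan is to assemble the theorem by combining the characterization of $I_{\alpha,b,\rho} = ]\sup_{l<l^*} L_-(l), \inf_{l>l^*} L_+(l)[$ established in the Lemma of Section~\ref{the-conditions-as-an-interval-for-n} with the case-by-case identification of these extrema provided in Sections~\ref{computation-of-the-interval-for-n}, \ref{symmetries} and~\ref{the-case-b1-rho2-or-b1rho2}. The final equivalence between the Fukasawa conditions and $\mu \in I_{\alpha,b,\rho}$ is nothing but that Lemma.

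First, I would write out the explicit form of $I_{\alpha,b,\rho}$ in each case. Under (B1), \Cref{PropLmgm} gives $\sup L_- = L_-(l_-;\alpha,b,\rho)$ at the interior point $l_-$ characterized by $\alpha = b g_{-(b,\rho)}(l_-)$, and the symmetry identity $\inf_{l>l^*}L_+(l;\alpha,b,\rho) = -\sup_{l<l^*(-\rho)} L_-(l;\alpha,b,-\rho) = -L_-(l_-(\alpha,b,-\rho);\alpha,b,-\rho)$ from Section~\ref{symmetries} handles the upper end; this yields the stated expression for $I_{\alpha,b,\rho}$. Under (B2), (B3), (B4), Section~\ref{the-case-b1-rho2-or-b1rho2} shows that the monotonicity of $g_{-(b,\rho)}$ (or $g_{+(b,\rho)}$, or both) forces the relevant extremum to be attained at $\pm\infty$ with value $\mp \alpha/2$, while the non-degenerate endpoint, when present, is given by the same formula as in (B1). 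Substituting produces the intervals appearing in each bullet of the theorem.

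Next, for the non-emptiness criterion $\alpha > F(b,\rho)$, I would invoke the Corollary of Section~\ref{computation-of-the-interval-for-n}: the difference between the right and left endpoints of $I_{\alpha,b,\rho}$ is strictly increasing in $\alpha$ on $]-b\sqrt{1-\rho^2},\infty[$ (its derivative is bounded below by~$1$, as computed there), and from Section~\ref{study-of-the-fukasawa-threshold} it tends to $+\infty$ as $\alpha\to\infty$. Hence $\{\alpha>-b\sqrt{1-\rho^2}: I_{\alpha,b,\rho}\neq\emptyset\}$ is an upper half-line, which is precisely the definition of $F(b,\rho)$. Under (B2) and (B3) the same monotonicity argument applies with the non-interior endpoint replaced by $\mp\alpha/2$, and one verifies that the length remains strictly monotone in $\alpha$ by combining $\partial_\alpha L_\pm(l_\pm;\alpha) = \pm\bigl(1/N'(l_\pm)-1/2\bigr)$ with $\partial_\alpha(\pm \alpha/2) = \pm 1/2$. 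Under (B4), the interval $]-\alpha/2,\alpha/2[$ is trivially non-empty iff $\alpha>0$, giving $F(2,0)=0$ by definition.

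Finally, for the strict inequality $F(b,\rho)<0$ under (B1), (B2), (B3), I would apply \Cref{LemmaaPositive}: under (B1) the interval at $\alpha=0$ contains $0$ in its interior, hence has strictly positive length, and the strict monotonicity of that length in $\alpha$ combined with continuity of the endpoints extends non-emptiness to some $\alpha<0$, which forces $F(b,\rho)<0$; under (B2) and (B3), the same lemma gives a non-empty interval at $\alpha=0$ (with $0$ as a boundary point), and the same continuity/monotonicity argument again yields $F(b,\rho)<0$. The main obstacle is the case-splitting bookkeeping, in particular ensuring in (B2) and (B3) that the endpoint $\pm\alpha/2$ fits cleanly into the monotonicity argument developed for interior extrema, which is why one has to track separately the explicit $\alpha$-derivatives of both types of endpoint.
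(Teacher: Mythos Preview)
Your proposal is correct and follows the same route as the paper: assemble the explicit endpoints of $I_{\alpha,b,\rho}$ from \Cref{PropLmgm}, the symmetry lemma, and the boundary analysis of Section~\ref{the-case-b1-rho2-or-b1rho2}, then use the monotonicity-in-$\alpha$ Corollary together with \Cref{LemmaaPositive} to get $F(b,\rho)<0$. One small slip: the formula $\partial_\alpha L_\pm(l_\pm;\alpha)=\pm(1/N'(l_\pm)-1/2)$ should read $\partial_\alpha L_\pm(l_\pm;\alpha)=2/N'(l_\pm)\mp 1/2$ (cf.\ the proof of \Cref{PropLmgm}), but the sign conclusions you draw from it are unaffected.
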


	Except for $F(2,0)$, the result $F(b,\rho)$ negative holds even in the
case $F(b,\rho)>-b\sqrt{1-\rho^2}$ because we have proven that for
$\alpha\geq0$ the interval for $\mu$ is always non-empty. In terms of the usual SVI parameters the conditions translate into
$\frac{a}{\sigma}>F(b,\rho)$ and
$\frac{m}{\sigma}\in I_{\frac{a}{\sigma},b,\rho}$.

	Is the existence of the Fukasawa threshold surprising? We would say no:
indeed the values of $\alpha$ too close to the lower bound
$-b \sqrt{1-\rho^2}$ correspond to values of the smile too close to
zero, and this will lead to an arbitrage as discussed in \cref{smiles-vanishing-at-some-point}, so
that one even expects that $F(b,\rho)>-b\sqrt{1-\rho^2}$.

	The explanation of the range constraint for $\mu$ is less intuitive to us;
we would say that it results from the geometrical constraint that the
Fukasawa conditions impose on the shape of SVI, as follows from our
computations.

	\subsection{Numerics}\label{numerics}

	\paragraph{$F(b,\rho)$ at a fixed $b$}\label{fbrho-at-a-fixed-b}

	We plot in \Cref{FigNoArbSVI_113_0} the Fukasawa threshold at fixed $b=\frac{1}{2}$ as a function of $\rho$.

	\begin{figure}[H]
		\centering
		\includegraphics[width=.8\textwidth]{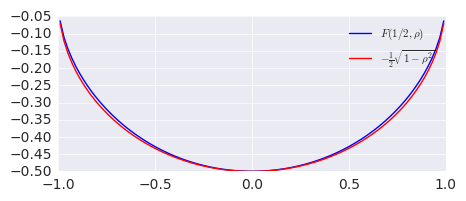}
		\caption{Plot of $F(b,\rho)$ as a function of $\rho$, with $b=\frac{1}{2}$.}
		\label{FigNoArbSVI_113_0}
	\end{figure}
    
	The graph is symmetric with respect to $\rho$ because $F(b,\rho)$ is the
value of $\alpha$ such that the difference between
$L_+(l_+(\alpha,b,\rho);\alpha,b,\rho)$ and $L_-(l_-(\alpha,b,\rho);\alpha,b,\rho)$ is null, where
$bg_{\pm(b, \rho)}(l_\pm(\alpha,b,\rho)) = \alpha$. But\linebreak
$L_+(l_+(\alpha,b,\rho);\alpha,b,\rho) = -L_-(l_-(\alpha,b,-\rho);\alpha,b,-\rho)$ so we
look for $\alpha$ such that
\[L_-(l_-(\alpha,b,-\rho);\alpha,b,-\rho)+L_-(l_-(\alpha,b,\rho);\alpha,b,\rho)=0\]
and this is symmetric with respect to $\rho$.

	The red line is the level $\alpha=-b\sqrt{1-\rho^2}$ and it again confirms
our hypothesis that $F(b,\rho)>-b\sqrt{1-\rho^2}$.

	From the previous graph, it seems that $F(b,\rho)$ has monotonicity of the same sign as $\rho$.
	\newpage

	\paragraph{$F(b,\rho)$ at fixed $\rho$ as a function of
$b$}\label{fbrho-at-fixed-rho-as-a-function-of-b}

	In \Cref{FigNoArbSVI_119_0} we plot the Fukasawa threshold at fixed $\rho=\frac{1}{5}$ as a function of $b$.

	\begin{figure}[H]
		\centering
		\includegraphics[width=.8\textwidth]{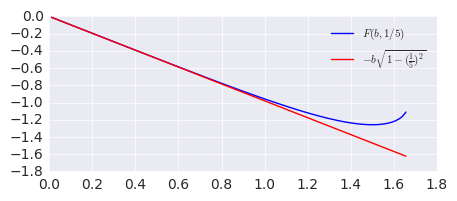}
		\caption{Plot of $F(b,\rho)$ as a function of $b$, with $b=\frac{1}{5}$.}
		\label{FigNoArbSVI_119_0}
	\end{figure}

	\paragraph{$L_-(l_-(F(b,\rho),b,\rho);F(b,\rho),b,\rho)$ and
$L_-(l_-(F(b,\rho),b,-\rho);F(b,\rho),b,-\rho)$ as functions of
$\rho$}\label{lux5f-lux5f-fbrhobrhofbrhobrho-and-lux5f-lux5f-fbrhob-rhofbrhob-rho-as-functions-of-rho}

	The following \Cref{FigNoArbSVI_124_0} shows in blue the function
$L_-(l_-(F(b,\rho),b,\rho);F(b,\rho),b,\rho)$ (denoted for brevity as
$L_-(F(b,\rho),\rho)$) with respect to $\rho$ while in green the
function \linebreak$L_-(l_-(F(b,\rho),b,-\rho);F(b,\rho),b,-\rho)$ (or
$L_-(F(b,\rho),-\rho)$) with respect to $\rho$. The fixed value for $b$
is $\frac{3}{5}$.

	\begin{figure}[H]
		\centering
		\includegraphics[width=.8\textwidth]{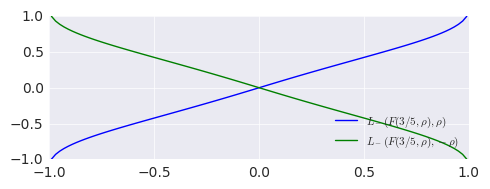}
		\caption{Plot of $L_-(l_-(F(b,\rho),\rho)$ and $L_-(F(b,\rho),-\rho)$ as functions of $\rho$, with $b=\frac{3}{5}$.}
		\label{FigNoArbSVI_124_0}
	\end{figure}
    
	This graph also shows in blue the value of the two bounds for $\mu$ when they shrink to one point. Note that for $\rho=0$ this is $0$ for every
$b$, while it depends on $b$ for the other values of $\rho$.

	The function $\rho\to L_-(l_-(F(b,\rho),b,\rho);F(b,\rho),b,\rho)$ is
odd due to the symmetry of $\rho\to F(b,\rho)$. Furthermore, from the
graph it seems that $\rho$ and
$L_-(l_-(F(b,\rho),b,\rho);F(b,\rho),b,\rho)$ have the same sign.

	\paragraph{$L_-(l_-(F(b,\rho),b,\rho);F(b,\rho),b,\rho)$ as a function
of $b$}\label{lux5f-lux5f-fbrhobrhofbrhobrho-as-a-function-of-b}

	\Cref{FigNoArbSVI_130_0} shows the function
\linebreak$L_-(l_-(F(b,\rho),b,\rho);F(b,\rho),b,\rho)$ (denoted as
$L_-(F(b,\rho),\rho)$) with respect to $b$. Here we fix
$\rho=\frac{1}{2}$.

	\begin{figure}[H]
		\centering
		\includegraphics[width=.8\textwidth]{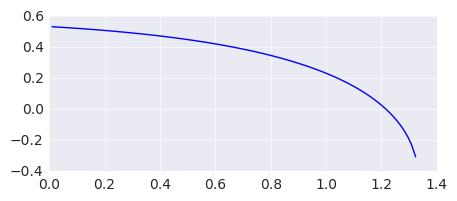}
		\caption{Plot of $L_-(l_-(F(b,\rho),\rho)$ as a function of $b$, with $\rho=\frac{1}{2}$.}
		\label{FigNoArbSVI_130_0}
	\end{figure}
    
	\subsection{Algorithm}\label{firstalgorithm}

	We can \emph{parametrize} the normalized SVI parameters satisfying the
Fukasawa conditions as follows:
	\begin{enumerate}
\def\labelenumi{\arabic{enumi}.}
\itemsep1pt\parskip0pt\parsep0pt
\item
  choose $\rho \in ]-1,1[$ and $b$ positive such that $b(1\pm\rho)\leq 2$ by choosing $b' \in ]0,1]$
  and setting $b=b'\frac{2}{1+|\rho|}$;
\item
  compute numerically $F(b,\rho)$, and parametrize $\alpha$ by setting
  $\alpha=F(b,\rho)+u$ for positive $u$;
\item
  compute numerically $(L_-, L_+)$ for this value of $u$,
  and parametrize $\mu$ by setting
  $\mu = \frac{(1+q)}2 L_+ +\frac{(1-q)}2L_-$ for $q \in ]-1,1[$.
\end{enumerate}

	The values in point 3 can be computed using the same functions employed to
find $F(b,\rho)$, indeed it is sufficient to evaluate
$L_-(l_-(\alpha,b,\rho);\alpha,b,\rho)$ and $-L_-(l_-(\alpha,b,-\rho);\alpha,b,-\rho)$.

	If we are interested only by a test that a given parameter satisfies the
Fukasawa conditions, we have the corresponding waterfall of failure
possibilities that we define as follows:
\begin{enumerate}
\def\labelenumi{\arabic{enumi}.}
\itemsep1pt\parskip0pt\parsep0pt
\item
  $b(1-\rho)>2$ or $b(1+\rho) > 2$: \emph{failure of type 1};
  otherwise:
\item
  $\alpha\leq F(b,\rho)$: \emph{failure of type 2}; otherwise:
\item
  $\mu$ not in $I_{\alpha,b,\rho}$: \emph{failure of type 3}.
\end{enumerate}

	\subsubsection{Application to Axel Vogt
parameters}\label{application-to-axel-vogt-parameters}

	The so-called \emph{Axel Vogt example} (cf \cite{gatheral2014arbitrage})
became the archetypal example of a smile with arbitrage. The $SVI$
parameters are\linebreak
$(a,b,\rho,m,\sigma) = (-0.041,0.1331,0.3060,0.3586,0.4153)$, and they
are known to lead to a Butterfly arbitrage. Do they satisfy the Fukasawa
conditions?

	No, since the respective value for $\mu$ is $0.86347$, while its arbitrage
free interval is $\linebreak]-0.72407, 0.82939[$.

	The Fukasawa conditions are not satisfied because of $\mu$. However
$\alpha=-0.09872$ and $F(b,\rho)=-0.12663$, so $\alpha> F(b,\rho)$ and the
interval for $\mu$ is non-empty. The problem here is due to $\mu$, which is
too large: we face a \emph{failure of type 3}.

\newpage
 \section{No arbitrage domain for SVI}\label{no-arbitrage-domain-for-svi}

	\subsection{Behaviour of the function
$G_2$}\label{behaviour-of-the-function-gux5f2}

	Recall that the function $G_2$ is defined as
\begin{equation}\label{eqG2}
G_2(l) := N''(l)-\frac{N'(l)^2}{2N(l)}
\end{equation}
and that it depends only on $(\alpha,b,\rho)$. As discussed in \cref{mainArg}, $G_2$ is positively proportional to the second derivative of the
\emph{volatility} smile, meaning of $\sqrt{SVI(k)}$. Since the
\emph{variance} smile is convex and asymptotically linear on both sides,
it is expected that $G_2$ will be asymptotically \emph{negative}, while
it is positive around the minimum of the smile. In particular it is
expected that it will have zeros, on both sides of the minimum of the
smile.

	\subsubsection{The zeros of $G_2$}\label{the-zeros-of-gux5f2}

	In this section we prove the following:

	\begin{lemma}[Zeros of $G_2$]
$G_2$ has exactly two zeros $l_1,\ l_2$ which satisfy $l_1 < l^* \wedge 0$ and $l_2 > l^* \lor 0$ such that $G_2(l) > 0 \iff l \in ]l_1,l_2[$. Furthermore, $G_2(l)\to0^-$ for $l\to\pm\infty$.
\end{lemma}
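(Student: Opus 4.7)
The plan is to reduce the analysis to the auxiliary function $H(l) := 2 N(l) G_2(l) = 2 N(l) N''(l) - N'(l)^2$, which has the same sign as $G_2$ since $N > 0$ under (A1). The key algebraic observation is that the cross-term in the derivative of $H$ telescopes, yielding
\[ H'(l) = 2 N(l) N'''(l), \qquad N'''(l) = -\frac{3 b l}{(l^2+1)^{5/2}}.\]
Since $N > 0$ and $b > 0$, the sign of $H'(l)$ is opposite to that of $l$, so $H$ is strictly increasing on $(-\infty, 0)$, strictly decreasing on $(0, \infty)$, and attains a unique global maximum at $l = 0$.

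Next I would compute $H$ at the origin and at infinity. At the origin,
\[ H(0) = 2b(\alpha+b) - b^2 \rho^2 = 2b\alpha + b^2(2-\rho^2).\]
Under (A1) one has $\alpha > -b\sqrt{1-\rho^2}$, hence $H(0) > b^2\bigl(2 - \rho^2 - 2\sqrt{1-\rho^2}\bigr) = b^2\bigl(1-\sqrt{1-\rho^2}\bigr)^2 \geq 0$, strictly positive thanks to the strict inequality in (A1). At infinity, $N'(l) \to b(\rho \pm 1)$ while $2N(l)N''(l)$ is of order $1/l^2$, so
\[ \lim_{l\to\pm\infty} H(l) = -b^2(1 \pm \rho)^2 < 0\]
(strict since $|\rho|<1$ and $b>0$ under (A1)).

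Combining the monotonicity with these three sign evaluations, $H$ crosses zero exactly once on each side of the origin: there exist unique $l_1 < 0 < l_2$ with $H(l_1) = H(l_2) = 0$ and $H(l) > 0 \iff l \in (l_1, l_2)$. Dividing by $2N > 0$ transfers both statements to $G_2$, and moreover shows $G_2(l) \sim -b^2(1\pm\rho)^2/(2N(l)) \to 0^-$ as $l \to \pm\infty$, since $N(l) \to +\infty$. For the refined location, note that $G_2(l^*) = N''(l^*) > 0$ because $N'(l^*) = 0$, so $l^* \in (l_1, l_2)$; together with $0 \in (l_1, l_2)$ this yields $l_1 < l^* \wedge 0$ and $l_2 > l^* \vee 0$.

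I do not foresee a real obstacle: the only non-routine step is spotting the identity $H' = 2 N N'''$, which trivializes the monotonicity analysis by reducing it to the sign of $l$. The rest is an elementary sign discussion, with the small square-completion $2 - \rho^2 - 2\sqrt{1-\rho^2} = \bigl(1-\sqrt{1-\rho^2}\bigr)^2$ being the only mildly non-obvious ingredient needed to certify positivity of $H(0)$ right up to the boundary of the domain (A1).
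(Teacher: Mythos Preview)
Your proof is correct and is in fact cleaner than the paper's own argument. Both hinge on the same underlying observation that $N'''(l) = -3bl/(l^2+1)^{5/2}$ controls the sign structure, but you exploit it more efficiently. The paper works directly with $G_2$ and the relation $G_2' = N''' - (N'/N)\,G_2$; this only simplifies \emph{at the zeros} of $G_2$, where $G_2' = N'''$, so the paper must run a case-by-case elimination (the first zero cannot be positive, cannot be zero by a second-derivative check, a hypothetical third zero leads to a sign contradiction, etc.). By passing to $H = 2N G_2$ you obtain $H' = 2N N'''$ \emph{everywhere}, which makes $H$ unimodal with maximum at $0$; together with the sign computations $H(0)>0$ and $H(\pm\infty)<0$ this yields exactly two zeros in one stroke, with no casework. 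The small square-completion $2-\rho^2-2\sqrt{1-\rho^2} = (1-\sqrt{1-\rho^2})^2$ that you use to push positivity of $H(0)$ all the way to the boundary of (A1) is a nice touch; the paper instead infers $G_2(0)>0$ only a posteriori from the location of the zeros.
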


	\begin{proof}
For $l\to\pm\infty$ we have that the first addend behaves as $bl^{-3}$ while the second as $-\frac{b(\rho\pm 1)}{2}l^{-1}$, so $G_2$ behaves as $-\frac{b(\rho\pm 1)}{2}l^{-1}$. This means that $G_2$ goes to $0^-$ as $l\to\pm\infty$.
Since $G_2(l^*) = N''(l^*) > 0$ and $G_2$ is continuous, then there exists an interval $]l_1,l_2[$ containing $l^*$ such that for every $l$ in this interval, $G_2$ is positive. It follows that $G_2$ has at least two zeros.
Deriving, we find the following interesting relationship between $G_2'$ and $G_2$:
$$G_2'(l) = N'''(l)-\frac{N'(l)}{N(l)}G_2(l).$$

We will prove now that this relationship entails that the first zero of $G_2$ is negative. Indeed if $l_1 > 0$ is the first zero of $G_2$, since
\begin{equation}\label{eqN3}
N'''(l) = -\frac{3bl}{(l^2+1)^\frac{5}{2}}
\end{equation}
we have $G_2'(l_1) < 0$, which is not possible because $G_2(l)$ is negative for every $l< l_1$. If $l_1=0$, then $G_2'(0)=0$ but $0$ cannot be a point of local maximum for $G_2$, otherwise there would be a following zero $l_2>0$. In such case, $G_2'(l_2) < 0$ for \cref{eqN3} but having $G_2$ so far negative, it should be increasing in $l_2$. Then $0$ could at most be an inflection point. However,
$$G_2''(l) = N^{iv}(l)-\frac{N'(l)N'''(l)}{N(l)} + \biggl(2\frac{N'(l)^2}{N(l)^2} - \frac{N''(l)}{N(l)}\biggr)G_2(l)$$
so $G_2''(0) = N^{iv}(0) = -3b$, which is negative since $b>0$. Therefore, the first zero $l_1$ of $G_2$ is necessarily negative.
With similar arguments we obtain that the next zero $l_2$ must be non-negative. Suppose $l_2=0$. Then, as before, $G_2'(0)=0$ and $G_2''(0) = -3b < 0$, so it would be a point of local maximum, which is not possible. Then $l_2$ must be positive.

Moreover, there cannot be other zeros for $G_2$. Indeed, suppose $l_3$ was the first zero after $l_2$. Then $l_3> 0$ and from \cref{eqN3} it should be $G_2'(l_2) < 0$ but this cannot be true since $G_2$ is negative in the left neighborhood of $l_3$.

This leads to the conclusion that $G_2$ has exactly two zeros, one positive and the other one negative.
As a consequence, $G_2(0)=b\bigl(1-\frac{b\rho^2}{2(\alpha+b)}\bigr) > 0$. This could have been obtained also from the fact that $\alpha+b\sqrt{1-\rho^2}\geq 0$ due to the positivity of $N$.

Then, we find that $G_2>0$ in $[l^*,0]$ when $\rho\geq 0$ or in $[0,l^*]$ when $\rho < 0$. 
\end{proof}

	Substituting the explicit formulas for $N, N'$ and $N''$ in
\cref{eqG2}, we obtain
\begin{equation*}
G_2(l) = \frac{b}{(l^2+1)^{\frac{3}{2}}} - \frac{b^2(\rho\sqrt{l^2+1} + l)^2}{2(l^2+1)(\alpha+b(\rho l+\sqrt{l^2+1}))}
\end{equation*}
which leads to the remark that
$\frac{G_2(l)}{b}=\tilde{G}_{2,\frac{\alpha}b, \rho}(l)$ where
$\tilde{G}_{2,x, \rho}(l):=\frac{1}{(l^2+1)^{\frac{3}{2}}} - \frac{(\rho\sqrt{l^2+1} + l)^2}{2(l^2+1)(x+(\rho l+\sqrt{l^2+1}))}$,
which reduces in general the study of $G_2$ to the study of a 2-parameters
function.

In order to find the zeros of $G_2$ we should solve
$2\frac{\alpha}b + b(2-l^2)\sqrt{l^2+1} - \rho^2(l^2+1)^{\frac{3}{2}} - 2\rho l^3 = 0$
or equivalently
$2\frac{\alpha}b - 2\rho l^3 = ((\rho^2+1)l^2+\rho^2-2)\sqrt{l^2+1}$.

Note that when $\rho=0$ this equation is explicitly solvable.

	\subsubsection{Plot of a typical $G_2$
function}\label{plot-of-a-typical-gux5f2-function.}

	We plot in \Cref{FigNoArbSVI_161_0} the function $G_2$ for the parameters $\alpha=\frac{1}{10}$,
$b=\frac{1}{2}$, $\rho=-\frac{3}{10}$.

	\begin{figure}[H]
		\centering
		\includegraphics[width=.8\textwidth]{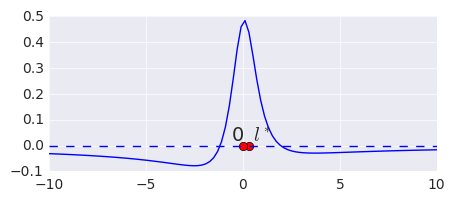}
		\caption{Plot of $G_2$ with $\alpha=\frac{1}{10}$,
			$b=\frac{1}{2}$ and $\rho=-\frac{3}{10}$.}
		\label{FigNoArbSVI_161_0}
	\end{figure}

	\subsection{The final condition on
$\sigma$ under (A1)}\label{the-final-condition-on-sigma}

	We recall that the non-negativity of the Durrleman condition in the case of
SVI amounts to the non-negativity of the function
\begin{equation}\label{eqGG1G2}
G(l) = G_1(l)+\frac{1}{2 \sigma} G_2(l)
\end{equation}
where $G_1$ and $G_2$ do not depend on $\sigma$.

	We have proven that:

\begin{enumerate}
\def\labelenumi{\arabic{enumi}.}
\item
  for every $(\alpha,b,\rho)$ with $b(1\pm\rho)\leq 2$ and
  $\alpha> F(b,\rho)$, where $F(b,\rho)\leq 0$, there exists an interval for
  $\mu$ such that $G_1$ is positive on $\mathbb{R}$ (in fact each factor
  of $G_1$ is positive  on $\mathbb{R}$). Moreover it is necessary that
  the conditions on $(\alpha, b, \rho)$ hold and that $\mu$ lies in this
  interval under no arbitrage.
\item
  for every $(\alpha,b,\rho)$ with $b(1\pm\rho)\leq 2$ there exists
  an interval $]l_1,l_2[$ containing $0$ and $l^*$ such that $G_2(l)>0$
  iff $l\in]l_1,l_2[$.
\end{enumerate}

	We insist here again on the key property brought by the Fukasawa
condition that it is \emph{necessary} that $G_1$ is positive. This
structures a lot the picture; previous to Fukasawa's observation, people
investigating the positivity of $G$ could not assume this. Another
consequence is that under the Fukasawa conditions of \cref{gux5f1-and-the-fukasawa-necessary-condition-for-no-butterfly-arbitrage}, $G$ is
granted to be positive on $[l_1, l_2]$.

	The last step is to exploit the fact that thanks to our
re-parametrization, the dependency of $G$ in $\sigma$ is very simple.
Let $\_$ stand for a fixed set of parameters $(\alpha,b,\rho,\mu)$ fulfilling
the Fukasawa conditions. Then given the fact that $G_2(l)< 0$ for some
$l$, it follows that if $G$ is non-negative everywhere for
$(\_,\sigma)$, then $G$ is also non-negative everywhere for every
$(\_,\tau)$ with $\tau > \sigma$. As a consequence, there exists a
function $\_ \to \sigma^*(\_)$ such that $G$ is non-negative everywhere
for $(\_,\tau)$ iff $\tau \geq \sigma^*(\_)$.

	The value of $\sigma^*$ can be obtained asking the RHS of
\cref{eqGG1G2} to be non-negative, which holds for
$\sigma\geq\sup_l-\frac{G_2(l)}{2G_1(l)}$. Then
\begin{equation*}
\sigma^*(\alpha,b,\rho,\mu) := \sup_{l< l_1 \lor l>l_2}-\frac{G_2(l)}{2G_1(l)}.
\end{equation*}

Since $G_2(l_1^-)=G_2(l_2^+)=0^-$ and $G_2(\pm\infty)=0^-$, the
maximum of $-\frac{G_2(l)}{2G_1(l)}$ for $l< l_1 \lor l>l_2$ is reached
for a finite real value in $]-\infty,l_1]\cup[l_2,+\infty[$.

	We have therefore proven the following:

	\begin{theorem}[Necessary and sufficient no Butterfly arbitrage conditions for SVI under (A1)]\label{FinalTheo}
No Butterfly arbitrage in SVI entails that $G_1$ is positive, which requires $b(1\pm\rho) \leq 2$. Under this condition:
\begin{itemize}
\item each of the factors of the function $G_1$ is positive on $\mathbb R$  if and only if $\alpha> F(b,\rho)$ and $\mu\in I_{\alpha,b,\rho}$;
\item for such $\mu$'s, calling $l_1< 0< l_2$ the only zeros of $G_2$, the function $G$ is positive in $]l_1,l_2[$ for every $\sigma\geq 0$ and the function $G$ is non-negative on $\mathbb{R}$ if and only if $\sigma\geq\sigma^*(\alpha,b,\rho,\mu)$.
\end{itemize}
\end{theorem}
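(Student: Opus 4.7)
The plan is to assemble the results proved in the preceding sections, using the key decomposition $G(l) = G_1(l) + \frac{1}{2\sigma} G_2(l)$ and the fact that $G_1, G_2$ do not depend on $\sigma$. By the criterion established in \cref{no-butterfly-arbitrage-criterion-for-svi}, no Butterfly arbitrage for SVI under (A1) is equivalent to $g(k) \geq 0$ for all $k$, equivalently $G(l) \geq 0$ for all $l$.

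For the first bullet, I would invoke the generalized Fukasawa result from \cref{gux5f1-and-the-fukasawa-necessary-condition-for-no-butterfly-arbitrage}: if the Call price function is convex, then $f_1' > 0$ and $f_2' > 0$. By the identification $f'_{1,\pm}(\sigma(l+\mu)) = G_{1\pm}(l)/\sqrt{\sigma N(l)}$ derived in \cref{mainArg}, this is equivalent to $G_{1+}>0$ and $G_{1-}>0$. The Limits of $G_1$ lemma then forces $b(1\pm\rho)\leq 2$ (one of cases (B1)--(B4)), which is thus a necessary asymptotic condition. The full characterization of the couples $(\alpha,\mu)$ for which $G_{1+}, G_{1-} > 0$ on $\mathbb{R}$ is exactly \Cref{TheoFuk}, yielding $\alpha > F(b,\rho)$ and $\mu \in I_{\alpha,b,\rho}$.

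For the second bullet, I would use the Zeros of $G_2$ lemma: there exist unique zeros $l_1<0<l_2$ (with $l_1< l^*<l_2$) such that $G_2>0$ on $]l_1,l_2[$ and $G_2<0$ outside. Since by the first bullet $G_1 > 0$ everywhere, on $]l_1,l_2[$ both terms of $G = G_1 + \frac{1}{2\sigma}G_2$ are non-negative, so $G > 0$ there for every $\sigma \geq 0$. On the complement $\mathbb{R}\setminus]l_1,l_2[$ we have $G_2 \leq 0$ and $G_1 > 0$, and the inequality $G(l) \geq 0$ rearranges to $\sigma \geq -\frac{G_2(l)}{2G_1(l)}$. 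Taking the supremum over $l \in ]-\infty,l_1]\cup[l_2,+\infty[$ yields the stated threshold
\[
\sigma^*(\alpha,b,\rho,\mu) = \sup_{l< l_1 \lor l>l_2}-\frac{G_2(l)}{2G_1(l)},
\]
and the equivalence $G \geq 0$ on $\mathbb{R}$ iff $\sigma \geq \sigma^*$ follows because the supremum is attained at a finite point: $-G_2/(2G_1)$ vanishes at $l_1, l_2$ and at $\pm\infty$ (since $G_2 \to 0^-$ at infinity while $G_1$ tends to a positive limit), so the supremum is a maximum on a compact set.

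The bulk of this proof is bookkeeping of already-proved results; the only non-trivial point to address carefully is the well-posedness and finiteness of $\sigma^*$, i.e. that the sup is a finite maximum — which, as indicated, reduces to the asymptotic behaviour $G_2(\pm\infty)=0^-$ from the Zeros of $G_2$ lemma together with the positive limits of $G_1$ at $\pm\infty$ given by the Limits of $G_1$ lemma (both non-zero unless we lie in one of the boundary cases (B2)--(B4), which must be treated by inspection — in those cases one of the limits of $G_1$ vanishes but $G_2$ vanishes faster, so the ratio still tends to $0$). No new computation is required beyond this verification.
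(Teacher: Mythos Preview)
Your proposal is correct and follows essentially the same route as the paper: the theorem is stated immediately after the phrase ``We have therefore proven the following'', so its proof is precisely the assembly of Theorem~\ref{TheoFuk}, the Zeros of $G_2$ lemma, and the rearrangement $G\geq 0 \iff \sigma \geq -G_2/(2G_1)$ on the set where $G_2<0$, exactly as you outline.

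One small correction to your boundary-case remark: in cases (B2)--(B4) it is \emph{not} true that $G_2$ vanishes faster than $G_1$. For instance under (B3), as $l\to+\infty$ one has $G_{1+}(l)\sim (\alpha-2\mu)/(4l)$ (positive since $\mu<\alpha/2$ there), $G_{1-}(l)\to 1$, and $G_2(l)\sim -b(1+\rho)/(2l)=-1/l$, so $-G_2/(2G_1)\to 2/(\alpha-2\mu)$, a finite \emph{nonzero} limit. This does not damage the argument: the supremum $\sigma^*$ is still finite, and the equivalence $G\geq 0$ on $\mathbb R \iff \sigma\geq\sigma^*$ follows directly from the definition of supremum without needing it to be attained at a finite point. (The paper itself glosses over this boundary subtlety.)
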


	\subsubsection[Practical computation of sigma star]{Practical computation of
$\sigma^*$}\label{practical-computation-of-sigma}

	Computationally, it would be easier to implement an algorithm with
bounded intervals for $l$. It is enough to substitute $h=\frac{1}{l}$ to
obtain
\[\sigma^*(\alpha,b,\rho,\mu) := \sup_{\frac{1}{l_1} < h < \frac{1}{l_2}}-\frac{G_2(\frac{1}{h})}{2G_1(\frac{1}{h})}.\]

	For $h$ which goes to $0^\pm$, the function $G_2$ goes to $0^-$ while
$G_1$ is always positive under the Fukasawa conditions. So the function
$f\bigl(\frac{1}{h}\bigr)=-\frac{G_2(\frac{1}{h})}{2G_1(\frac{1}{h})}$
goes to $0^+$. This is a point of minimum for $f$ in the interval
$\bigl]\frac{1}{l_1},\frac{1}{l_2}\bigr[$ because here the function is
always positive.

To numerically compute $\sigma^*$ we can use an algorithm which finds
the maximum of $f$ in $\bigl]\frac{1}{l_1},0\bigr[$ and in
$\bigl]0,\frac{1}{l_2}\bigr[$ and then compares the two maxima.

	It can be shown that $f'\bigl(\frac{1}{h}\bigr)$ goes to
$\frac{4b(\rho-1)}{(2-b(\rho-1))(2+b(\rho-1))}< 0$ when $h$ goes to
$0^-$ while it goes to
$\frac{4b(\rho+1)}{(2-b(\rho+1))(2+b(\rho+1))}> 0$ when $h$ goes to
$0^+$. Furthermore, $f'\bigl(\frac{1}{l_1}\bigr)>0$ and
$f'\bigl(\frac{1}{l_2}\bigr)< 0$.

We plot in \Cref{FigNoArbSVI_183_0} the function $f\bigl(\frac{1}{h}\bigr)$ with
$b=\frac{1}{2}$, $\rho=-\frac{3}{10}$, $\alpha=\frac{1}{10}$ and
$\mu=\frac{1}{10}$.

	\begin{figure}
		\centering
		\includegraphics[width=.8\textwidth]{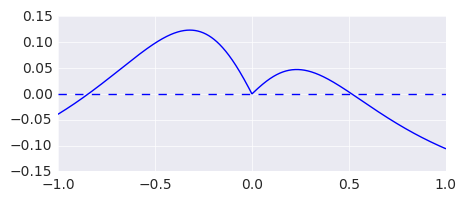}
		\caption{Plot of $f(\frac{1}{h})$ as a function of $h$, with $b=\frac{1}{2}$, $\rho=-\frac{3}{10}$, $\alpha=\frac{1}{10}$ and $\mu=\frac{1}{10}$.}
		\label{FigNoArbSVI_183_0}
	\end{figure}
    
	The function $f\bigl(\frac{1}{h}\bigr)$ seems to have always three
extrema: two points of maximum (one in each interval $\bigl]\frac{1}{l_1},0\bigr[$ and $\bigl]0,\frac{1}{l_2}\bigr[$) and one point of minimum at $0$. The sign of $\rho$ does not imply in which of the two intervals the maximum lies.

	For $\rho=0$ and $\mu=0$ the maxima have the same height, furthermore
the two points of maximum are symmetrical with respect to 0, this last
one is also the point of minimum. This follows from the fact that $G_2$
is symmetric for $\rho=0$ and that when $\mu=0$, also $G_1$ is symmetric.

	Note that we have not proven that there is a single maximum on each side
of $0$. So a strict implementation should take into account the
possibility that there are several ones, and use a global optimizer on
each side. We strongly conjecture that there is in fact a single maximum
on each side.

\subsection{Algorithm under (A1)}\label{algorithm}

We can now complete the algorithms stated for the Fukasawa conditions.
For the parametrization of the no arbitrage domain, we just need to add
the final step which specifies the range of $\sigma$:
\begin{enumerate}
	\def\labelenumi{\arabic{enumi}.}
	\itemsep1pt\parskip0pt\parsep0pt
	\item
	choose $\rho \in ]-1,1[$ and $b$ positive such that $b(1\pm\rho)\leq 2$ by choosing $b' \in ]0,1]$
	and setting $b=b'\frac{2}{1+|\rho|}$;
	\item
	compute numerically $F(b,\rho)$, and parametrize $\alpha$ by setting
	$\alpha=F(b,\rho)+u$ for positive $u$;
	\item
	compute numerically $(L_-, L_+)$ for this value of $u$,
	and parametrize $\mu$ by setting
	$\mu = \frac{(1+q)}2 L_+ +\frac{(1-q)}2L_-$ for $q \in ]-1,1[$;
	\item
	compute numerically $\sigma^*(\alpha, b, \rho, \mu)$, and parametrize $\sigma$ by setting $\sigma=\sigma^*+v$ where $v\geq0$.
\end{enumerate}

The main benefit of this parametrization is that it is eventually a
simple product of intervals:
\[(\rho, b', u, q, v) \in ]-1,1[ \times ]0,1] \times ]0, \infty[ \times ]-1,1[ \times [0, \infty[\]
and this is perfectly suitable to feed optimization algorithms working
with bounds, like the standard ones in the scipy.optimize scientific
library.

A drawback to keep in mind is that sampling this product sub-space in a
uniform way corresponds to a distorted sampling in the initial space.

There again, we can specify an algorithm which decides whether a SVI
parameter lies or not in the no arbitrage domain:
\begin{enumerate}
	\def\labelenumi{\arabic{enumi}.}
	\itemsep1pt\parskip0pt\parsep0pt
	\item
	$b(1-\rho) > 2$ or $b(1+\rho) > 2$: failure of type 1; otherwise:
	\item
	$\alpha\leq F(b,\rho)$: failure of type 2; otherwise:
	\item
	$\mu$ not in $I_{\alpha,b\rho}$: failure of type 3;
	otherwise:
	\item
	$\sigma<\sigma^*$: \emph{failure of type 4}.
\end{enumerate}

	\subsection{The monotonous case (A2) }\label{the-monotonous-case-rho-pm-1}

	In all the previous discussion, we have assumed $|\rho|< 1$ to avoid
singular cases in our computations. What happens when $|\rho| = 1$? We
discuss below the case $\rho=-1$, the case $\rho=1$ follows by symmetry.

In this case the SVI smile is (convex) decreasing, and reaches its
minimum $\alpha$ at infinity, so the domain of $\alpha$ is now $\alpha \geq 0$. Note
that the boundary value $0$ is allowed, unlike in the regular case,
because the implied volatility does not vanish at any finite strike. The
negative slope condition requires $b\leq 1$, and the positive
(rightmost) one is automatically fulfilled.

Regarding the Fukasawa conditions, the proofs in \cref{gux5f1-and-the-fukasawa-necessary-condition-for-no-butterfly-arbitrage} still hold
with the convention that $l^*=+\infty$ so that $N$ is decreasing. The
interval for $\mu$ becomes\linebreak
$I_{\alpha,b,-1}=]L_-(l_-(\alpha,b,-1);\alpha,b,-1),+\infty[$, so exactly equal to
$I_{\alpha,b,\rho}$ with the convention\linebreak $L_-(l_-(\alpha,b,1);\alpha,b,1)=-\infty$.
For $\alpha\geq0$, we have $L_-(l)< 0$ for every $l$ so
\linebreak$L_-(l_-(\alpha,b,-1);\alpha,b,-1)< 0$ also, and this interval always contains
$[0,\infty[$. We can then extend the definition of the Fukasawa threshold to the case $\rho=-1$, putting $F(b,-1)=0$. This implies that the
interval for $\mu$ is non-degenerate even when $\alpha=F(b,-1)=0$.

The function $G_2$ has only one negative zero $l_1$, above which it is
always positive with $G_2(+\infty)=0^+$ while $G_2(-\infty)=0^-$. So
$\sigma^*= \sup_{l< l_1}-\frac{G_2(l)}{2G_1(l)}$.

	\begin{theorem}[Necessary and sufficient no Butterfly arbitrage conditions for SVI, $\rho=-1$]
No Butterfly arbitrage in SVI entails that $G_1$ is positive, which requires $b\leq 1$ and $\alpha \geq 0$. Under these conditions:
\begin{itemize}
\item each of the factors of the function $G_1$ is positive on $\mathbb R$  if and only if $\mu>L_-(l_-;\alpha,b,-1)$;
\item for such $\mu$'s, calling $l_1< 0$ the only zero of $G_2$,  the function $G$ is positive on $]l_1,\infty[$ for every $\sigma\geq 0$ and the function $G$ is non-negative on $\mathbb{R}$ if and only if $\sigma\geq\sigma^*(\alpha,b,-1,\mu)$ where $\sigma^*(\alpha,b,-1,\mu) = \sup_{l< l_1}-\frac{G_2(l)}{2G_1(l)}$.
\end{itemize}

\end{theorem}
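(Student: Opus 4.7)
The plan is to adapt the proof of \Cref{FinalTheo} to the degenerate case $l^*=+\infty$ arising at $\rho=-1$, where $N'(l)=b(l/\sqrt{l^2+1}-1)$ is strictly negative on $\mathbb{R}$ so that $N$ is strictly decreasing. The overall architecture is unchanged: first derive the necessary slope and domain conditions $b\leq 1$ and $\alpha\geq 0$; then characterize positivity of both factors of $G_1$ through the $L_-$ analysis alone; then describe the zero set of $G_2$; finally recover $\sigma^*$ as the supremum of $-G_2/(2G_1)$ on the region where $G_2<0$.

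For the first step, the Limits of $G_1$ Lemma evaluated at $\rho=-1$ gives $G_1(+\infty)=1/4>0$ automatically, while $G_1(-\infty)=(1/2+b/2)(1/2-b/2)$, so positivity of both factors at $-\infty$ requires $b\leq 1$. The domain restriction $\alpha\geq 0$ is not a consequence of the Fukasawa condition itself but of the positivity of the SVI smile discussed in \Cref{domain-of-svi-parameters}: at $\rho=-1$ the infimum of $N$ is $\alpha$, attained only in the limit $l\to+\infty$, so that $\alpha=0$ becomes admissible since it is no longer reached at any finite log-forward moneyness.

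For the second step, with $l^*=+\infty$ the condition ``$G_{1+}>0$ on $]l^*,+\infty[$'' is vacuous, so only $G_{1-}>0$ on $\mathbb{R}$ remains, which translates into $\mu>\sup_{l\in\mathbb{R}}L_-(l;\alpha,b,-1)$. The arguments of \Cref{PropLmgm} apply verbatim on $]-\infty,l^*[$, now equal to $\mathbb{R}$: $g_{-(b,-1)}$ is either monotonous or has a single minimum, the unique $l_-$ at which $L_-$ attains its supremum exists whenever $\alpha>-b\sqrt{1-\rho^2}=0$, and the sign analysis of \Cref{LemmaaPositive} shows that for $\alpha\geq 0$ one has $L_-(l)<0$ throughout, so $[0,+\infty[\,\subset I_{\alpha,b,-1}$ and the interval is non-empty even at $\alpha=0$. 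This justifies the extension $F(b,-1):=0$ and yields the first bullet of the theorem.

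For the last two steps, I revisit the Zeros of $G_2$ Lemma with the asymptotics at $\rho=-1$: at $+\infty$ the leading term of $G_2$ vanishes (since $\rho+1=0$) and a sharper expansion is needed to confirm $G_2(+\infty)=0^+$, while at $-\infty$ one has $G_2(l)\sim b/l\to 0^-$. Combined with $G_2(0)>0$ exactly as in the main case, a topological argument forces an odd number of zeros, and the local obstructions at $0$ coming from the signs of $N'''$ and $G_2''$ rule out more than one, leaving a single negative zero $l_1$. The $\sigma^*$ step is then identical to that of \cref{the-final-condition-on-sigma}, restricted to the single unbounded interval $]-\infty,l_1[$ on which $G_2<0$: strict monotonicity in $\sigma$ of $G=G_1+G_2/(2\sigma)$ there gives the threshold $\sigma\geq\sup_{l<l_1}(-G_2(l)/(2G_1(l)))$. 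The main obstacle is the handling of the degenerate asymptotics when $\alpha=0$, where $N(l)\to 0$ as $l\to+\infty$ makes ratios like $N'/N$ singular and requires care in the sign analysis of $L_-$, and at the boundary $b=1$, where $G_{1-}(-\infty)=0$ demands a more delicate argument than pointwise positivity of asymptotic limits to ensure $G$ does not dip below $0$ near $-\infty$.
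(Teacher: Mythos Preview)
Your proposal follows essentially the same route as the paper: the paper's treatment in \cref{the-monotonous-case-rho-pm-1} is precisely the adaptation of \Cref{FinalTheo} with the convention $l^*=+\infty$, the observation that only the $L_-$ lower bound survives so that $I_{\alpha,b,-1}=\,]L_-(l_-;\alpha,b,-1),+\infty[$, and the claim that $G_2$ now has a single negative zero with $G_2(+\infty)=0^+$, from which the $\sigma^*$ formula follows exactly as before.

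One point of phrasing deserves correction. You write that ``the condition $G_{1+}>0$ on $]l^*,+\infty[$ is vacuous, so only $G_{1-}>0$ on $\mathbb{R}$ remains''. This misstates the logic: both factors $G_{1+}$ and $G_{1-}$ must be positive on all of $\mathbb{R}$, not merely on one side of $l^*$. What actually happens is that, since $N'<0$ everywhere when $\rho=-1$, the rearrangement $G_{1\pm}(l)>0\iff \mu>L_\pm(l)$ holds on the whole line, and because $L_+<L_-$ pointwise the constraint $\mu>\sup_{\mathbb{R}}L_-$ implies $\mu>\sup_{\mathbb{R}}L_+$ automatically. So the binding bound is indeed $\sup L_-$, but not because $G_{1+}$ is exempt from the positivity requirement; rather, the \emph{upper} endpoint $\inf_{l>l^*}L_+$ of $I_{\alpha,b,\rho}$ degenerates to $+\infty$ since the domain $]l^*,+\infty[$ is empty. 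Your conclusion is correct; only the justification needs tightening. Also note that the case $\alpha=0$ is admissible (not just $\alpha>0$): the paper shows in \cref{application-svi-decreasing-to-zero} that $l_-$ exists and $L_-(l_-;0,b,-1)=-\sqrt{3(1-b)}$ explicitly, so your remark that $l_-$ ``exists whenever $\alpha>0$'' is too restrictive.
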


	\subsubsection{Application: SVI decreasing to
zero}\label{application-svi-decreasing-to-zero}

	Let us consider the case $\rho=-1$ and $a=0$, so SVI is given by the formula
$SVI(k) = b(-(k-\mu \sigma)+\sqrt{(k-\mu \sigma)^2+\sigma^2})$
with $b\leq 1$.

	Can we compute the lower bound for $\mu$? Consider the equation
$g_{-(b, -1)}(l)=0$ or equivalently from \cref{eqgmexplicit},
$(-\sqrt{l^2+1} + l)\bigr(\sqrt{l^2+1}\bigl(\frac{1}{2}-\frac{b}{4}\bigr) + \frac{bl}{4} \bigr)+1 = 0$.
Simplifying, we obtain $2l(1-b)\sqrt{l^2+1}=2(1-b)l^2-(b+2)$ and
squaring we find the two solutions $l=\pm\frac{b+2}{2\sqrt{3(1-b)}}$
when $b< 1$. The positive one does not solve the initial equation, so
with the notations used in \cref{gux5f1-and-the-fukasawa-necessary-condition-for-no-butterfly-arbitrage}, we finally find
$s_-=-\frac{b+2}{2\sqrt{3(1-b)}}$. If $b=1$, then $s_-=+\infty$. Note
that $s_-$ corresponds to $l_-$ when $\alpha=0$, and we get that
$L_-(l_-(0,b,-1);0,b,-1)=-\sqrt{3(1-b)}$.

So for $\alpha=0$:
\begin{itemize}
\itemsep1pt\parskip0pt\parsep0pt
\item
  the Fukasawa conditions are satisfied if and only if
  $\mu > -\sqrt{3(1-b)}$;
\item
  the unique zero of $G_2$ does not depend on $b$ and is given by
  $l_1=-\frac{1}{\sqrt{3}}$, and the parameters with no arbitrage are
  eventually given by $b\leq 1$, $\mu > -\sqrt{3(1-b)}$,
  $\sigma\geq \sigma^*(0,b,-1,\mu) = \sup_{l< -1/\sqrt{3}}-\frac{G_2(l)}{2G_1(l)}$.
\end{itemize}

\section{Calibration experiments}\label{calibration-experiments}

	Now that we have parametrized the no arbitrage domain, the design of a
calibration algorithm is straightforward:
\begin{enumerate}
\def\labelenumi{\arabic{enumi}.}
\itemsep1pt\parskip0pt\parsep0pt
\item
  choose an objective function;
\item
  choose a starting point policy;
\item
  for the chosen starting points (possibly several of them), run a
  minimization algorithm of the objective function over the no arbitrage
  domain;
\item
  pick up the optimal parameters.
\end{enumerate}

	As objective function, we choose the classical least squares criterion,
which takes as input the differences of the data and model total
variances on the available set of log-forward moneyness. This will
give equal weights to far-from the money points, where the precise value
of the implied volatility, and so the accuracy of the calibration,
matters less, and to close-to the money ones, which is not a desirable
feature: it can be easily patched by adding weights given by the Vegas
(computed once for all with the data points), so that the errors are
more in line with losses, unit-wise. This would moreover stabilize the
calibration from one day to another one, especially on illiquid markets,
as discussed in detail in \cite{nagy2019volatility}.

Now the big question for us is rather whether or not the no arbitrage
constraint will deteriorate the quality of the fit, and we will also
work on model generated data or on index options data which are liquid
ones, whence our choice of a standard non-weighted objective function.

	Regarding the starting point policy, we are not big fans of \emph{smart
guess strategies} which try to compute the best starting point from the
data. Such strategies can work brilliantly in many favorable situations,
yet they might fail heavily on data with low quality (e.g.~due to a
dubious treatment by an internal department), or when faced with new
market behavior and configurations. There is a clear risk of
over-engineering here also. We would be more confident by using a
\emph{set} (with small cardinality) of starting points, possibly
produced by a machine learning algorithm duly trained on the markets in
scope. We implement a very basic version of this idea, which picks up
uniformly generated points within the hyperrectangle of the no arbitrage
domain, irrespective of the data.

	The scipy function here used is the `least\_squares' which lies in the
optimize library. The method used is the `dogbox', which handles bounds.
The tolerances regarding the change of the cost function (`ftol'), the
change of the independent variables (`xtol') and the norm of the
gradient (`gtol') are all set at the Python numpy machine epsilon. The
maximum number of function evaluations (`max\_nfev') is set at $1000$.

	Even though the arbitrage region does not impose an upper bound for $\alpha$
and $\sigma$, we choose arbitrary ones. In particular, we ask
\[\sigma\leq\max{\biggl(\frac{|k_0|}{r},\frac{|k_N|}{r}, 1.5\sigma^*\biggr)}\]
with $r$ as parameter to be chosen by the user (default value equal to
$0.1$). This bound is related to the fact that when
$\frac{|k_i|}{\sigma}$ is below a threshold $r$, then the smile is
almost flat and this causes uncertainty on the parameters to be chosen.

The upper bound for $\alpha$ is left to be chosen by the user. For the index
option data we set $\alpha< 1$ since it is enough to achieve a very good
fit, while for the model generated data, in order to have an almost
perfect calibration, the upper bound actually depends on the $\alpha$
parameter used to generate data. We set in every case $\alpha< 3$, since we
know a priori that all the data are generated with $\alpha$ lower than $3$.

	We provide below our calibration results on model generated data and
then on market data.

	\subsection{On model data}\label{on-model-data}

	To check the robustness of the algorithm we firstly run it on data
generated by arbitrary SVI parameters with no arbitrage, and on the Axel
Vogt parameters. We take a vector of 13 log-forward strikes taken from
Table 3.2 of \cite{ferhati2020robust}.
    
	The parameters chosen for each of the graphs in \Cref{FigNoArbSVI_216_0} are
arbitrage-free. The red and the blue lines, which represent the total
variances generated from the arbitrary parameters and the total
variances obtained from the calibrated parameters respectively, overlap.

\begin{figure}[H]
	\centering
	\includegraphics[width=.9\textwidth]{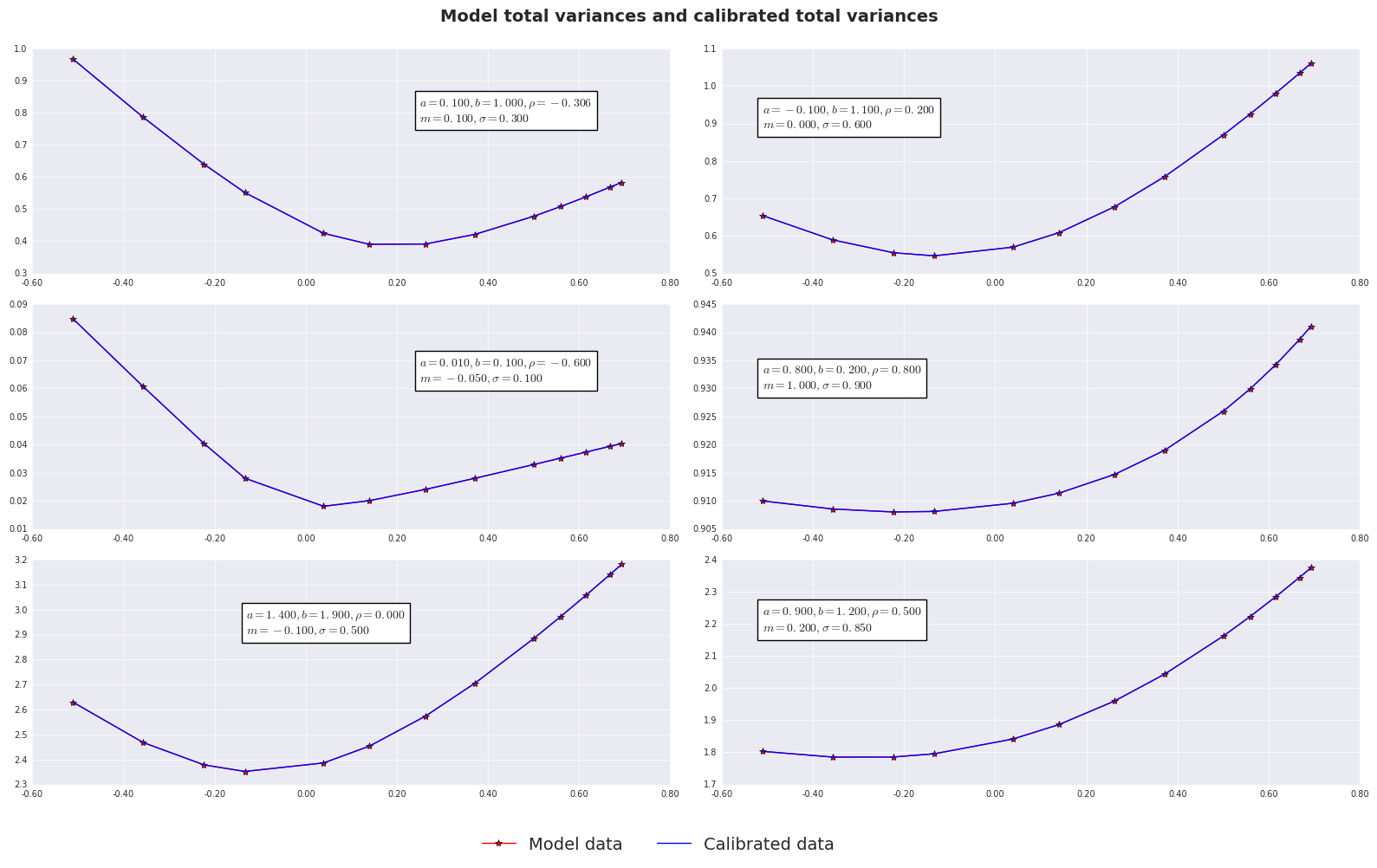}
	\caption{Model total variances generated by arbitrage-free parameters (in red) and calibrated total variances (in blue).}
	\label{FigNoArbSVI_216_0}
\end{figure}

The fact that the fit is excellent can be seen by the Frobenius relative
errors in \Cref{tableFrobeniusTotVar}.

\begin{table}[H]
	\caption{Frobenius relative errors for the total variances with arbitrage-free parameters calibrated on model total variances.}
	\label{tableFrobeniusTotVar}
\begin{center}
 \begin{tabular}{lrrrrrr}
\toprule
{} &   $a$ &  $b$ &  $\rho$ &   $m$ &  $\sigma$ &  Relative error $(\times 10^{-16})$ \\
\midrule
0 &  0.10 &  1.0 &  -0.306 &  0.10 &      0.30 &                                1.48 \\
1 & -0.10 &  1.1 &   0.200 &  0.00 &      0.60 &                                1.63 \\
2 &  0.01 &  0.1 &  -0.600 & -0.05 &      0.10 &                                2.30 \\
3 &  0.80 &  0.2 &   0.800 &  1.00 &      0.90 &                                1.77 \\
4 &  1.40 &  1.9 &   0.000 & -0.10 &      0.50 &                                2.35 \\
5 &  0.90 &  1.2 &   0.500 &  0.20 &      0.85 &                                2.25 \\
\bottomrule
\end{tabular}
\end{center}
\end{table}

	Furthermore, also the Frobenius relative error on the parameters is low (\Cref{tableFrobeniusParameters}).
This means that the algorithm is robust and recovers the original data.

\begin{table}[H]
	\caption{Frobenius relative errors for the parameters calibrated on model total variances.}
	\label{tableFrobeniusParameters}
\begin{center}
 \begin{tabular}{lrrrrrr}
\toprule
{} &   $a$ &  $b$ &  $\rho$ &   $m$ &  $\sigma$ &  Relative error $(\times 10^{-14})$ \\
\midrule
0 &  0.10 &  1.0 &  -0.306 &  0.10 &      0.30 &                                0.10 \\
1 & -0.10 &  1.1 &   0.200 &  0.00 &      0.60 &                                0.30 \\
2 &  0.01 &  0.1 &  -0.600 & -0.05 &      0.10 &                                0.06 \\
3 &  0.80 &  0.2 &   0.800 &  1.00 &      0.90 &                               20.00 \\
4 &  1.40 &  1.9 &   0.000 & -0.10 &      0.50 &                                0.10 \\
5 &  0.90 &  1.2 &   0.500 &  0.20 &      0.85 &                                3.00 \\
\bottomrule
\end{tabular}
\end{center}
\end{table}

	\subsubsection{Axel Vogt parameters}\label{axel-vogt-parameters}

	For a matter of completeness we run our algorithm on the notorious Axel
Vogt parameters, which lead to an arbitrage SVI. The original and the
calibrated parameters are reported in \Cref{tableAxelVogtParameters} while the graphs of the original and arbitrage-free total variances are shown in \Cref{FigNoArbSVI_229_0}.
    
\begin{table}[H]
	\caption{Axel Vogt parameters vs best fitting no arbitrage.}
	\label{tableAxelVogtParameters}
\begin{center}
 \begin{tabular}{llllll}
\toprule
{} &        $a$ &       $b$ &    $\rho$ &       $m$ &  $\sigma$ \\
\midrule
Original   &     -0.041 &    0.1331 &     0.306 &    0.3586 &    0.4153 \\
Calibrated & -0.0198444 &  0.102745 &  0.180754 &  0.266125 &  0.310459 \\
\bottomrule
\end{tabular}
\end{center}
\end{table}

\begin{figure}[H]
	\centering
	\includegraphics[width=.8\textwidth]{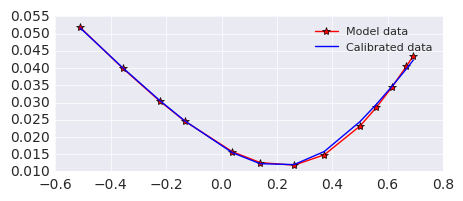}
	\caption{Total variances generated by the Axel Vogt parameters (in red) and total variances with arbitrage-free parameters (in blue).}
	\label{FigNoArbSVI_229_0}
\end{figure}

	Of course, the calibration is not perfect as in the previous case and
the Frobenius error between the Axel Vogt total variances and the non
arbitrage SVI corresponding total variances is $2.15\%$.

	We compare the function $g$ defined in \cref{eqgSVI} with the
original Axel Vogt parameters and the same function with the new
arbitrage-free parameters in \Cref{FigNoArbSVI_234_0}.

\begin{figure}[H]
	\centering
	\includegraphics[width=.8\textwidth]{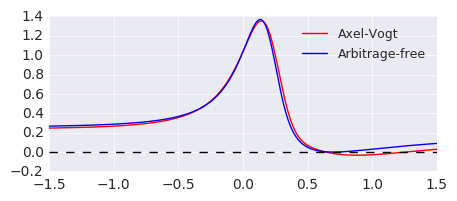}
	\caption{Plot of the functions $g$ with the the Axel Vogt parameters (in red) and with the arbitrage-free parameters (in blue).}
	\label{FigNoArbSVI_234_0}
\end{figure}
    
	From the plot it can be seen that the function $g$ with the new
arbitrage-free parameters can be very close to zero, but it is always
positive.

	In the following study, we compare the results obtained with the new
arbitrage-free parameters and the ones with the parameters described in
Example 5.1 of \cite{gatheral2014arbitrage}, which are also arbitrage
free. \Cref{FigNoArbSVI_242_0} shows that the fit of our new parameters is
better than the one of Gatheral and Jacquier.

\begin{figure}[H]
	\centering
	\includegraphics[width=.9\textwidth]{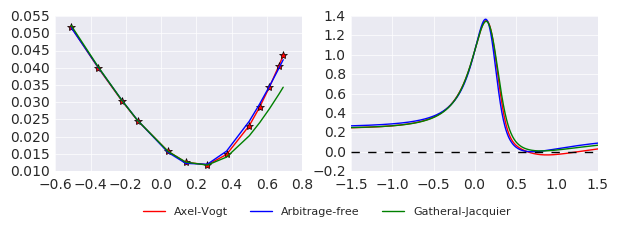}
	\caption{On the left, plot of the total variances generated by the Axel Vogt parameters (in red), the total variances with the arbitrage-free parameters (in blue) and the total variances with the Gatheral-Jacquier parameters (in green). On the right, plot of the function $g$ with the the Axel Vogt parameters (in red), with the arbitrage-free parameters (in blue) and with the Gatheral-Jacquier parameters (in green).}
	\label{FigNoArbSVI_242_0}
\end{figure}

    In \Cref{tableGatheralParameters} we compare the relative errors on the total
variances for the two sets of arbitrage-free parameters.

\begin{table}[H]
	\caption{Frobenius relative errors for the total variances with arbitrage-free parameters vs Gatheral-Jacquier parameters.}
	\label{tableGatheralParameters}
\begin{center} 
 \begin{tabular}{lllllll}
\toprule
{} &        $a$ &       $b$ &    $\rho$ &       $m$ &  $\sigma$ & Relative error \\
\midrule
Arbitrage-Free    & -0.0198444 &  0.102745 &  0.180754 &  0.266125 &  0.310459 &          0.022 \\
Gatheral-Jacquier & -0.0305199 &  0.102717 &  0.100718 &  0.272344 &  0.412398 &          0.133 \\
\bottomrule
\end{tabular}
\end{center}
\end{table}

	\subsection{On data from CBOE}\label{on-data-from-cboe}

	We now turn to market data. We work with market data of good quality
bought from the CBOE datastore by Zeliade. They cover daily files for
the DJX, SPX500 and NDX equity indices, with bid and ask prices.

	To obtain implied total variances from the prices, we operate the
classical treatment of inferring the discount factor and forward values
at each option maturity by performing a linear regression of the (mid)
Call minus Put prices with respect to the strike. Since the markets
under study are very liquid, the fit is excellent and the residual error is
extremely small.

Then, given the discount factor and forward values for each maturity,
we are able (after working out the exact maturity of each contract from
its code, if not provided explicitly) to compute the implied
volatilities, for the Bid and Ask prices.

We feed the objective function with the implied volatility corresponding
to the mid price, and plot below the implied volatilities for the
calibrated model and the bid and ask market data. Results are reported in \Cref{FigNoArbSVI_255_1,FigNoArbSVI_258_1,FigNoArbSVI_261_1}.

\begin{figure}[H]
	\centering
	\includegraphics[width=.9\textwidth]{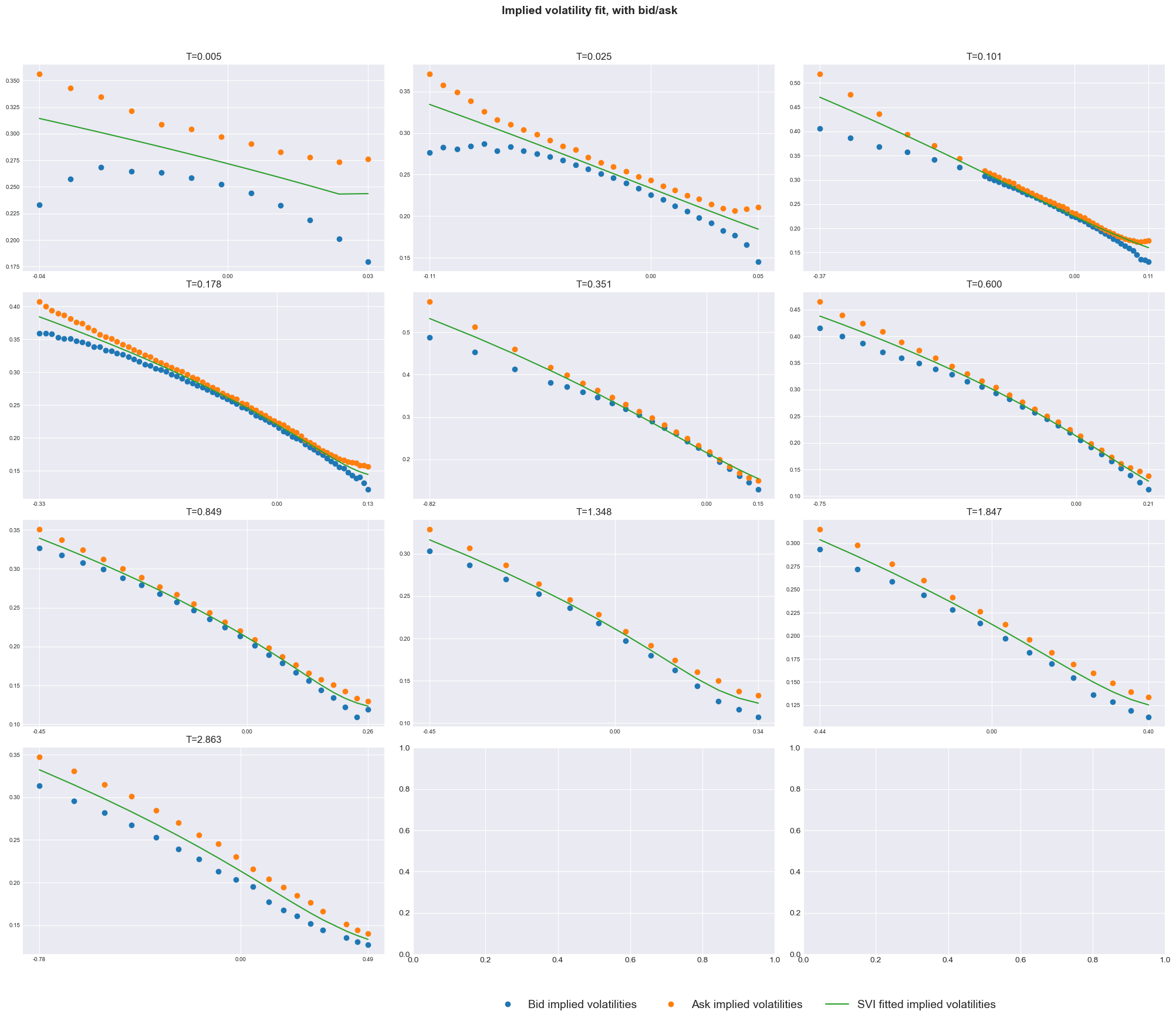}
	\caption{Fitted implied volatilities with arbitrage-free parameters (in red) and bid (in blue) and ask (in green) implied volatilities for the DJX index.}
	\label{FigNoArbSVI_255_1}
\end{figure}

\begin{figure}[H]
	\centering
	\includegraphics[width=.9\textwidth]{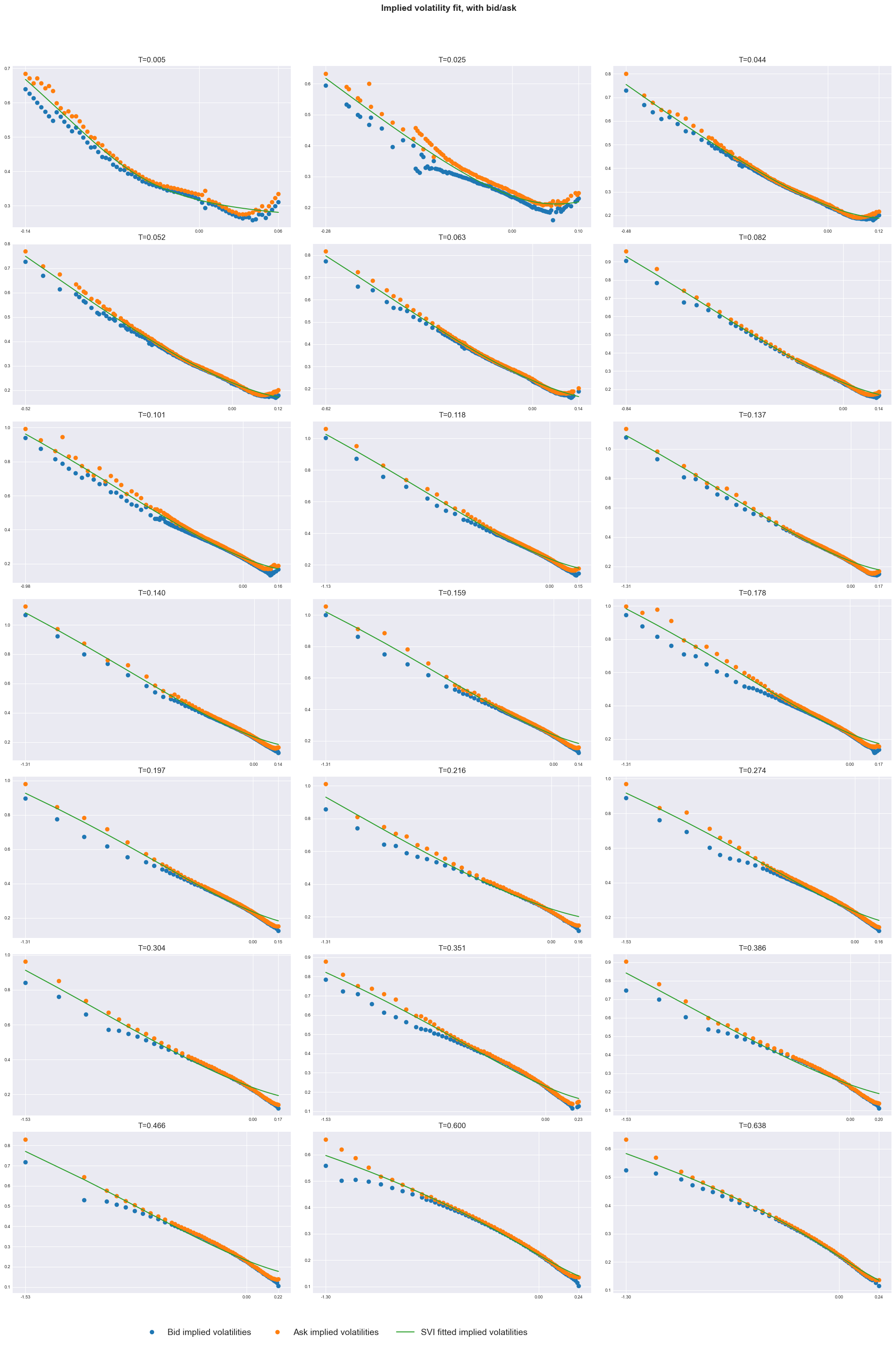}
	\caption{Fitted implied volatilities with arbitrage-free parameters (in red) and bid (in blue) and ask (in green) implied volatilities for the SPX500 index.}
	\label{FigNoArbSVI_258_1}
\end{figure}

\begin{figure}[H]
	\centering
	\includegraphics[width=.9\textwidth]{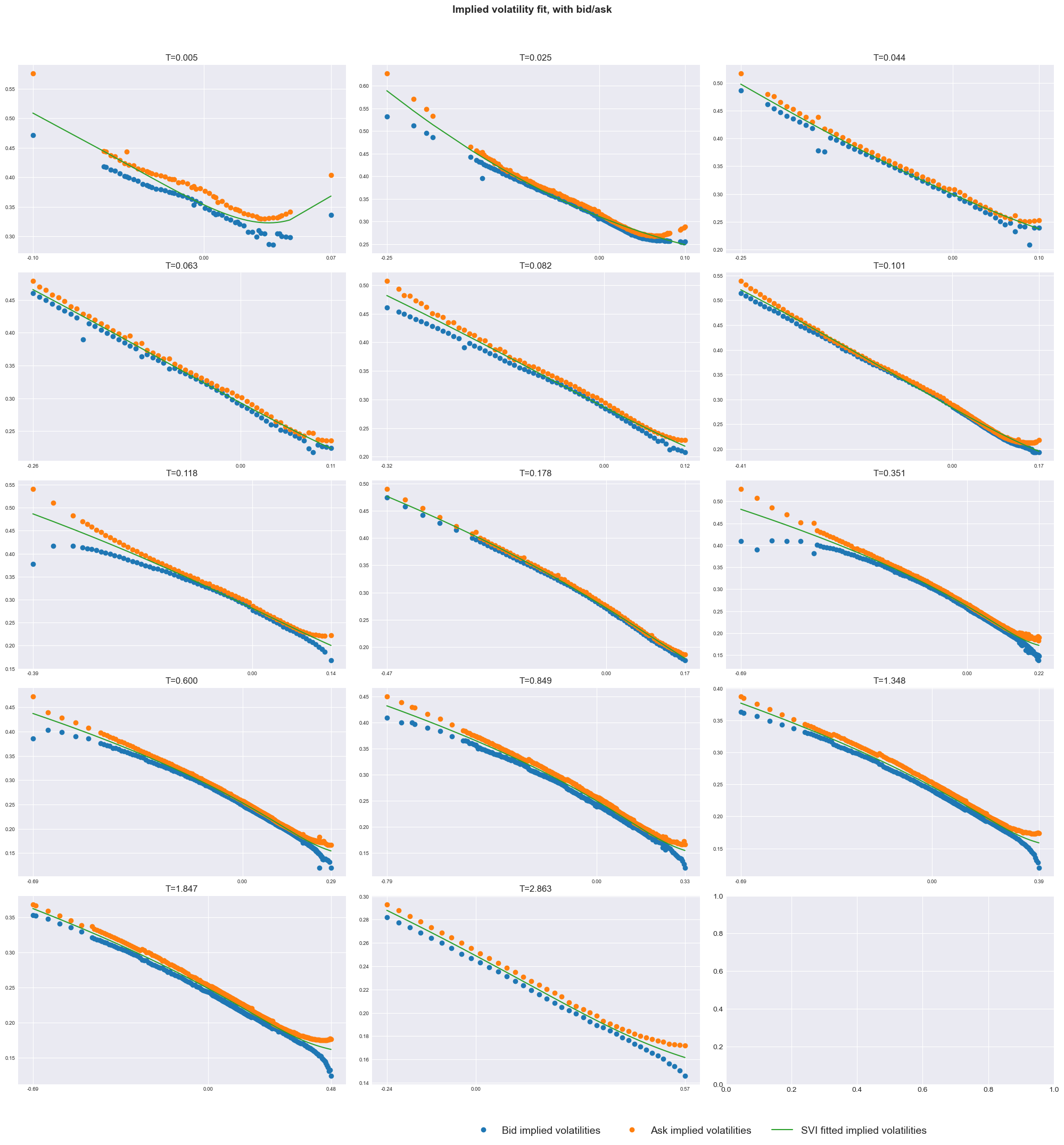}
	\caption{Fitted implied volatilities with arbitrage-free parameters (in red) and bid (in blue) and ask (in green) implied volatilities for the NDX index.}
	\label{FigNoArbSVI_261_1}
\end{figure}

	\subsection{Discussion}\label{discussion}

	From our experiments we draw several positive conclusions:
\begin{itemize}
\itemsep1pt\parskip0pt\parsep0pt
\item
  the quality fit is excellent, and there is no big loss resulting from
  the no arbitrage constraint;
\item
  the implementation we have designed seems sufficiently robust in
  practice; of course such a statement should be re-assessed
  continuously;
\item
  the payload of the root finding algorithms used to compute the
  Fukasawa threshold and the bounds for $\mu$ and $\sigma$ is not an
  issue, the calibration is still reasonably fast on a basic chip; the
  average for each maturity for the DJX data is $51.598$ seconds, for
  the SPX data $36.490$ seconds and for the NDX data $44.900$ seconds.
\end{itemize}

Of course, there is room for improvement, at least at the level of the
starting point strategy. One could also think of pre-computing the
numerical functions computed on the fly, or to design once for all
explicit proxies for them, which would speed massively the execution of
the algorithm.

 \section{Conclusion}\label{conclusion}

	Fukasawa's remark that the inverse of $d_1$ and $d_2$ functions of the
Black-Scholes formula have to be increasing under no Butterfly
arbitrage, paired with the natural rescaling of the SVI parameters which
consists in scaling $a$ and $m$ by $\sigma$, allow us to fully describe
the domain of no Butterfly arbitrage for SVI.

The no Butterfly arbitrage domain can be parametrized as an hyperrectangle, with 2 downstream algorithms of practical importance: one for
checking that a SVI parameter lies or not in the no arbitrage domain,
and the other one to effectively perform a calibration. Three functions
have to be computed numerically by resorting to root-finding type
algorithms; due to the fact that our careful mathematical analysis
provided safe bracketing intervals for those functions, this can be
achieved in a very quick manner. We provide calibration results on model
and market data, the latter showing that there is no loss of fit quality
due to imposing the no arbitrage constraint.

This analysis settles one important issue in the SVI saga. Other ones
are still pending, like the study of sub-SVI parametrizations with 4
parameters instead of 5, in the spirit of SSVI (which has 3 parameters
slice-wise), which could display more parameter stability than SVI and a better fit
quality than SSVI; and also the question of the characterization of no
Calendar Spread arbitrage for two SVI slices corresponding to different
maturities.

\newpage

\appendix\section[Proof of Proposition 5.5]{Proof of \Cref{PropLmgm}}\label{proof-of-proposition}

	\begin{proof}
Observe that at the point $l^*$, $\rho\sqrt{l^2+1} + l=0$ and also after computations, $\rho l+\sqrt{l^2+1} = \sqrt{1-\rho^2}$, so we have $g_{\pm(b, \rho)}(l^*)=-\sqrt{1-\rho^2}$.

We have $\frac{d}{d\alpha}L_\pm(l_\pm) = L_\pm'(l_\pm)\frac{d}{d\alpha}l_\pm + \partial_{\alpha}L_\pm(l_\pm) = \partial_{\alpha}L_\pm(l_\pm)$. Deriving \cref{eqLmp} with respect to $\alpha$, we find $\partial_{\alpha}L_\pm(l_\pm) = 2\Bigl(\frac{1}{N'(l_\pm)}\mp\frac{1}{4}\Bigr)$.

Since $N'(l)> 0$ iff $l> l^*$ and $4\mp N'>0$, then $\partial_{\alpha}L_-(l_-)< 0$ and $\partial_{\alpha}L_+(l_+)>0$. So the function $\alpha\to L_-(l_-,\alpha)$ is decreasing while $\alpha\to L_+(l_+,\alpha)$ is increasing. It means that the bounds for $\mu$ are an increasing family of sets (possibly empty) parametrized by $\alpha$. 
Consider the lower bound, so $l< l^*$. We can write the expression for $g_{-(b, \rho)}$ in another way. We have
$$L'_-(l) = 1 + \frac{N'(l)}{2} - \frac{2N''(l)}{N'(l)^2}(\alpha + lN'(l) + N''(l)(l^2+1)).$$
Evaluating this in $l_-$, the LHS becomes $0$ and we can isolate $\alpha$, obtaining
\begin{equation}\label{eqgmimplicit}
g_{-(b, \rho)}(l) = \frac{1}{b}\biggl(\frac{N'(l)^2}{2N''(l)}\biggl(1+\frac{N'(l)}{2}\biggr) - lN'(l) - N''(l)(l^2+1)\biggr).
\end{equation}

From this expression, we get the derivative of $g_{-(b, \rho)}$ such as $g'_{-(b, \rho)}(l) = \frac{N'(l)^2}{4b}\bigl(3 - \frac{N'''(l)}{N''(l)^2}(N'(l) + 2)\bigr)$, which is positive iff the second factor is positive. Substituting with the explicit expressions, we find that this holds iff $\frac{3}{b}\sqrt{l^2+1}\bigl(l(2+b\rho) + b\sqrt{l^2+1}\bigr) > 0$ or equivalently $-l(2+b\rho) < b\sqrt{l^2+1}$.

Note that since $b(1-\rho)\leq2$, then $2+b\rho>0$. If $\rho< 0$ then the equation is true for $l\geq 0$. For negative $l$s we can square, obtaining that it holds iff $(b^2(\rho^2-1)+ 4b\rho + 4)l^2 < b^2$. For $b(1-\rho)<2$, the coefficient of $l^2$ is positive, so the inequality holds iff $l>-\frac{b}{\sqrt{b^2(\rho^2-1)+ 4b\rho + 4}} := m_-$. Since $\rho$ is negative, $m_- < l^*$. So in this case $g_{-(b, \rho)}(l)$ is increasing iff $l >m_-$.

If $\rho\geq 0$, we proceed in a similar way taking the square and obtaining that, if $b(1-\rho)<2$, the inequality holds iff $l>-\frac{b}{\sqrt{b^2(\rho^2-1)+ 4b\rho + 4}} := m_-$. If $b\leq\frac{2\rho}{1-\rho^2}$, then $m_-\geq l^*$ and $g_-$ is always decreasing. Otherwise if $b>\frac{2\rho}{1-\rho^2}$, then $m_-< l^*$ and $g_{-(b, \rho)}$ is increasing iff $l> m_-$.
We can write $\alpha$ as a function of $l_-$, indeed $\alpha = bg_{-(b, \rho)}(l_-)$. This function has the same monotonicity as $g_{-(b, \rho)}$.

We obtain from the previous analysis that the function $x\to L_-(x;bg_{-(b, \rho)}(x))$ is:

- increasing iff $x< m_-$ when $b>\frac{2\rho}{1-\rho^2}$;

- increasing for every $x< l^*$ when $b\leq\frac{2\rho}{1-\rho^2}$.

Note that $N(l) = \alpha + lN'(l) + N''(l)(l^2+1)$. Using \cref{eqLmp} and substituting $\alpha$ with $bg_{-(b, \rho)}(x)$ considered as in \cref{eqgmimplicit}, we obtain
\begin{equation}\label{eqLmbgm}
L_-(x;bg_{-(b, \rho)}(x)) = \frac{N'(x)^2}{N''(x)}\biggl(1+\frac{N'(x)}{2}\biggr)\biggl(\frac{1}{N'(x)} + \frac{1}{4}\biggr) - x.
\end{equation}

From here it can be seen that $L_-(x;bg_{-(b, \rho)}(x))$ goes to $-l^*$ when $x$ goes to $l^{*-}$.
Similarly, we can do all the equivalent computations for $L_+$. First, the function $g_{+(b, \rho)}$ can be re-written as
\begin{equation*}
g_{+(b, \rho)}(l) = \frac{1}{b}\biggl(\frac{N'(l)^2}{2N''(l)}\biggl(1-\frac{N'(l)}{2}\biggr) - lN'(l) - N''(l)(l^2+1)\biggr)
\end{equation*}
while
$$L_+(x;bg_{+(b, \rho)}(x)) = \frac{N'(x)^2}{N''(x)}\biggl(1-\frac{N'(x)}{2}\biggr)\biggl(\frac{1}{N'(x)} - \frac{1}{4}\biggr) - x$$
and even in this case $L_+(x;bg_{+(b, \rho)}(x))$ goes to $-l^*$ when $x$ goes to $l^{*+}$.
We can study the monotonicity of $g_{+(b, \rho)}$, obtaining $g'_{+(b, \rho)}(l) = \frac{N'(l)^2}{4b}\bigl(-3 + \frac{N'''(l)}{N''(l)^2}(N'(l) - 2)\bigr)$.

Considering the second factor and substituting with the explicit expressions, the latter quantity is positive iff $-\frac{3}{b}\sqrt{l^2+1}\bigl(-l(2-b\rho) + b\sqrt{l^2+1}\bigr) > 0$ or equivalently $l(2-b\rho) > b\sqrt{l^2+1}$.

Here, since $b(1+\rho)\leq 2$, then $2-b\rho>0$. If $\rho> 0$ then the equation is false for $l\leq 0$. For positive $l$s we can square, obtaining that it holds iff $(b^2(\rho^2-1)- 4b\rho + 4)l^2 > b^2$. For $b(1+\rho)< 2$, the coefficient of $l^2$ is positive, so the inequality holds iff $l>\frac{b}{\sqrt{b^2(\rho^2-1)- 4b\rho + 4}} := m_+$. Since $\rho$ is positive, $m_+ > l^*$. So in this case $g_{+(b, \rho)}(l)$ is increasing iff $l >m_+$.

If $\rho\leq 0$, we proceed in a similar way taking the square and obtaining that, if $b(1+\rho)< 2$, the inequality holds iff $l>\frac{b}{\sqrt{b^2(\rho^2-1)- 4b\rho + 4}} := m_+$. If $b\leq-\frac{2\rho}{1-\rho^2}$, then $m_+\leq l^*$ and $g_{+(b, \rho)}$ is always increasing. Otherwise if $b>-\frac{2\rho}{1-\rho^2}$, then $m_+> l^*$ and $g_{+(b, \rho)}$ is increasing iff $l> m_+$.
Remember that the function $\alpha\to L_+(l_+,\alpha)$ is increasing. To recap, the function $x\to L_+(x;bg_{+(b, \rho)}(x))$ is:

- increasing iff $x> m_+$ when $b>-\frac{2\rho}{1-\rho^2}$;

- increasing for every $x> l^*$ when $b\leq-\frac{2\rho}{1-\rho^2}$.

If $b\leq-\frac{2\rho}{1-\rho^2}$ then $\rho< 0$ and $b>\frac{2\rho}{1-\rho^2}$ while if $b\leq\frac{2\rho}{1-\rho^2}$ then $\rho>0$ and $b>-\frac{2\rho}{1-\rho^2}$. This means that $L_+(x;bg_{+(b, \rho)}(x))$ and $L_-(x;bg_{-(b, \rho)}(x))$ cannot be both monotonous.

The last statement of the proposition is a direct consequence to the fact that \linebreak$\frac{d}{dx}L_\pm(x;bg_{\pm(b, \rho)}(x)) = \partial_{\alpha}L_\pm(x;bg_{\pm(b, \rho)}(x))bg'_{\pm(b, \rho)}(x)$ where $\partial_{\alpha}L_-(x)< 0$ and $\partial_{\alpha}L_+(x)>0$.
\end{proof}

\section{Computation of $F(b,0)$}\label{computation-of-fb0}

	In this appendix we compute $F(b,0)$ and prove that $F(b,0)>-b$.

	With $\rho = 0$ we have $l^*=0$ and
\begin{align*}
N &= \alpha + b\sqrt{l^2+1},\\
N' &= \frac{bl}{\sqrt{l^2+1}},\\
N'' &= \frac{b}{(l^2+1)^\frac{3}{2}}.
\end{align*}

	Consider the particular case $b=2$. Then we have already shown
$F(2,0)=0$, which is greater than $-2$.

	Consider $b\neq2$. Since $b>\frac{2\rho}{1-\rho^2}=0$, then the function
$l_-\to L_-(l_-;bg_{-(b,0)}(l_-))$ is increasing iff $l_-< m_-$ where
$m_-=-\frac{b}{\sqrt{4-b^2}}$. Furthermore the Fukasawa interval for $\mu$ is equal to
$I_{\alpha,b,0} = \bigl]L_-(l_-(\alpha,b,0);\alpha,b,0), -L_-(l_-(\alpha,b,0);\alpha,b,0)\bigr[$
so it is symmetrical with respect to $0$. The Fukasawa threshold
$F(b,0)$ is then the solution to $L_-(l_-(F(b,0),b,0);F(b,0),b,0)=0$.

	From equation \cref{eqLmbgm} we obtain
\[L_-(l_-;bg_{-(b,0)}(l_-)) = b\frac{l_-^2}{2}\Bigl(2\sqrt{l_-^2+1}+bl_-\Bigr)\Biggl(\frac{\sqrt{l_-^2+1}}{bl_-}+\frac{1}{4}\Biggr)-l_-.\]
For $l_-< 0$, this expression is equal to $0$ iff
$(8+b^2)l=-6b\sqrt{l^2+1}$ and so iff $l_-$ equals
$l_-^*:=-\frac{6b}{\sqrt{b^4-20b^2+64}}$. Then
\[F(b,0) = bg_{-(b,0)}\biggl(-\frac{6b}{\sqrt{b^4-20b^2+64}}\biggr)\]
where $g_{-(b,0)}(l) = \frac{l^2}{4}(2\sqrt{l^2+1}+bl) - \sqrt{l^2+1}$.

	We now need to prove $g_{-(b,0)}(l_-^*)>-1$ or equivalently
$l_-^*< s_-$. From the expression of $g_{-(b,0)}$, we immediately find
that $s_-$ satisfies $2(l^2-2)\sqrt{l^2+1}=-bl^3-4$, so we look for a
negative root such that $\frac{-bl^3-4}{l^2-2}>0$. This happens iff $l$
lies outside the interval
$\Bigl[\Bigl(-\frac{4}{b}\Bigr)^\frac{1}{3},-\sqrt{2}\Bigr]$ if
$b\leq\sqrt{2}$, or outside the interval
$\Bigl[-\sqrt{2},\Bigl(-\frac{4}{b}\Bigr)^\frac{1}{3}\Bigr]$ if
$b>\sqrt{2}$. Squaring the previous equation and simplifying by $l^3$ we
find $(4-b^2)l^3-12l-8b=0$. Call $P_b(l)$ the LHS.

At $0$, this polynomial and its derivative are negative. Its local
maximum is at $-\frac{2}{\sqrt{4-b^2}}$ and its value at this point is
$\frac{16}{\sqrt{4-b^2}}-8b$ which is always positive. So the polynomial
has two negative roots and a positive one.

We can observe that
$P_{\sqrt{2}}(-\sqrt{2})=P_{\sqrt{2}}\Bigl(\Bigl(-\frac{4}{b}\Bigr)^\frac{1}{3}\Bigl)=0$
with
\begin{itemize}
\itemsep1pt\parskip0pt\parsep0pt
\item
  $P_b(-\sqrt{2}) = 2\sqrt{2}(b-\sqrt{2})^2> 0$,
\item
  $P_b\Bigl(\Bigl(-\frac{4}{b}\Bigr)^\frac{1}{3}\Bigr) = -\frac{4}{b}\Bigl(b^2-3(2b)^{\frac{2}{3}}+4\Bigr)< 0$.
\end{itemize}

Then if $b<\sqrt{2}$ the root of interest $s_-$ is the second negative
root of the polynomial while if $b\geq\sqrt{2}$ it is the first negative
root.

	The value of the polynomial in $l^*_-$ is
\[- \frac{8 b ((b^{2} - 16)^2 - 36 \sqrt{b^{4} - 20 b^{2} + 64})}{(b^{2} - 16)^2}\]
which is positive iff $b< \tilde{b}$ where
$\sqrt{\frac{8}{5}} <\tilde{b} < \sqrt{2}$. The derivative of the
polynomial evaluated in $l^*_-$ is $\frac{24(5 b^{2}-8)}{16-b^{2}}$,
which is positive iff $b>\sqrt{\frac{8}{5}}$.

	Then:
\begin{itemize}
\itemsep1pt\parskip0pt\parsep0pt
\item
  if $b\leq \tilde{b}$ the polynomial is positive in $l^*_-$ and $s_-$
  is its second root, so $l^*_-< s_-$;
\item
  if $\tilde{b} < b < \sqrt{2}$ the polynomial is negative in $l^*_-$
  while its derivative is positive and $s_-$ is its second root, so
  $l^*_-< s_-$;
\item
  finally if $b \geq \sqrt{2}$ the polynomial is negative with a positive
  derivative in $l^*_-$ so even if $s_-$ is now its first root we have
  $l^*_-< s_-$.
\end{itemize}

\bibliographystyle{plain}
\bibliography{biblioNoArb}

    \end{document}